\newtheorem{theorem}{Theorem}[section]
\newtheorem{corollary}[theorem]{Corollary}
\newtheorem{lemma}[theorem]{Lemma}
\newtheorem{proposition}[theorem]{Proposition}
\newtheorem{fact}[theorem]{Fact}
\theoremstyle{definition}
\newtheorem{definition}[theorem]{Definition}
\newtheorem{remark}[theorem]{Remark}
\newenvironment{fminipage}%
{\begin{Sbox}\begin{minipage}}%
		{\end{minipage}\end{Sbox}\fbox{\TheSbox}}
\def\var#1{\mbox{\bf Var}\left[ #1 \right]}
\def\defeq{\stackrel{\mathrm{def}}{=}}
\def\abs#1{\left|#1  \right|}
\def\norm#1{\left\| #1 \right\|}
\renewcommand\Pr{\operatorname{Pr}}
\newcommand{\mD}{\mathcal{D}}
\newcommand{\bR}{\mathbb{R}}
\newcommand{\cD}{\mathcal{D}}
\newcommand{\cM}{\mathcal{M}}
\newcommand{\cR}{\mathcal{R}}
\DeclareMathOperator{\Lap}{Lap}
\newcommand{\fitting}{{\sc Preprocessing}}
\newcommand\eps{\epsilon}
\newcommand{\upper}{\textsc{Upper}}
\newcommand{\lowerbound}{\textsc{Lower}}
\newcommand{\newSens}{{\Delta}}
\newcommand{\hmu}{\hat{\mu}}
\newcommand{\R}{\mathbb{R}}
\newcommand{\sensitivity}{Sensitivity-Preprocessing Function}
\begin{document}
	
\title{Individual Sensitivity Preprocessing for Data Privacy}

\author{
Rachel Cummings\footnote{School of Industrial and Systems Engineering, Georgia Institute of Technology.  Email: \texttt{rachelc@gatech.edu}. Supported in part by a Mozilla Research Grant.}
\\
Georgia Tech
\and
David Durfee\footnote{School of Computer Science, Georgia Institute of Technology.  Email: \texttt{ddurfee@gatech.edu}. This material is based upon work supported by the National Science Foundation under Grants No. 1718533.}
\\
Georgia Tech
}

	\maketitle
	\begin{abstract}
		The sensitivity metric in differential privacy, which is informally defined as the largest marginal change in output between neighboring databases, is of substantial significance in determining the accuracy of private data analyses.  Techniques for improving accuracy when the average sensitivity is much smaller than the worst-case sensitivity have been developed within the differential privacy literature, including tools such as smooth sensitivity, Sample-and-Aggregate, Propose-Test-Release, and Lipschitz extensions.

In this work, we provide a new and general Sensitivity-Preprocessing framework for reducing sensitivity, where efficient application gives state-of-the-art accuracy for privately outputting the important statistical metrics median and mean when no underlying assumptions are made about the database. In particular, our framework compares favorably to smooth sensitivity for privately outputting median, in terms of both running time and accuracy. Furthermore, because our framework is a preprocessing step, it can also be complementary to smooth sensitivity and any other private mechanism, where applying both can achieve further gains in accuracy.

We additionally introduce a new notion of \textit{individual sensitivity} and show that it is an important metric in the variant definition of \textit{personalized differential privacy}. We show that our algorithm can extend to this context and serve as a useful tool for this variant definition and its applications in markets for privacy.

Given the effectiveness of our framework in these important statistical metrics, we further investigate its properties and show that:
(1) Our construction is conducive to efficient implementation with strong accuracy guarantees, evidenced by an $O(n)$ implementation for median (with presorted data), and $O(n^2)$ implementation for more complicated functions such as mean, $\alpha$-trimmed mean, and variance.
(2) Our construction is both NP-hard and also optimal in the general setting
(3) Our construction can be extended to higher dimensions, although it incurs accuracy loss that is linear in the dimension.

	\end{abstract}
	\thispagestyle{empty}
	\newpage
\pagenumbering{arabic}


\section{Introduction}\label{sec:intro}

Differentially private algorithms for data analysis guarantee that any individual entry in a database has only a bounded effect on the outcome of the analysis \cite{DMNS06}.  These algorithms ensure that the outcomes on any pair of neighboring databases---that differ in a single entry---are nearly indistinguishable.  This is typically achieved by perturbing the analysis or its output, using noise that scales with the magnitude of change in the analysis between neighboring databases.  This perturbation necessarily leads to decreased accuracy of the analysis.  A fundamental challenge in differentially private algorithm design is to simultaneously satisfy privacy guarantees and provide accurate analysis of the database.  Privacy alone can be achieved by outputting pure noise, but this fails to yield useful insights about the data. Intuitively, stronger privacy guarantees should yield weaker accuracy guarantees.  Quantifying this privacy-accuracy tradeoff has been one major contribution of the existing differential privacy literature.  In the last several years, accurate and differentially private algorithms have been designed for a diverse collection of data analysis tasks (see \cite{DR14} for a survey), and have been implemented in practice by major organizations such as Apple, Google, Uber, and the U.S. Census Bureau.  The formal guarantees of differential privacy give sharp contrast to ad hoc privacy measures such as anonymization and aggregation, which have both led to infamous privacy violations \cite{NS08, HSR+08}.


We formalize data analysis tasks as functions that map from the space of all databases to real-valued outputs.  The \emph{global sensitivity} of a function is the worst-case difference in the function's value between all pairs of neighboring databases.  Since differential privacy guarantees must hold for all pairs of neighboring databases, this is the scale of noise that must be added to preserve privacy.  Strong bounds on global sensitivity imply that the function is well-behaved over the entire data universe, and often allows for privacy-preserving output with strong accuracy guarantees.  However, this worst-case measure allows a single outlier database to significantly skew the accuracy of the privacy-preserving algorithm for all databases.  Although it is necessary to preserve the privacy of outlying databases, we would prefer to add less noise for improved accuracy guarantees when the average-case sensitivity is far smaller than the worst-case.
A variety of well-known techniques have been employed to address this problem including smooth sensitivity and Sample-and-Aggregate \cite{NRS07}, Propose-Test-Release \cite{DL09}, and Lipschitz extensions \cite{KKRS13,BBDO13,RS16}.




Initial work in this space considered a database-specific definition of sensitivity, known as \emph{local sensitivity}, which is the maximum change in the function's value between a given database and its neighbors \cite{NRS07,DL09}.  Ideally, we would like to add noise that scales with the local sensitivity of each database.  This would allow us to add less noise to well-behaved regions of the database universe, and only the outliers would require substantial noise.  Unfortunately this procedure does not satisfy differential privacy because the amount of noise added to a given database may be highly disclosive.  To avoid this information leakage, \cite{NRS07} defined an intermediate notion of \emph{smooth sensitivity}, which smoothed the amount of noise added across databases to preserve differential privacy once again. This technique was also combined with random subset sampling to give an efficient and private procedure, Sample-and-Aggregate, with strong error guarantees when each database was well-approximated by a random subset of its entries \cite{NRS07}.

Later work considered partitioning the database universe into well-behaved and outlying databases. 
Propose-Test-Release defined this partition with respect to local sensitivity and gave accurate outputs only on databases that were sufficiently far from outliers \cite{DL09}.
Propose-Test-Release avoided some of the information leakage issues by outputting $\textsc{Null}$ for any outlying database, and gave efficient implementations for a variety of important functions.  The Lipschitz extension framework instead partitioned according to global sensitivity, by identifying a subset of the data universe where the given function had small global sensitivity.  On this subset, the function of interest is simply a Lipschitz function with the constant defined as the small global sensitivity \cite{KKRS13,BBDO13,RS16}.  Extending the Lipschitz function to the remaining data universe achieves a function with small global sensitivity that is identical to the original function on the well-behaved databases.
Applying any differential privacy algorithm to this Lipschitz function will allow for the use of a much smaller global sensitivity input and will achieve high accuracy on the well-behaved databases.


 


In this work, we introduce a Sensitivity-Preprocessing framework that will similarly approximate a given function with a sensitivity bounded function, which we call the Sensitivity-Preprocessing Function. 
Our Sensitivity-Preprocessing Function will take advantage of the specific metric space structure of the data universe to give a more constructive approach.
At a high-level, while Lipschitz extensions are initialized with a well-behaved subset of the data universe, our algorithm will find this well-behaved subset as it constructs the \sensitivity.
As a result, our procedure will be much more localized and can always give an exponential-time construction even in the most general setting, whereas Lipschitz extensions can often be uncomputable.
In addition, similar to smooth sensitivity, we achieve optimality and NP-hardness guarantees for accuracy in this generalized setting under several reasonable metrics of optimality.

Furthermore, our \sensitivity~only requires a simple recursive construction that is more conducive to efficient implementation, which we achieve for important statistical functions such as median, mean, and variance. 
These functions have been of particular interest for similar techniques because they are highly important statistics and also have large worst-case sensitivity, but small average sensitivity.
We will compare our results to previous results in the following section, where our framework gives state-of-the-art accuracy for median and mean when no underlying assumptions are made to the database.
The key assumption that we would aim to avoid is that data points are drawn iid, which is a popular assumption in previous results for outputting functions such as $\alpha$-trimmed mean (i.e., Propose-Test-Release \cite{DL09}).
While this assumption is quite standard, we contend that real world data are often not iid, and so consideration of the more general case is still an important problem.
Comparing our framework to those that apply the iid assumptions (and sometimes further assumptions, such as being drawn from a Gaussian \cite{KLSU18}), we will also achieve similarly high accuracy for well concentrated databases, but concede that mechanisms specifically catered for that setting will often be superior.
However, we note that because our framework is a preprocessing routine, it can be run before applying any differentially private mechanism such as Propose-Test-Release to achieve further gains in accuracy.
In avoiding this iid assumption, the primary technique we will then compare our framework with will be smooth sensitivity which is popularly used for privately outputting median.
Both frameworks have database-specific accuracy and direct comparison will be difficult, but we give strong evidence in the next section of why our framework compares favorably to smooth sensitivity for median. 

The localized construction of our \sensitivity~will also allow us to tailor the new sensitivity parameters beyond previous techniques.
To this end, we introduce a more refined sensitivity metric, which we call \emph{individual sensitivity}, and show that it is important for a variant definition of \emph{personalized differential privacy} introduced in \cite{GR13}, and used in subsequent works on market design for private data \cite{CCKMV13, CLR+15}.  We can apply our construction as a preprocessing step for more refined sensitivity tailoring to take advantage of personalized differential privacy guarantees.  We believe this application of our results may be of independent interest for future work in these directions.


In this work we cover a broad range of the more immediate results from this new framework, but believe that there is still a substantial amount of work in this direction.
While some of our proofs will become involved, all of our results follow from first principles, suggesting the potential for further results using more sophisticated tools within this framework.
These further results include:
efficient implementations of more difficult functions such as linear regression; optimizing the trade-off between decreasing the sensitivity parameter and the error incurred by our Sensitivity-Preprocessing for specific functions; applying our algorithm in the markets for privacy literature; and variants of our algorithm that are optimized for specific computational settings or application domains.

\subsection{Our Results}

Our results will primarily revolve around the \textit{Sensitivity-Preprocessing Function}, which we introduce below.  It is an alternate schema for fitting a general function to a sensitivity-bounded function in the context of differential privacy.
More specifically, we consider the general problem of taking any function $f: \mD \to \R$ and constructing a new function $g: \mD \to \R$ that satisfies given sensitivity parameters and minimizes the difference $\abs{f(D) - g(D)}$ over all databases $D \in \mD$.  
The sensitivity parameters we consider will be more refined and we define \emph{individual sensitivity}, which is the maximum change in a function's value from adding or removing a single specific data entry.  We use $\Delta_i$ to denote individual sensitivity to the $i$-th data entry.  
When a database is comprised of data from multiple individuals, $\Delta_i$ captures the sensitivity of the function to person $i$'s data.


In this section, we first give an overview of our recursively constructed \sensitivity~that works in a highly generalized setting, along with the corresponding runtime and error guarantees.
We then examine the optimality and hardness of this general function in the context of minimizing $\abs{f(D) - g(D)}$ over all databases $D \in \mD$.
While constructing our \sensitivity~will require exponential time in general, we show that it can be simply and efficiently implemented in $O(n)$ time for median (with presorted data), and in $O(n^2)$ time for several other important statistical measures including mean and variance.  We show that our \sensitivity~tailors an important metric (individual sensitivity) in the variant definition of \textit{personalized differential privacy}, which provides different privacy guarantees to different individuals in the same database, and is a useful tool in the design of markets for privacy.  We further generalize our construction of \sensitivity~to bound the $\ell_1$ sensitivity of 2-dimensional functions $f: \mD \to \R^2$, and show that such techniques cannot be extended to higher dimensions without treating each dimension independently.

\subsubsection*{Sensitivity-preprocessing function overview}



Our construction of the \sensitivity~is similar to the Lipschitz extension framework, but we extend only from the empty set.  We start with $f(\emptyset) = g(\emptyset)$, and inductively construct $g$ for larger databases while trying to achieve two desiderata: (1) maintain the appropriate individual sensitivity bounds; and (2) keep $g$ as close as possible to $f$. 
The first objective will be strictly maintained, and we will optimize over the second objective.

The primary difficulty in this construction is that we often consider $\mD$ to be infinite.
As a result, checking to make sure we do not violate any sensitivity constraints when defining $g$ on a new database can require checking all databases on which $g$ was previously defined.
For example, general Lipschitz extensions require checking all previously defined databases to extend to another database, which can often be uncomputable for general functions. 
To avoid these uncomputability issues, we take advantage of the lattice structure of neighboring databases in the differential privacy landscape.
This will allow us to give a far more localized construction that critically utilizes the following two key properties of the data universe metric space:
\begin{enumerate}
	\item While each database could have infinitely many neighboring databases, it only has a finite number of neighbors with strictly fewer entries.
	\item Any two neighbors of a strictly larger database must also be neighbors of a strictly smaller database.
\end{enumerate}

These properties ensure that whenever we define $g$ on a new database, we only need to check that sensitivity constraints of strictly smaller neighboring databases are satisfied.  Once we have found the feasible range of $g$ that does not violate any sensitivity constraints, we will define $g$ to be as close as possible to $f$ within this feasible range.  

\begin{definition}[Informal version of Definition \ref{def:1D_sensitivity_function}]\label{def.informal}
Given a function $f: \mathcal{D} \rightarrow \R$ and fixed sensitivity parameters, we recursively define our  \sensitivity~$g: \mathcal{D} \rightarrow \R$ such that $g(\emptyset) = f(\emptyset)$\footnote{See Remark \ref{rem.empty} for a discussion of how to initialize $g(\emptyset)$ if $f(\emptyset)$ is not well-defined.} and for any $D \in \mD$,
	\[
	g(D)= \text{closest point to $f(D)$ in }  \textsc{Feasible}(D),
	\]
where $\textsc{Feasible}(D)$ is the set of all points that do not violate the sensitivity constraints based upon $g(D')$ for all neighbors $D'$ of $D$ with fewer entries. 
\end{definition}

The recursive structure of this function allows us to compute $g(D)$ by only looking at the subsets of $D$, which unfortunately takes exponential time.  Later in the paper (Sections \ref{sec:efficient_examples} and \ref{sec:variance_algo}), we utilize the simplicity of the recursive structure to efficiently implement this algorithm for several functions of interest that exhibit additional structure.  Theorem \ref{thm.inmain} summarizes our main result on the running time and accuracy guarantees of our \sensitivity~algorithm.


\begin{theorem}[Informal version of Theorem \ref{thm:main_1d}]\label{thm.inmain}
For any function $f: \mD \to \R$ and desired sensitivity bounds $\{\Delta_i\}$, let $g: \mD \to \R$ be the \sensitivity~of $f$.  Given query access to $f$ in $T(n)$ time for a database of size $n$, we give $O((T(n) + n)2^n)$ time access to $g(D)$ for any $D \in \mD$ with $n$ entries.  We also give instance-specific bounds on each $\abs{f(D) - g(D)}$ based on the sensitivity of $f$ and $\{\Delta_i\}$.
\end{theorem}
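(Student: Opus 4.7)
The plan is to prove both the algorithmic and accuracy guarantees by unrolling the recursive definition of $g$ and exploiting the two metric-space properties listed before Definition \ref{def.informal}. By induction on $|D|$ the value $g(D)$ depends only on $f(D)$ and on $g$ evaluated at strict subsets of $D$, so computing $g(D)$ reduces to computing $g$ on all $2^n$ subsets of $D$ in order of increasing size via memoization. For each subset $D'$ of size $k \le n$, the algorithm performs one query to $f$ (costing at most $T(n)$), examines the $k$ smaller neighbors of $D'$, intersects their associated sensitivity constraints (each a closed interval on $\R$), and projects $f(D')$ onto the resulting interval. Intersecting at most $n$ intervals on the real line and computing a projection takes $O(n)$ time, so the total work is $O((T(n)+n)\cdot 2^n)$, as claimed.

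Before the projection step is well defined I need the feasible interval to be nonempty; this is where property 2 from the excerpt enters. For any two smaller neighbors $D\setminus\{x_i\}$ and $D\setminus\{x_j\}$ of $D$, the database $D\setminus\{x_i,x_j\}$ is a common smaller neighbor of both. The inductive hypothesis that $g$ already satisfies all sensitivity bounds on smaller databases then gives
\[
|g(D\setminus\{x_i\}) - g(D\setminus\{x_j\})| \le |g(D\setminus\{x_i\}) - g(D\setminus\{x_i,x_j\})| + |g(D\setminus\{x_i,x_j\}) - g(D\setminus\{x_j\})| \le \Delta_j + \Delta_i,
\]
so the intervals $[g(D\setminus\{x_i\})-\Delta_i,\, g(D\setminus\{x_i\})+\Delta_i]$ and $[g(D\setminus\{x_j\})-\Delta_j,\, g(D\setminus\{x_j\})+\Delta_j]$ intersect pairwise. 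Since pairwise intersection of closed intervals on $\R$ implies a common intersection (Helly's theorem in one dimension), the feasible set for $g(D)$ is a nonempty closed interval and the projection is well defined.

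For the instance-specific accuracy guarantee I induct on $|D|$. The base case $D=\emptyset$ gives $g(\emptyset)=f(\emptyset)$ and error zero. For the inductive step, let $S_i(D)\defeq |f(D)-f(D\setminus\{x_i\})|$ denote the actual individual sensitivity of $f$ at $D$ in coordinate $i$. If $f(D)$ already lies in the feasible interval then $g(D)=f(D)$ and the error vanishes; otherwise $g(D)$ equals the nearest endpoint, which has the form $g(D\setminus\{x_i\})\pm\Delta_i$ for some $i$. In that case the triangle inequality yields
\[
|f(D)-g(D)| \le |f(D)-g(D\setminus\{x_i\})| - \Delta_i \le S_i(D) + |f(D\setminus\{x_i\})-g(D\setminus\{x_i\})| - \Delta_i,
\]
which unrolls along the chain of smaller neighbors into a bound expressed entirely in terms of the excess sensitivity $\max(0,\, S_i-\Delta_i)$ accumulated along some subset chain from $\emptyset$ to $D$.

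The main obstacle I expect is packaging this instance-specific bound cleanly. The recursion above produces an upper bound along each chain of removals, but $g(D)$ is constrained by all of its smaller neighbors simultaneously, so the tightest statement takes a minimum over chains; matching the precise form of Theorem \ref{thm:main_1d} will require careful bookkeeping, in particular when $\Delta_i$ is violated at multiple intermediate steps and the nearest endpoint alternates between upper and lower along the chain.
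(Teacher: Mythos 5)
Your approach matches the paper's proof of Theorem~\ref{thm:main_1d} (via Lemmas~\ref{lem:1d_sensitivity_guarantees} and~\ref{lem:1D_error_bounds}): the $O((T(n)+n)2^n)$ runtime analysis, the pairwise-intersection argument for well-definedness (the paper phrases this as showing $\lowerbound(D) \le \upper(D)$ by comparing the two witnessing indices through the common subset $D - x_k - x_j$, which is exactly your one-dimensional Helly argument), and the inductive triangle-inequality error bound are all the same. One correction to your closing paragraph: the instance-specific bound in Theorem~\ref{thm:main_1d} is a \emph{maximum} over permutations/chains, not a minimum. The recursion gives $|f(D)-g(D)| \le |f(D)-g(D-x_i)| - \Delta_i$ only for whichever $x_i$ happened to be tight, and since you cannot control which constraint binds at each level, the only chain-independent bound you can extract is the worst case over all removal orders; the paper then absorbs negative summands with the inner $\max\{\cdot,0\}$ term.
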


Our algorithm for computing the \sensitivity~$g$ (Algorithm \ref{algo.fitting}) is robust to informational assumptions.  We only assume query access to $f$, and do not require any knowledge of the database universe $\cD$ or the sensitivity of $f$.

This algorithm easily extends to functions that map to $\R^d$, by treating each dimension independently.  See Remark~\ref{rem:higher_dim} and Section \ref{sec:higher_dimensions} for more details on handling high-dimensional functions.

\subsubsection*{Approximate Optimality and Hardness}

The construction of our \sensitivity~is quite simple in its greedy structure and requires exponential running time. To justify these two properties, we complement our algorithm with both optimality guarantees and hardness results.

In particular, we still consider the general problem of taking any function $f: \mD \to \R$ and constructing a new function $g: \mD \to \R$ with individual sensitivity bounds $\{\Delta_i\}$.
The goal will then be to minimize the difference $\abs{f(D) - g(D)}$ across databases $D \in \mD$.  Despite its simplicity, we show that our \sensitivity~still achieves a 2-approximation to the optimal function in the $\ell_\infty$ metric in this generalized setting.

\begin{proposition}[Informal version of Corollary \ref{cor:2-approx_1d}]\label{prop.in2apx}
	Given any function $f: \mD \to \R$ and sensitivity parameters $\{\Delta_i\}$, let $g: \mD \to \R$ be our \sensitivity. For any function $f^*: \mD \to \R$ with individual sensitivity bounds $\{\Delta_i\}$,	
	\[
	\max_{D \in \mD} \abs{f(D) - g(D)} \leq 2 \max_{D \in \mD} \abs{f(D) - f^*(D)}.
	\]	
\end{proposition}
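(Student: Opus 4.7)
The plan is to strengthen the target bound into an intermediate inductive claim relating $g$ directly to the benchmark $f^*$. Set $\mathrm{OPT} := \max_{D \in \mD} |f(D) - f^*(D)|$; I will show that $|g(D) - f^*(D)| \leq \mathrm{OPT}$ for every $D \in \mD$. This suffices, since the triangle inequality then gives $|f(D) - g(D)| \leq |f(D) - f^*(D)| + |f^*(D) - g(D)| \leq 2\,\mathrm{OPT}$, and maximizing over $D$ delivers the proposition.

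I would establish the intermediate claim by induction on $n = |D|$. The base case is immediate, since $g(\emptyset) = f(\emptyset)$ and so $|g(\emptyset) - f^*(\emptyset)| = |f(\emptyset) - f^*(\emptyset)| \leq \mathrm{OPT}$. For the inductive step I exploit that $\textsc{Feasible}(D)$ is a closed interval $[L,U] \subseteq \R$: it is an intersection of finitely many intervals of the form $[g(D') - \Delta_i, g(D') + \Delta_i]$, one for each smaller neighbor $D'$ of $D$ differing in entry $i$, and the finiteness is guaranteed by the two metric-space properties highlighted in the paper. By definition $g(D)$ is the projection of $f(D)$ onto $[L,U]$.

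The main step is to certify that $f^*(D)$ lies within $\mathrm{OPT}$ of $[L,U]$. Because $f^*$ itself satisfies the individual-sensitivity bounds, $f^*(D) \in [f^*(D') - \Delta_i, f^*(D') + \Delta_i]$ for each smaller neighbor $D'$ that differs in entry $i$. The inductive hypothesis $|g(D') - f^*(D')| \leq \mathrm{OPT}$ then says that each constraint interval defining $\textsc{Feasible}(D)$ is a shift by at most $\mathrm{OPT}$ of a corresponding interval that contains $f^*(D)$; intersecting these over all smaller neighbors and applying elementary interval arithmetic yields $\mathrm{dist}(f^*(D), [L,U]) \leq \mathrm{OPT}$.

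To finish, a short case analysis on the positions of $f(D)$, $f^*(D)$, and $[L,U]$ converts this distance bound into $|g(D) - f^*(D)| \leq \mathrm{OPT}$. When $f(D) \in [L,U]$ we have $g(D) = f(D)$, so the bound is just $|f(D) - f^*(D)| \leq \mathrm{OPT}$; when $f^*(D) \in [L,U]$ but $f(D) \notin [L,U]$, the endpoint $g(D)$ lies between them, so $|g(D) - f^*(D)| \leq |f(D) - f^*(D)|$. I expect the main obstacle to be clean bookkeeping for the extreme case where $f(D)$ and $f^*(D)$ lie outside $[L,U]$ on opposite sides: here one must observe that the width $U - L$ is already absorbed by $|f(D) - f^*(D)|$, so projecting onto the endpoint closest to $f(D)$ leaves us within $\mathrm{dist}(f^*(D),[L,U]) \leq \mathrm{OPT}$ of $f^*(D)$. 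Once this case is dispatched, the induction closes and the triangle-inequality step above gives the claimed $2$-approximation.
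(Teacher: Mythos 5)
Your proof is correct, and it takes a genuinely different and arguably cleaner route than the paper's. The paper proves Corollary~\ref{cor:2-approx_1d} as a consequence of Lemma~\ref{lem:2-approx_1d_key_lemma}, which is established by induction \emph{and contradiction}: assuming the claim fails at $D$, one first argues the maximizing subset must be $D$ itself, then invokes the auxiliary Lemma~\ref{lem:exists_far_subset_database} (existence of a $\tilde{D}\subset D$ with $|f(D)-f(\tilde D)| \geq |f(D)-g(D)| + \sum_{i \in D\setminus\tilde D}\Delta_i$) to show that no $f^*$ with bounded individual sensitivity can be simultaneously close to both $f(\tilde D)$ and $f(D)$. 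Your route is to prove the stronger pointwise claim $|g(D) - f^*(D)| \leq \mathrm{OPT}$ for all $D$, directly by induction on $|D|$, and then get the factor $2$ from a single triangle inequality. This is more elementary: you never need the ``far subset'' lemma, and the inductive step is just interval arithmetic on $\textsc{Feasible}(D)=[L,U]$. Your intermediate claim is also a stronger fact about $g$ than the paper states --- it says $g$ is a genuine $1$-approximation to \emph{any} admissible $f^*$ in $\ell_\infty$, from which the $2$-approximation to $f$ is immediate. Since your argument works equally well restricted to subsets of a fixed $D$ (the data universe $\{D'\subseteq D\}$ is closed under taking further subsets), it in fact recovers all of Lemma~\ref{lem:2-approx_1d_key_lemma}, not merely the corollary.

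One small bookkeeping error in your last paragraph: in the case $f(D)<L$ and $f^*(D)>U$, you assert $|g(D)-f^*(D)| \leq \mathrm{dist}(f^*(D),[L,U])$, but $g(D)=L$ gives $|g(D)-f^*(D)| = f^*(D)-L$, which exceeds $\mathrm{dist}(f^*(D),[L,U]) = f^*(D)-U$ whenever $L<U$. The correct closing step in that case is the one you already flag implicitly: since $f(D)<L\leq U<f^*(D)$, we have $|g(D)-f^*(D)| = f^*(D)-L \leq f^*(D)-f(D) = |f(D)-f^*(D)| \leq \mathrm{OPT}$, so the interval width is ``absorbed'' in the sense that the chain $|g(D)-f^*(D)| \leq |f(D)-f^*(D)|$ holds directly; the $\mathrm{dist}$ intermediate step should be dropped. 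This is a presentation slip rather than a gap --- the case is handled by the same two ingredients (either $|f(D)-f^*(D)|\leq\mathrm{OPT}$ when $g(D)$ sits between $f(D)$ and $f^*(D)$, or the inductive bound $f^*(D)\in[L-\mathrm{OPT},\,U+\mathrm{OPT}]$ when $g(D)$ and $f^*(D)$ are on the same side of the constraint), and the induction closes as you intended.
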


Our guarantees are even stronger because they also hold over finite subsets of the data universe.  While Proposition \ref{prop.in2apx} measures error in the worst-case over $\cD$, we also show (Lemma \ref{lem:2-approx_1d_key_lemma}) that when the optimal error is small on certain subsets of the data universe, then our error is also small.

Furthermore, we can show that our \sensitivity~is Pareto optimal: there is no strictly superior sensitivity-bounded function that improves accuracy over all databases.

\begin{proposition}[Informal version of Lemma \ref{lem:optimality}]
	Given any $f:\mathcal{D} \rightarrow \mathbb{R}$, let $g:\mathcal{D} \rightarrow \mathbb{R}$ be the \sensitivity~of $f$ with individual sensitivity parameters $\{ \Delta_i \}$. For any $f^*: \mD \to \R$ with individual sensitivity parameters $\{\Delta_i\}$, if there is some $D \in \mathcal{D}$ such that
	\[ \abs{f(D) - f^*(D)} < \abs{f(D) - g(D)},\]
	then there also exists some $D' \in \mathcal{D}$ such that 
	\[ \abs{f(D') - f^*(D')} > \abs{f(D') - g(D')}.\]
\end{proposition}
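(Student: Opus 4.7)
The plan is to prove the contrapositive: if $\abs{f(D) - f^*(D)} \le \abs{f(D) - g(D)}$ for every $D \in \mD$ (so that $f^*$ is weakly at least as accurate as $g$ everywhere), then in fact $f^* \equiv g$, which rules out any strict improvement. I would establish this by induction on the number of entries in $D$.

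For the base case $D = \emptyset$, the construction gives $g(\emptyset) = f(\emptyset)$, so weak dominance forces $\abs{f(\emptyset) - f^*(\emptyset)} \le 0$ and hence $f^*(\emptyset) = g(\emptyset)$.

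For the inductive step, fix $D$ with $\abs{D} = n$ and assume $f^*(D') = g(D')$ for every $D'$ with strictly fewer entries than $D$. Because $f^*$ itself satisfies the individual sensitivity bounds $\{\Delta_i\}$, for every strictly smaller neighbor $D'$ of $D$ (obtained by removing, say, the $i$-th entry) we have $\abs{f^*(D) - f^*(D')} \le \Delta_i$, and by the inductive hypothesis $f^*(D') = g(D')$. Consequently $\abs{f^*(D) - g(D')} \le \Delta_i$ for every such $D'$, which is exactly the condition defining $\textsc{Feasible}(D)$ in Definition~\ref{def.informal}, so $f^*(D) \in \textsc{Feasible}(D)$. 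By Property~(1) of the data-universe metric space, $\textsc{Feasible}(D)$ is a finite intersection of closed intervals of $\R$, and since $g(D)$ is well defined this intersection is nonempty, hence a closed interval. The value $g(D)$ is by construction the unique nearest point in this interval to $f(D)$, because nearest-point projection onto a closed interval of $\R$ is always a singleton. Combined with $f^*(D) \in \textsc{Feasible}(D)$ and the assumed inequality $\abs{f(D) - f^*(D)} \le \abs{f(D) - g(D)}$, uniqueness of the projection forces $f^*(D) = g(D)$, completing the induction.

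The main subtle point I expect to have to verify carefully is the translation between the sensitivity constraints on $f^*$ (which involve $f^*$'s \emph{own} values at smaller neighbors) and the feasibility set defining $g(D)$ (which uses $g$'s values); the inductive hypothesis is precisely what bridges that gap. Once the induction is complete, the Pareto-optimality conclusion is immediate, since any $D$ witnessing $\abs{f(D) - f^*(D)} < \abs{f(D) - g(D)}$ would contradict $f^* \equiv g$.
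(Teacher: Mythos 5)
Your proof is correct and follows essentially the same strategy as the paper's: the paper phrases it as a proof by contradiction starting from the smallest database on which $h$ and $g$ disagree, which is exactly the minimal counterexample of your induction, and both arguments hinge on $f^*$'s sensitivity constraints plus agreement on all smaller databases forcing $f^*(D)$ into the feasible interval $[\lowerbound(D),\upper(D)]$. The one cosmetic difference is the final step: the paper derives an explicit violation of $h$'s individual sensitivity bound (via a case split on whether $\upper(D) < f(D)$ or $\lowerbound(D) > f(D)$), while you invoke uniqueness of the nearest-point projection onto a closed interval, which dispatches the case analysis in one stroke and is arguably the cleaner way to finish.
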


These results imply that our \sensitivity~does quite well fitting to the original function under the metrics we are considering.
However, it does take exponential time, so we complement these results by showing that getting the same approximation guarantees is NP-hard even for a single sensitivity parameter $\{\Delta_i\} = \Delta$.

\begin{proposition}[Informal version of Proposition \ref{prop:NP-hard}]
Given any function $f:\mathcal{D} \rightarrow \mathbb{R}$ and sensitivity parameter $\Delta$,	it is NP-hard to construct any function $f^*:\mD \to \R$ with sensitivity $\Delta$ that enjoys the same accuracy guarantees as our \sensitivity.
\end{proposition}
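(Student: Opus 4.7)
The plan is to exhibit a polynomial-time reduction from a canonical NP-hard problem (for concreteness, I would attempt 3-SAT or VERTEX-COVER first) to the task of evaluating $f^*(D)$ on a query database $D$, where $f^*$ is any sensitivity-$\Delta$ function meeting the $2$-approximation bound of Corollary~\ref{cor:2-approx_1d}. A polynomial-time algorithm for the latter would then solve the NP-hard problem in polynomial time; since our \sensitivity~already produces such an $f^*$ in exponential time, this shows that the exponential blow-up is essentially unavoidable among constructions that match our accuracy.

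First, given an instance $\mathcal{I}$, I would build in polynomial time a database universe $\mathcal{D}_\mathcal{I}$, a single sensitivity parameter $\Delta$, a function $f_\mathcal{I} : \mathcal{D}_\mathcal{I} \to \R$ that supports query access in polynomial time, and a distinguished ``target'' database $D^\star$. The subset lattice below $D^\star$ is designed to encode $\mathcal{I}$: each entry represents a ``choice'' (a variable's truth value, or a vertex's inclusion in a cover), and the values of $f_\mathcal{I}$ on singletons and small subsets are tuned so that the sensitivity-$\Delta$ Lipschitz constraints force any feasible $f^*$ to propagate a consistent encoding of a partial solution up the lattice toward $D^\star$.

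Second, I would establish a YES/NO separation: the optimum error $\mathrm{OPT} := \min_{f^*} \|f_\mathcal{I} - f^*\|_\infty$ over all sensitivity-$\Delta$ functions $f^*$ takes a value $\mathrm{OPT}_{\textsc{Yes}}$ when $\mathcal{I}$ is a YES instance and a strictly larger value $\mathrm{OPT}_{\textsc{No}}$ when it is a NO instance, with $\mathrm{OPT}_{\textsc{No}} > 2 \cdot \mathrm{OPT}_{\textsc{Yes}}$. Under this gap, any $f^*$ matching the $2$-approximation guarantee of Corollary~\ref{cor:2-approx_1d} must, at $D^\star$, take a value in one of two disjoint intervals depending on which case holds; reading off $f^*(D^\star)$ then decides $\mathcal{I}$.

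The main obstacle I anticipate is designing the gadget so that the YES/NO gap exceeds a factor of $2$ while respecting both the uniform single-$\Delta$ restriction and the rigid metric structure of $\mathcal{D}$ (in particular, the second key property that any two neighbors of a strictly larger database share a strictly smaller common subset-neighbor). This latter property makes naive encodings of disjunctive constraints collapse on the way up the lattice, so some amount of gap amplification---e.g., compounding many independent copies of a base gadget whose YES/NO errors differ only by a small constant factor---will likely be needed to push the ratio strictly above $2$. Once such a gap is secured, the reduction and the NP-hardness conclusion are immediate.
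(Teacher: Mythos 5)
Your high-level outline---reduce from an NP-hard decision problem, design the gadget so that any function satisfying the approximation guarantees must reveal the answer at a target database $D^\star$---is sound, and it is indeed what the paper does. But your planned execution via a multiplicative YES/NO gap $\mathrm{OPT}_{\textsc{No}} > 2\,\mathrm{OPT}_{\textsc{Yes}}$ is a needlessly arduous route, and the obstacle you flag (getting the ratio strictly above $2$ under the rigid metric structure and single-$\Delta$ constraint, possibly requiring gap amplification) is real enough that you would likely stall there. The paper's proof of Proposition~\ref{prop:NP-hard} sidesteps the entire issue by engineering $\mathrm{OPT} = 0$ on the decisive subproblems rather than a bounded-away-from-one ratio. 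Concretely, the gadget is $f_\phi(D) = |D| - \phi(D+T)$ over $\mathcal{D} = \{T,F\}^{\le n}$ with $\Delta = 1$, where $\phi$ is a SAT formula. When $\phi$ is satisfiable, one fixes $D$ to be the set of false assignments of a minimal satisfying assignment; restricted to $\mathcal{D}' = \{D' \subseteq D\}$, $f_\phi$ already has sensitivity at most $1$, so the optimal error on $\mathcal{D}'$ is $0$. Because Lemma~\ref{lem:2-approx_1d_key_lemma} is a \emph{subset}-level guarantee (not merely the global bound of Corollary~\ref{cor:2-approx_1d}), any $f^*$ matching the accuracy guarantees must satisfy error $\le 2\cdot 0 = 0$ on $\mathcal{D}'$, i.e.\ $f^*(D) = |D|-1$ exactly, and sensitivity-$1$ propagation then forces $f^*(F^n) \le n-1$. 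Conversely, when $\phi$ is unsatisfiable, $f_\phi$ itself has sensitivity $1$ everywhere, so again $\mathrm{OPT} = 0$ on the sublattice below $F^n$ and $f^*(F^n) = n$ is forced. The YES/NO separation is thus $f^*(F^n) < n$ versus $f^*(F^n) = n$, and it works for \emph{any} finite approximation factor, not just $2$. The two ideas your plan is missing are precisely these: (i) use the \emph{local} optimality guarantee of Lemma~\ref{lem:2-approx_1d_key_lemma}, not the global $\ell_\infty$ bound; and (ii) construct the gadget so the decisive subproblems have zero optimal error, which makes the approximation factor immaterial and renders gap amplification unnecessary.
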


We further argue that it is uncomputable to do better than a 2-approximation in the $\ell_\infty$ metric, and also uncomputable to achieve even a constant approximation in any $\ell_p$ metric for $p < \infty$, which justifies our choice of metric. 
We believe that the combination of these results gives a strong indication that our \sensitivity~and corresponding exponential time construction is the best we can hope to achieve for the general problem under reasonable metrics of optimality.

\subsubsection*{Efficient Implementation for Important Statistical Measures}

One of the main benefits of our \sensitivity~is that its simple recursive structure is conducive to giving simple efficient variants for specific functions through largely straightforward state space reductions and dynamic programming.
To this end, we give efficient implementations of our \sensitivity~for the important statistical functions mean, median, $\alpha$-trimmed mean, maximum, minimum, and variance.
These statistical metrics can be surprisingly difficult to release privately without assuming the input is restricted to some range, and often requires further assumptions for metrics like mean, such as data being drawn from identical and independent distributions.
In fact, for Propose-Test-Release even the iid assumption is not sufficient to apply their framework to mean, and other works required further concentration properties such as being drawn from the normal distribution.
However, our \sensitivity~does not require bounded sensitivity of the input function $f$, and can also avoid any iid requirements.
As a result, we are able to efficiently implement each of these statistical metrics with no constraints on the inputs.
It is important to note that our implementations only consider a single sensitivity parameter $\Delta$, but we believe each can be efficiently extended for individual sensitivity parameters $\{\Delta_i\}$. 

For each of these statistical metrics, we are simply implementing our \sensitivity~more efficiently, so all of the previously stated optimality guarantees still apply.
To further strengthen these optimality guarantees, we give a more rigorous treatment of the error incurred by our efficient implementation of median, mean, and variance, which we consider to be three of the most fundamental statistical tasks.

\paragraph{Median}  We focus on privately and accurately computing median, because it has been extensively studied under smooth sensitivity \cite{NRS07}.  Both our framework and smooth sensitivity provide database-specific accuracy guarantees, so a direct comparison of accuracy will be difficult.  Nevertheless, we show that our framework compares favorably to smooth sensitivity on median.

We begin by stating our result which will use a definition from \cite{NRS07} to give a more apparent comparison in terms of accuracy. As in \cite{NRS07}, we formally define 
\[
A^{(k)}(D) = \max_{d(D,D') \leq k} LS_f(D')
\]
as the $k$-local sensitivity of a function $f$ for database $D$. For median this just reduces to $A^{(k)}(D) = \max_{0 \leq t \leq k+1} (x_{m + t} - x_{m + t - k - 1})$ where $m = \frac{n+1}{2}$ (when $n$ is assumed to be odd). When $n$ is even, we define the median to be the average of the middle two entries (i.e. $(x_{\lfloor \frac{n + 1}{2} \rfloor} + x_{\lceil \frac{n + 1}{2} \rceil} ) / 2$) as is standard, and the resulting interpretation of $A^{(k)}(D)$ is nearly identical.

We also need to define median on the empty set.  Since this is not naturally defined, we allow it to be an input parameter $med(\emptyset)$ chosen by the data analyst as the estimated median. 
As our comparison will mostly be with smooth sensitivity which must assume values are in a bounded range $[min,max]$, the natural choice would be $med(\emptyset) = \frac{max - min}{2}$.
Further, it would be natural in this setting to set our parameter $\Delta = \frac{max - min}{n}$.

\begin{restatable}{theorem}{median}
	\label{thm:median}
	Let $med: \R^{<\mathbb{N}} \rightarrow \R$ be the median function for the data universe of all finite-length real-valued vectors. For chosen parameters $med(\emptyset)$ and $\Delta$, along with any database $D = (x_1,....,x_n) \in \R^{<\mathbb{N}}$, if $x_1 \leq \cdots \leq x_n$ we give $O(n)$ time access to a function $g: \R^{<\mathbb{N}} \rightarrow \R$ with sensitivity $\Delta$ such that $g(D) = med(D)$ whenever $A^{(k)}(D) \leq 2(k + 1)\Delta$ for $k \leq n/4$ and $med(D) \in [med(\emptyset) - \frac{n}{2}\Delta, med(\emptyset) + \frac{n}{2}\Delta]$.
\end{restatable}

To interpret this accuracy result, we begin by comparing performance on the specific example that considered by \cite{NRS07}, where smooth sensitivity performed well.  
In fact, it was exactly this setting that motivated our choice of assumptions under which to show our framework outperforms smooth sensitivity.
Consider an environment where data points $x_1, \ldots, x_n$ lie in a bounded range $[0,1]$, and we naturally set $med(\emptyset) = 1/2$ and $\Delta = 1/n$.  Consider the particular database $D=(x_1, \ldots, x_n)$, where $x_i = i/n$.  In this example, it is easy to check that the assumptions are satisfied for our \sensitivity~to correctly output $g(D)=med(D)$.  Privately answering the query $g(D)$ using the Laplace mechanism (see Definition \ref{def.laplacemech} for a formal definition) outputs $med(D)$ plus noise with scale $\Delta / \epsilon = \frac{1}{\epsilon n}$.  In contrast the noise parameter added under smooth sensitivity (without our sensitivity preprocessing) would have to scale $\frac{1}{\epsilon^2 n}$, which is asymptotically larger, and would thus yield significantly lower accuracy.

The assumptions in Theorem \ref{thm:median} that $A^{(k)}(D) \leq 2(k + 1)\Delta$ for all $k \leq n/4$ and $med(D) \in [med(\emptyset) - \frac{n}{2}\Delta, med(\emptyset) + \frac{n}{2}\Delta]$ are then exactly the generalization of this condition where our database still has values that are reasonably spread out, but also has low local sensitivity and we would still like to achieve high accuracy.
Further note that these assumptions allow both the bottom and top quartile values to be arbitrarily small and large, implying that our construction is able to handle outliers well.
Restricting our attention to databases that satisfy these conditions allows us to consider all of the databases under which smooth sensitivity performs well. Under these assumptions, smooth sensitivity will achieve a noise magnitude of $\frac{\Delta}{\epsilon^2}$, whereas we instead achieve an asymptotically better noise magnitude of $\frac{\Delta}{\epsilon}$. 
Note that smooth sensitivity requires a bounded range, which we are considering here to be of size $n \Delta$, giving a global sensitivity of $n \Delta$ for median.
Accordingly, standard mechanisms would have noise magnitude of $\frac{n \Delta}{\epsilon}$, which is significantly worse than both our framework and smooth sensitivity.

It is important to acknowledge that our Sensitivity-Preprocessing framework will not outperform smooth sensitivity in general.  For example, consider again the domain where all data points are bounded in $[0,1]$, and consider the database $D$ of all 1's.  Then our $g(D)=1$, and the Laplace Mechanism would output $med(D)$ with noise of magnitude $\frac{1}{\epsilon n}$.  However, the smooth sensitivity of this database will be $e^{-\epsilon n /2}$, and the smooth sensitivity framework only requires noise of magnitude $e^{-\epsilon n /2} / \epsilon$.  More generally, smooth sensitivity will often do better if most data entries are exponentially close to one value \footnote{It is also necessary to mention neither mechanism will necessarily outperform the other when the assumptions are not fulfilled, and for databases with high local sensitivity both mechanisms will have poor accuracy but in different ways. Consider again the domain where all data points are bounded in $[0,1]$, and let the database $D$ consist of $\frac{n+1}{2}$ values 1 and $\frac{n - 1}{2}$ values 0, so $med(D) = 1$. Our \sensitivity~will output $g(D) \approx 1/2$ and add noise with magnitude $\frac{1}{\epsilon n}$, while smooth sensitivity will add prohibitively large noise parameter of scale $1/\epsilon$ to $med(D)$, and neither gives any accurate information on the true median value.}. To achieve benefits from both techniques, an analyst could simply apply the smooth sensitivity framework after our preprocessing step.  Our preprocessing algorithm will only improve the smooth sensitivity parameters, so this approach will continue to achieve the strong accuracy guarantees of smooth sensitivity on highly concentrated databases \footnote{We note that this can be done efficiently, as both smooth sensitivity and our preprocessing step take time $O(n^2)$ on database-ordered functions.
Database-ordered functions will be defined in Section~\ref{sec:efficient_examples}, and it will be seen that this general class can be implemented in $O(n^2)$ time for our framework. While it is outside the scope of this paper, it is straightforward to see that this also holds for the smooth sensitivity framework and that this property is preserved when applying our preprocessing to the median function. We leave formal proofs of these to future work.}.
On the example above, if we first apply our \sensitivity~and then add noise based on the smooth sensitivity, then we will also have noise that scales approximately as $e^{-\epsilon n / 2} / \epsilon$.  Since our algorithm is a preprocessing step, it is compatible with all techniques for improving accuracy of differentially private algorithms.  We view this as an exciting avenue for future work, to optimize the use of each tool under different parameter settings.

\paragraph{Mean.} We first note that mean is not naturally defined on the empty set, so we define it to be an input parameter $\hmu$ chosen by the data analyst as the estimated mean.  The analyst's choice of $\hmu$ should reflect her prior knowledge, and will play a role in our accuracy guarantees.  Intuitively, if two databases have means that are exponentially far apart, we cannot hope to output both means accurately.  As such, our \sensitivity~will be accurate on databases with mean reasonably close to $\hmu$.  Our efficient implementation of mean will take $O(n^2)$ time and provide the following guarantees.

\begin{restatable}{theorem}{mean}
	\label{thm:mean}
	Let $\mu: \R^{<\mathbb{N}} \rightarrow \R$ be the mean function for the data universe of all finite-length real-valued vectors. For chosen parameters $\hmu$ and $\Delta$, along with any database $D = (x_1,....,x_n) \in \R^{<\mathbb{N}}$, we give $O(n^2)$ time access to a function $g: \R^{<\mathbb{N}} \rightarrow \R$ with sensitivity $\Delta$ such that,
	\[
	|g(D) - \mu(D)| \leq \max\left\{\abs{\mu(D) - \hmu} - \frac{n}{3}\Delta, 0 \right\} + \sum_{i=1}^n \max \left\{\frac{27\abs{x_i - \mu(D)}}{n} - \Delta, 0\right\}.
	\]
	
	Additionally, if we are guaranteed that each  $x_i \in [\hmu + \alpha \Delta, \hmu + (\alpha + n)\Delta]$ for $\alpha \in [-n,0]$, then $g(D) = \mu(D)$
	
\end{restatable}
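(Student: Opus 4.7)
Throughout I would initialize $g(\emptyset) = \hmu$ as per Remark~\ref{rem.empty}, and specialize the \sensitivity~recursion of Definition~\ref{def.informal} to $f = \mu$. The proof has three pieces: an $O(n^2)$-time dynamic program that realizes $g$, the instance-specific error decomposition, and exact recovery on bounded inputs.

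\textbf{Efficient implementation.} Because $\mu$ is permutation invariant and so is the \sensitivity~recursion, $g(S)$ depends only on the multiset underlying $S$. Sort the entries $x_1 \le \cdots \le x_n$. The key structural claim I would prove is a \emph{monotonicity} lemma: among all size-$k$ sub-multisets of $D$, the maximum of $g$ is attained at the top-$k$ block $\{x_{n-k+1}, \ldots, x_n\}$ and the minimum at the bottom-$k$ block $\{x_1, \ldots, x_k\}$. I would prove this by induction on $k$, using that (i) the clamp to the feasible interval preserves the ordering of its input, and (ii) swapping a smaller entry out for a larger one weakly increases $\mu$ and, by induction, weakly shifts both endpoints of the feasible interval to the right. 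Granting this lemma, the feasible interval for any \emph{contiguous} block $\{x_a, \ldots, x_b\}$ depends only on its two contiguous sub-blocks obtained by dropping $x_a$ or $x_b$, so $g$ on contiguous blocks satisfies a two-index recurrence over $(a,b)$ with $O(1)$-time transitions. This yields $O(n^2)$ total runtime, with $g(D)$ being the $(1,n)$ entry.

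\textbf{Error decomposition.} To bound $|g(D) - \mu(D)|$, I would trace a chain $\emptyset = S_0 \subset S_1 \subset \cdots \subset S_n = D$ of contiguous blocks and track how far $g(S_k)$ drifts from $\mu(S_k)$. Two sources of slippage arise. The first is the initial condition $g(\emptyset) = \hmu$ together with the $\Delta$-per-level sensitivity constraint, which limits how far $g(D)$ can travel from $\hmu$ in $n$ recursion steps; balancing the top and bottom branches of the DP so that neither binds too early yields the $n\Delta/3$ budget and the first term $\max\{|\mu(D) - \hmu| - n\Delta/3, 0\}$. The second is outlier entries: each $x_i$ far from $\mu(D)$ triggers a clamp on every subset containing it, and this enforced clamp attenuates as the subset grows. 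A cascade argument combining the fraction of size-$k$ subsets containing $x_i$, the $1/k$ averaging of $x_i$ into the mean, and the same $1/3$ balancing as above yields the per-entry contribution $\max\{27|x_i - \mu(D)|/n - \Delta, 0\}$.

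\textbf{Exact recovery and main obstacle.} For $x_i \in [\hmu - n\Delta/2, \hmu + n\Delta/2]$, the decomposition above is not tight enough by itself (the drift term alone could be as large as $n\Delta/6$), so I would prove exact recovery by a separate induction on the contiguous-block DP, with invariant ensuring that $g(\text{top-}(n-1)) - \mu(D) \le \Delta$ and $\mu(D) - g(\text{bottom-}(n-1)) \le \Delta$; by the monotonicity lemma these force $\mu(D)$ into the feasible interval for $D$ and hence $g(D) = \mu(D)$. The hardest step is the monotonicity lemma itself, since the clamp-then-recurse operation mixes interval endpoints with means in a way that does not obviously respect subset ordering; pinning down the tight constants $1/3$ and $27$ in the error decomposition is the other delicate calculation, and I expect both to follow from an optimal split of the sensitivity budget across recursion levels.
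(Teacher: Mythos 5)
Your high-level outline is correct in spirit (sort, restrict to contiguous blocks via a monotonicity property, split the error into a drift-from-$\hmu$ term and an outlier term, handle exact recovery separately), and your reduction of exact recovery to showing $g(D - x_1) - \mu(D) \le \Delta$ and $\mu(D) - g(D - x_n) \le \Delta$ is a valid step. But the two hard parts of the proof --- deriving the $1/3$ and $27$ constants, and producing a self-closing inductive invariant for exact recovery --- are not actually done, and your sketches for both would not close as written.

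The missing idea is a sandwich by two \emph{one-chain} auxiliary functions. The paper defines $h_{\mathrm{lower}}$ (recurse only on $D - x_n$, i.e.\ always drop the max) and $h_{\mathrm{upper}}$ (always drop the min), shows $h_{\mathrm{lower}}(D) \le g(D) \le h_{\mathrm{upper}}(D)$ by Lemma~\ref{lem:database_ordered_is_efficient_implementable}, and then exploits the fact that on a single chain the recursion has an explicit closed form controlled by the last ``crossover'' index $k$ where $h_{\mathrm{lower}}(D_{[1:k]}) \ge \mu(D_{[1:k]})$: below the crossover the value is exactly $\hmu - k\Delta$, above it the error telescopes. Splitting $k \le n/3$ versus $k > n/3$ gives the $\frac{n}{3}\Delta$ budget for the first term and a telescoping sum $\sum_{i\ge k}(\mu(D_{[1:i+1]})-\mu(D_{[1:i]})) - (n-k)\Delta$ for the second; with $i \ge n/3$ and a combinatorial inequality (Lemma~\ref{cor:bound_on_sum_of_absolute_values}, which itself costs a factor of $3$) this produces $27|x_i - \mu(D)|/n$. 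Your ``cascade argument combining the fraction of size-$k$ subsets containing $x_i$, the $1/k$ averaging, and the $1/3$ balancing'' is much too vague to extract these constants, and indeed you flag this yourself. Similarly, for exact recovery your proposed inductive invariant does not propagate: on sub-blocks $g$ is \emph{not} equal to $\mu$ (e.g.\ a two-element database with entries $\hmu \pm \frac{n}{2}\Delta$ has $g(\{x_1\}) = \hmu - \Delta \neq x_1$), so ``$g(\text{top-}(k-1)) - \mu(\text{block}) \le \Delta$'' cannot be established by straightforward induction on block size --- you would need to track exactly how far $g$ on a sub-block can fall below $\mu$ of the sub-block, which is what the $h_{\mathrm{lower}}$ closed form gives. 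The paper's proof of Lemma~\ref{lem:mean_does_good_on_wellbehaved} instead argues by contradiction from the last crossover index, using the range bound to show that if $h_{\mathrm{lower}}(D_{[1:k]}) \ge \mu(D_{[1:k]})$ then $\mu(D) - h_{\mathrm{lower}}(D_{[1:k]}) \le (n-k)\Delta$, so the chain must catch up.

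Two smaller points: your monotonicity lemma (``among size-$k$ sub-multisets, $g$ is maximized on the top-$k$ block'') is plausibly true but strictly stronger than what is needed; the paper only proves $g(D - x_n) \le \cdots \le g(D - x_1)$, which is the exact statement the DP requires and has a cleaner induction. And the $O(n^2)$ recurrence and initialization $g(\emptyset) = \hmu$ match the paper.
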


As was previously mentioned, we claim that our framework gives state-of-the-art accuracy for privately outputting mean when no underlying assumptions are made on the database.
It turns out that the lack of assumptions on the database makes outputting mean privately incredibly difficult, where even Propose-Test-Release was unable to privately output mean with iid assumptions.
Often further assumptions such as data drawn from a normal distribution or other distributions that concentrate well are necessary to guarantee highly-accurate private output of mean.
To our knowledge, the best algorithm to output mean privately when no underlying assumptions are made is the naive algorithm, that simply considers the range $[\hmu - \frac{n}{2}\Delta, \hmu + \frac{n}{2}\Delta]$ of length $n\Delta$, and rounds up or down any value outside of this range.
It is important to note that this range must be chosen independently of the database, as catering the range to the considered database can easily be shown to violate privacy.
Restricting values to a range of $n\Delta$ will then ensure that global sensitivity is at most $\Delta$ and standard mechanisms can be applied from here.
For all databases with values inside the range $[\hmu - \frac{n}{2}\Delta, \hmu + \frac{n}{2}\Delta]$ this will then give accurate output.

Note that our second accuracy guarantee similarly considers databases under which our preprocessing correctly outputs the mean.
It can be immediately seen that the allowable range for values in the database extends beyond the range for the naive algorithm, and can in some ways be seen to double this range.
Essentially, the minimum and maximum values must still be within $n\Delta$ for us to guarantee correctly outputting the mean, but the range under with minimum and maximum values can fall is now doubled.
Given the significance of mean as a statistical metric, we still believe this improvement is of significance and is the first to improve upon the naive algorithm when no underlying assumptions are made with regard to the database.

To complement this comparison to the naive algorithm, we also give strong accuracy guarantees for all databases not just the ones output correctly in our preprocessing.
Unpacking the bound in Theorem \ref{thm:mean}, the first term says that the mean of the database cannot be too far from $\hmu$. The second term considers the individual sensitivity of each data point, where $\abs{x_i - \mu(D)}/n$ is roughly the amount the mean changes from adding $x_i$ to the database. The sensitivity bound on $g$ requires that each individual change can only be offset by an additive $\Delta$, and we need to consider this contribution from each input. Intuitively, our error is small for databases whose mean is reasonably close to $\hmu$ and do not have many significant outliers, which is exactly what one would expect.

\paragraph{Variance.} Variance is also not naturally defined on the empty set, so we define it to be 0 for simplicity. Our efficient implementation of variance takes $O(n^2)$ time and provides the following guarantee.
\begin{restatable}{theorem}{variance}
	\label{thm:variance}
	Let $\textbf{Var}: \R^{<\mathbb{N}} \rightarrow \R$ be the variance function for the data universe of all finite-length real-valued vectors. For fixed parameter $\Delta$, along with any database $D = (x_1,....,x_n) \in \R^{<\mathbb{N}}$, we have $O(n^2)$ time access to a function $g: \R^{<\mathbb{N}} \rightarrow \R$ with sensitivity $\Delta$ such that,
	\[
	|g(D) - \var{D}| \leq \max\left\{\var{D} - \frac{n}{2}\Delta,0\right\} + \sum_{i=1}^n \max \left\{\sum_{j=1}^n \frac{4(x_i - x_j)^2}{n^2} - \Delta, 0\right\}.
	\]
\end{restatable}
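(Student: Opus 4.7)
The plan is to specialize the general \sensitivity~construction of Definition \ref{def:1D_sensitivity_function} to variance, exploiting its permutation-invariance and algebraic structure to reduce the recursion to $O(n^2)$ states, and then use a telescoping chain argument for the error bound. Since $\var{\cdot}$ depends only on the multiset of inputs, the \sensitivity~$g$ is also permutation-invariant, so we may sort $x_1 \le \cdots \le x_n$ and represent the relevant subsets canonically.

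The first key step is a structural lemma asserting that at each size $k$, only a polynomial number of ``canonical'' size-$k$ subsets of $D$ need to be stored in the DP table---intuitively, the extremal ancestors that tighten the feasible range come from a simple family such as contiguous prefixes and suffixes of the sorted data, because $\var{\cdot}$ is a symmetric quadratic function. With this reduction, the recursion
\[
g(S) = \mathrm{clamp}\bigl(\var{S},\; \max_{x \in S} g(S \setminus \{x\}) - \Delta,\; \min_{x \in S} g(S \setminus \{x\}) + \Delta\bigr)
\]
is evaluated in $O(n)$ time per state, over $O(n)$ canonical states per level and $O(n)$ levels, giving $O(n^2)$ total time. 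The sensitivity bound $\Delta$ is automatic by construction, inherited from Theorem \ref{thm:main_1d}.

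For the error bound, I would build a chain $\emptyset = S_0 \subset S_1 \subset \cdots \subset S_n = D$ by adding the $x_i$ in a carefully chosen order and telescope using $|g(S_k) - g(S_{k-1})| \le \Delta$, anchored by $g(\emptyset) = \var{\emptyset} = 0$. The accumulated error is controlled by $\sum_k \max\{|\var{S_k} - \var{S_{k-1}}| - \Delta,\, 0\}$ plus a residual slack term measuring how quickly $g$ can track $\var{\cdot}$ upward from zero. Using the identity $\var{S \cup \{x\}} - \var{S} = \tfrac{|S|}{(|S|+1)^2}(x - \mu(S))^2 - \tfrac{1}{|S|+1}\var{S}$, I would decompose $(x - \mu(S))^2$ into pairwise squared distances to get the per-point contribution $\sum_j \tfrac{4(x_i - x_j)^2}{n^2}$, while the residual slack yields the first term $\max\{\var{D} - \tfrac{n}{2}\Delta,\, 0\}$.

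The main obstacle will be proving the structural lemma that restricts the DP to $O(n^2)$ states. Variance is non-monotone in subset inclusion, so the extremal constraints on $g(S)$ may in principle come from many different ancestors; showing that they are all realized by a small canonical family requires an exchange argument exploiting the convexity of $\var{\cdot}$ along appropriate slices of the data. A secondary challenge is choosing the chain order to achieve the tight constants $\tfrac{n}{2}$ and $\tfrac{4}{n^2}$: a naive ordering would lose constant factors, so I expect the proof to insert points from the median outward so that the variance grows as uniformly as possible and the telescoping bound is tight.
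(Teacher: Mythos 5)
Your high-level plan (reduce the DP to $O(n^2)$ states by symmetry and sorting, then control error via a telescoping chain) has the right shape, but it misses the single observation that makes the whole reduction go through, and as a result the "main obstacle" you flag would not be resolved by the route you sketch.

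The key fact in the paper's proof is that, with $g(\emptyset)=0$, the \sensitivity{} of variance \emph{never exceeds the true variance}: $g(D) \le \var{D}$ for all $D$ (Lemma~\ref{lem:variance_doesnt_go_above}). This is proved by induction using the sharp marginal bound $\var{D-x_i} \le \frac{n}{n-1}\var{D}$ (Fact~\ref{fact:bound_marginal_variance}) together with $g(D-x_i)\le (n-1)\Delta$. Once this is in hand, the $\lowerbound$ branch of the recursion is \emph{never} active, so the problematic case you worry about---extremal constraints coming from removing interior points, which maximize residual variance and so would drive $\lowerbound$ up---simply never arises. The recursion collapses to $g(D)=\min\{\var{D},\,g(D-x_1)+\Delta,\,g(D-x_n)+\Delta\}$ by Fact~\ref{fact:variance_minimized_with_min_max_tossed}, and the contiguous-interval DP is an immediate consequence. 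Your proposed ``exchange argument exploiting the convexity of $\var{\cdot}$'' attacks the problem from the wrong end: no amount of exchange reasoning on the $\upper$ side addresses the interior-removal issue for $\lowerbound$; you need the one-sidedness of $g$ relative to $\var{\cdot}$ to kill that branch outright. Note also that the same one-sidedness requires the specific initialization $g(\emptyset)=0$, which your proposal does not flag as essential.

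The error-bound half of your plan is also off-route. You propose to \emph{design} a chain (inserting points from the median outward) and telescope the $\Delta$-Lipschitz bound on $g$. But the bound in Theorem~\ref{thm:main_1d} is a worst-case over all permutations, so a hand-chosen chain does not directly yield an upper bound; and more to the point, the paper does not pick a chain at all. Because $g(D)\le\var{D}$, the error is one-sided, and the algorithm's own choices produce the chain: Lemma~\ref{lem:simplification_of_variance_value} shows $g(D)=\var{D'}+(n-k)\Delta$ for some $D'\subseteq D$ of size $k$ on which $g$ last equaled $\var{\cdot}$. The proof then splits on $k\le n/2$ (giving the $\var{D}-\frac{n}{2}\Delta$ term) versus $k\ge n/2$ (where Lemma~\ref{lem:marginal_variance_bound}, essentially your incremental identity with the negative correction term dropped, gives the per-entry $\frac{4(x_i-x_j)^2}{n^2}$ term, with the factor $4$ coming from $k\ge n/2$). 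Your incremental identity for $\var{S\cup\{x\}}-\var{S}$ is correct and could substitute for Lemma~\ref{lem:marginal_variance_bound}, but the structure of the argument around it---chasing the algorithm's own chain rather than a designed one, and using one-sidedness of the error---is what you are missing.
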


The primary takeaway from the bound in Theorem \ref{thm:variance} is that databases with reasonably small variance and no major outliers will have low error bounds.  The first term in the error bound says that the variance of our database cannot be too large, which follows from our choice of the empty set to be defined at 0. 
The second term is a bit more messy, but has a natural interpretation under the known deformulation of variance, where we can consider $\sum_{j=1}^n(x_i - x_j)^2/n^2$ to be the contribution of input $x_i$ to the variance. This contribution can then be offset by $\Delta$, and we need to consider this contribution from each input.

\subsubsection*{Personalized Privacy}


Due to the preprocessing aspect of our \sensitivity, we can also apply our framework to variant definitions of differential privacy.
In particular, we consider \emph{personalized differential privacy} introduced in \cite{GR13}, which allows for a more refined definition whereby each individual receives their own $\epsilon_i$ privacy parameter.
Our definition of \emph{individual sensitivity} is then motivated by this privacy variant, as it can be exactly seen as the complementary sensitivity measure for this variant.
More specifically, most privacy mechanisms add noise proportional to $\Delta f / \epsilon$ for outputting functions $f: \mD \to \mathcal{R}$ while still preserving $\epsilon$-differential privacy.
This intimate connection between $\Delta f$ and $\epsilon$ in the output accuracy will be equivalent for the individual sensitivity measures $\Delta_i(f)$ and its respective $\epsilon_i$.
Consequently, the necessary noise for personalized privacy will be proportional to $\max_i \Delta_i(f) / \epsilon_i$.
We will formally prove this fact for two of the most fundamental mechanisms, Laplace and Exponential, and further remark (Remark \ref{rem.expo}) that this approach extends to any $\eps$-differentially private mechanism.



\begin{theorem}[Informal version of Propositions \ref{prop.laplace} and \ref{prop.expo}]
	For both the Laplace and Exponential Mechanisms, instead of adding noise proportional to $\Delta / \epsilon$, the added noise can be proportional to $\max_i \Delta_i /\epsilon_i$ to ensure personalized differential privacy for privacy parameters $\{\epsilon_i\}$.

\end{theorem}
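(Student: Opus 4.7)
The plan is to adapt the standard privacy analyses of the Laplace and Exponential Mechanisms to the personalized setting by setting the noise scale to $b := \max_j \Delta_j / \epsilon_j$ and then tracking a per-individual privacy loss. The key algebraic observation used throughout is that $b \geq \Delta_i / \epsilon_i$ for every individual $i$, and consequently $\Delta_i / b \leq \epsilon_i$. The critical conceptual point is that the personalized privacy guarantee for individual $i$ only needs to be checked against neighbor pairs $D, D'$ that differ precisely in $i$'s record, so the relevant change in $f$ is controlled by $\Delta_i$ rather than by the worst-case $\max_j \Delta_j$.

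For the Laplace Mechanism, I would consider $\mathcal{M}(D) = f(D) + \Lap(b)$ and fix any neighbors $D, D'$ differing in individual $i$'s entry. The ratio of output densities at an arbitrary point $z$ is
\[
\frac{\exp(-|z - f(D)|/b)}{\exp(-|z - f(D')|/b)} \leq \exp\!\left( |f(D) - f(D')| / b \right),
\]
and by definition of individual sensitivity $|f(D) - f(D')| \leq \Delta_i$. The algebraic observation above then upper bounds this ratio by $\exp(\epsilon_i)$, which is exactly the personalized differential privacy guarantee owed to individual $i$.

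For the Exponential Mechanism with utility $u$, I would sample $r$ with probability proportional to $\exp(u(D,r)/(2b))$ and follow the classical two-factor decomposition of the output probability ratio between neighbors. Fixing neighbors that differ in individual $i$'s entry, the numerator factor is bounded by $\exp(\Delta_i / (2b))$ using the individual sensitivity bound at $r$, and the ratio of normalizing constants is bounded by the same quantity via a termwise application of the individual sensitivity bound inside the sum. Multiplying the two factors yields a probability ratio of at most $\exp(\Delta_i / b) \leq \exp(\epsilon_i)$.

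The main obstacle is not really technical but notational: one must state cleanly that individual sensitivity $\Delta_i$ is the appropriate quantity to invoke whenever the neighbor pair under consideration differs only in individual $i$'s data, so that the standard one-line inequalities at the heart of the Laplace and Exponential proofs sharpen from a uniform $\Delta$-bound to a per-individual $\Delta_i$-bound. Once this is articulated, each proof is essentially a one-line modification of its classical counterpart, and Remark \ref{rem.expo} can then observe that the same reduction goes through for any $\epsilon$-differentially private mechanism whose privacy analysis relies only on a bound of the form $|f(D) - f(D')| \leq \Delta$ between neighbors.
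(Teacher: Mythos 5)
Your proposal is correct and follows essentially the same route as the paper: fix $i$-neighbors $D, D'$, run the standard Laplace/Exponential privacy calculation with scale $b = \max_j \Delta_j/\epsilon_j$, invoke $|g(D) - g(D')| \leq \Delta_i$ (since the pair differs only in $i$'s entry), and conclude $\Delta_i/b \leq \epsilon_i$. Your cleaner $b$-notation actually sidesteps a small typo in the paper's proof of Proposition \ref{prop.laplace}, where the density is written with $\max_j\{\epsilon_j/\Delta_j\}$ instead of the correct $1/b = \min_j\{\epsilon_j/\Delta_j\}$; only with the $\min$ does the final inequality $\min_j\{\epsilon_j/\Delta_j\}\cdot\Delta_i \leq \epsilon_i$ hold, and your phrasing $\Delta_i/b \leq \epsilon_i$ makes this correct chain explicit.
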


As a result, it is no longer necessarily optimal to set the individual sensitivity parameters $\{\Delta_i\}$ in our \sensitivity~to be equal, but instead set them according to the given $\{\epsilon_i\}$ privacy parameters towards the goal of having each $\Delta_i(g) / \epsilon_i$ roughly equal.
This extends the interest in our \sensitivity~beyond the context of dealing with worst-case sensitivity being much greater than average sensitivity.

For example, consider a well-behaved function where $\{\Delta_i\} = \Delta$, and $\{\epsilon_i\} = \epsilon$ for all $i$ except for some individual $j$ where $\epsilon_{j} = \epsilon/2$. Under this situation it may instead be optimal to halve the individual sensitivity of $j$, which will allow adding half as much noise while only incurring a small additive error by restricting the sensitivity of just one person.

In addition, the error bounds from our general procedure allow for the intuitive fact that increasing any individual sensitivity will increase the accuracy of our preprocessing step for databases including that individual.
We note that when trying to preserve the fraction $\Delta_i(g) / \epsilon_i$, any increase in $\epsilon_i$ (reduced privacy for individual $i$) will allow us to increase our $\Delta_i$ parameter, improving accuracy as desired.
In this way, our \sensitivity~is able to fully take advantage of the heterogeneous $\epsilon_i$ in the variant definition of personalized privacy, and is the first to give accuracy bounds that increase/decrease independently with respect to each $\epsilon_i$.

We believe that these accuracy guarantees can be of further interest in the context of markets for privacy, where individuals sell their data to an analyst and demand different amounts of privacy, represented by their respective $\epsilon_i$.
The trade-off between privacy and accuracy is naturally formalized in these markets through the analyst's budget for procuring accurate estimates of population statistics.
Applying our \sensitivity~to achieve individualized privacy guarantees will allow the analyst to more optimally balance these trade-offs because the accuracy will respond proportionally to changes in privacy for each individual.

\subsubsection*{Higher-Dimensional Extensions for $\ell_1$ Sensitivity}

Our \sensitivity~was only defined for 1-dimensional $f: \mD \to \R$.  We also consider the setting where $f: \mD \to \R^d$.
We note that our \sensitivity~could instead be given parameters $\{\Delta_i\}$ where $\Delta_i = (\Delta_{i,1},...,\Delta_{i,d})$ has different sensitivity parameters for each dimension of the function.
We could then apply our \sensitivity~to each dimension independently and would achieve the corresponding bounds on sensitivity.
However, this approach would require adding noise independently to each dimension when applying a differentially private mechanism on the \sensitivity.
Instead, we would like to only require bounds on the $\ell_1$ sensitivity of our constructed function.

We give a natural extension of our \sensitivity~to higher dimensions, and show that the accuracy guarantees continue to hold in $\ell_1$-distance when $f$ is 2-dimensional (Theorem \ref{thm:main_higher_dim}).  We also show that this construction fails to extend to higher dimensions because a key fact about the intersection of $\ell_1$ balls only holds in 1 and 2 dimensions.



\subsection{Related Work}

Our work touches upon several areas of interest.  We first discuss previous work on dealing with outlying databases within the data universe, and then discuss previous work on personalized differential privacy and its use within the markets for privacy literature.

\subsubsection*{Worst-case vs average-case sensitivity}

Instance-specific noise for dealing with worst-case sensitivity was first introduced in \cite{NRS07}, where they considered adding noise proportional to \emph{local sensitivity}.
In order to avoid leaking too much information through noise added by local sensitivity, \cite{NRS07} constructed a \textit{smooth sensitivity} metric that minimized the instance-specific noise while still ensuring differential privacy.
They further showed that smooth sensitivity could be efficiently computed and utilized for a variety of important functions for which average sensitivity was much smaller than \textit{global sensitivity}. However, for some functions computing smooth sensitivity was NP-hard or even uncomputable, which inspired the introduction of Sample-and-Aggregate, a technique that preserved privacy and was efficient on all functions with bounded range and for sufficiently large databases.
The general idea was to approximate the function with random subsets of the given database in order to impose stronger bounds on the sensitivity of this approximation.
Combining this with smooth sensitivity allowed for strong error guarantees under the assumption that random subsets of the database often well-approximated the full database.

In order to avoid some of these assumptions, an alternate framework, Propose-Test-Release, was provided in \cite{DL09}, which also heavily relied on the notion of local sensitivity.
In particular, their framework would check if a given database was ``far away" from an outlier, and only release an accurate estimate of the output under this specific circumstance, while outputting $\bot$ (meaning {\sc Null}) otherwise. 
Furthermore, this algorithm would define the outlying databases by explicitly setting the allowable upper bound on local sensitivity.
They show how to implement this framework efficiently for several important functions, and give strong error guarantees when the mechanism does not output $\bot$.

Both of these frameworks relied upon local sensitivity, which is still a worst-case metric.  It is possible for most databases to have high local sensitivity while still having small average sensitivity.
To remedy this issue, previous work instead considered fitting the original function to one with global sensitivity closer to the average sensitivity. 
This preprocessing step can largely be thought of as forcing the output of outlying databases to be closer to that of the well-behaved databases.
This procedure then fits in the general notion of Lipschitz extensions.
Informally, Lipschitz extensions show that there always exists an extension of a smooth function restricted to a subspace to the entire metric space.
By considering ``smoothness'' in the context of differential privacy to be the sensitivity of the function, previous work generally considers the restricted subspace to be the well-behaved databases. 

%
%

Lipschitz extensions were first implicitly used in \cite{KKRS13} under the context of \emph{node} differential privacy. 
This work considered restricting the maximum degree of graphs for outputting a variety of graph statistics in bounded-degree graphs. 
This work was then extended in \cite{RS16} which gave efficient Lipschitz extensions for higher-dimensional functions on graphs such as degree distribution. 
In this work, \cite{RS16} further utilize Lipschitz extensions for a generalization of the exponential mechanism.  Lipschitz extensions were also considered in \cite{BBDO13} were the goal was to achieve a \emph{restricted sensitivity} under a certain hypothesis of the database universe and extending to the entire data universe with this global sensitivity constraint. While this procedure was in general computationally inefficient, \cite{BBDO13} gave efficient versions for subgraph counting queries and local profile queries.

Our technique of considering only strictly smaller neighboring databases is related to a technique used to achieve differential privacy over graphs.  The \emph{down sensitivity} \cite{RS15} (also called \emph{empirical global sensitivity} in \cite{CZ13}) of a function at a graph $G$ is the global sensitivity of the function when restricted to the space of all subgraphs of $G$.  That is, it is the maximum change in the function's value between any two neighboring subgraphs of $G$.  Similar to our work, this requires checking sensitivity on a smaller number of neighboring databases, and can allow less noise to be added to analysis on databases with small down sensitivity. Through this lens, our construction of \sensitivity~can be viewed as ensuring that all databases have low down sensitivity. However, an important distinction between these two results is that down sensitivity considers all pairs of neighboring subgraphs of $G$, which, for example, may be the empty graph and a single node for a large graph $G$.  To contrast, at each recursive step of our algorithm, we only consider only smaller neighbors of the current database, i.e., with one entry removed.  This refined analysis means that a database might have large down sensitivity, and our \sensitivity~can still be accurate

\subsubsection*{Personalized privacy and markets for privacy}



We show how our Sensitivity-Preprocessing framework can be applied to \emph{personalized differential privacy}, where each user in the database has her own privacy parameter $\eps_i$.  This definition was first introduced by \cite{GR13}, in the context of purchasing data from privacy-sensitive individuals.  A subsequent line of work on market design for private data \cite{CCKMV13, NOS12, NST12, LR12, FL12, GR13, GLRS14, CLR+15, CIL15, WFA15, CPWV16} leveraged personalized privacy guarantees to purchase data with different privacy guarantees from individuals with heterogeneous privacy preferences.  The vast majority of this work focused on the market design problem of procuring data, and not on the differentially private algorithms that provided personalized privacy guarantees.  \cite{CLR+15} gave a technique for achieving personalized privacy for linear functions by reweighting each person's data inversely proportional to their privacy guarantee. Unfortunately, this reweighting technique does not extend beyond linear functions.  \cite{CCKMV13} proposed an even stronger notion of personalized privacy, that was both personalized and data-dependent, but did not give any algorithmic techniques to satisfy this definition.

Several other results gave mechanisms specific to personalized differential privacy by randomly keeping each individuals data in the database with probability proportional to their respective $\epsilon_i$ \cite{JYC15, AGK17, LXJJ17}.
However, they are unable to provide corresponding error guarantees with such procedures for general functions. \cite{AKZ+17} gave a technique for providing two-tiered personalized privacy guarantees.  Some users received differential privacy and some users received a stronger guarantees of \emph{local differential privacy}, where the users do not trust the data analyst to see their true data.

Finally, there is a small body work on high probability privacy guarantees and average-case privacy guarantees \cite{BLR08, CLN+16, BF16}.  This work addresses a very different problem than we study here.  These papers assume that databases are sampled according to some distribution over the data universe, and provide high probability guarantees with respect to the sampling distribution, allowing a failure of either privacy or accuracy on some set of unlikely databases.  To contrast, we assume that databases are fixed, not randomly sampled.  We provide privacy and accuracy guarantees that depend on the well-behavedness of a given function over the data universe, and our guarantees hold everywhere in the data universe.

\subsection{Organization}
In Section~\ref{s.prelims}, we introduce some of the notation and basic definitions that will be used throughout the paper.
In Section~\ref{sec:mainAlgo}, we introduce our \sensitivity~and prove its general accuracy guarantees.
In Section~\ref{s.compute}, we give optimality and hardness guarantees for our sensitivity-preprocessing procedure.
In Section~\ref{sec:efficient_examples}, we show that several important functions can be efficiently implemented in our framework, such as mean, median, maximum, minimum, and we also give strong error guarantees on the implementation of mean.
In Section~\ref{sec:variance_algo}, we efficiently implement our framework for variance and give corresponding error guarantees.
In Section~\ref{s.personal}, we prove several useful facts regarding individual sensitivity and the variant definition of personalized differential privacy, and show how our framework can be very useful in this context.
In Section~\ref{sec:higher_dimensions}, we consider a natural extension of our algorithm that bounds a function's sensitivity in the $\ell_1$ metric for 2 dimensions.


\section{Preliminaries}\label{s.prelims}


We introduce the standard notion of differential privacy and the corresponding global sensitivity metric.  We say that two databases are \emph{neighboring} if they differ in at most one entry.

\begin{definition}[Differential privacy \cite{DMNS06}]\label{def.dp}
	A mechanism $\cM: \mathcal{D} \rightarrow \mathcal{R}$ is \emph{$\epsilon$-differentially private} if for every pair of neighboring databases $D, D' \in \mathcal{D}$, and for every subset of possible outputs $\mathcal{S} \subseteq \mathcal{R}$,
	\[ \Pr[\cM(D) \in \mathcal{S}] \leq \exp(\epsilon)\Pr[\cM(D') \in \mathcal{S}]. \]
\end{definition}

\begin{definition}[Global Sensitivity]\label{def.globalsens}
	The \emph{global sensitivity} of a function $f: \cD \to \mathbb{R}^d$ is:
	\[ \Delta f = \max_{D, D', \; neighbors} \norm{ f(D) - f(D') }_1. \]
\end{definition}

Our result is primarily concerned with tailoring a more refined version of global sensitivity, for which we will need more specific notation for neighboring databases. 
In particular, we will consider the data universe $\mD$ to be composed of (a possibly infinite) collection of individuals, where $x_i$ will denote the data of individual $i$. Any database $D \in \mD$ is then composed of the data of some finite subset of individuals $I$, so $D = \{x_i: i \in I\}$. For ease of notation, we will often assume that $D = (x_1,...,x_n)$. Further, if individual $i$'s data is contained in database $D$, then we will consider $D - x_i$ to be the database with individual $i$'s data removed. Similarly, if $i$'s data is not included in database $D$, then we consider the database $D + x_i$ to be the database with $i$'s data added. We will also sometimes use $i \in D$ to denote that individual $i$'s data is included in database $D$.
Finally, we also assume that for any $D \in \mD$, if $D' \subset D$ is non-empty, then $D' \in \mD$.

With this notation, we introduce the notion of individual sensitivity that is the maximum change in output that is possible by adding individual $i$'s data.

\begin{definition}[Individual Sensitivity]
The \emph{individual sensitivity} of a function $f:\mathcal{D} \rightarrow \mathbb{R}^d$ with respect to $i$ is:
		\[
		\Delta_i(f) \defeq \max_{x_i,\{D: i\notin D\}}\norm{f(D) - f(D + x_i)}_1.
		\]
We further let $\{\Delta_i(f)\}$ denote the individual sensitivities of $f$ to all individuals.
\end{definition}

For reference, we also provide the definition of local sensitivity that will not be used in this work, but was referred to extensively in related works.

\begin{definition}[Local Sensitivity]\label{def.localsens}
	The \emph{local sensitivity} of a function $f: \cD \to \mathbb{R}^d$ at database $D \in \cD$ is:
	\[ \Delta_D f = \max_{D': \; \text{neighbor of $D$}} \norm{ f(D) - f(D') }_1. \]
\end{definition}


\section{Sensitivity-Preprocessing Function}\label{sec:mainAlgo}


In this section we formally define our Sensitivity-Preprocessing Function, give the corresponding constructive algorithm for accessing this function, and prove instance-specific error bounds between the original function and our \sensitivity.
Recall that our primary goal is to give an alternate schema for fitting a general function to a sensitivity bounded function.
More specifically, suppose we are given a function $f: \mD \to \bR$ and desired sensitivity parameters $\{\Delta_i\}$, and want to produce another function $g: \mD \to \bR$ that closely approximates $f$, and has individual sensitivity at most $\Delta_i$ for all $i$.

The \sensitivity~will ultimately be defined as a simple greedy recursion that builds up from the empty set.
The key insight is that we can take advantage of the particular metric space structure of databases such that defining our function on a new database only depends on the subsets of that database.
We first use the fact that while each database could have infinitely many neighboring databases, it only has a finite amount of neighbors with strictly fewer entries.
This will allow us to only consider the constraints incurred by each $D - x_i$ for some database $D$.
In particular, for each $g(D - x_i)$ it is allowable to place $g(D)$ anywhere in the region $[g(D - x_i) - \Delta_i, g(D - x_i) + \Delta_i]$.
Intersecting each of these intervals will give the feasible region for $g(D)$, and we will greedily chose the point closest to $f(D)$.
We then use the fact that any two neighbors of a strictly larger database must also be neighbors of a strictly smaller database.
This will ensure that the intersection of all feasible intervals is non-empty, even under our greedy construction.

As a result, the \sensitivity~$g$ is defined inductively starting from the empty set, and new data points are added one by one.  The algorithm ensures that the value of $g$ changes by at most $\Delta_i$ when new data point $x_i$ is added, while minimizing the distance $|f(D) - g(D)|$ at every point.


\begin{definition}[Sensitivity-Preprocessing Function]\label{def:1D_sensitivity_function}
	Given any function $f: \mathcal{D} \rightarrow \R$ and non-negative parameters $\{\Delta_i\}$, we say that a function $g: \mathcal{D} \rightarrow \R$ is a \sensitivity~of $f$ with parameters $\{\Delta_i\}$, if $g(\emptyset) = f(\emptyset)$\footnote{See Remark \ref{rem.empty} for a discussion of how to initialize $g(\emptyset)$ if $f(\emptyset)$ is not well-defined.} and 
	
	\[g(D)= 
	\begin{cases}
	\upper(D),& \text{if } \upper(D)\leq f(D)\\
	\lowerbound(D),& \text{if } \lowerbound(D)\geq f(D)\\
	f(D),              & \text{otherwise}
	\end{cases} \]
	where $\upper(D) = \min_{j \in D} \{g(D- x_j) + \newSens_j\}$ and $\lowerbound(D) = \max_{j \in D} \{g(D- x_j) - \newSens_j\}$.
	
	If $\{\Delta_i\} = \Delta$ for some non-negative $\Delta$, then we say that $g$ is a \sensitivity~of $f$ with parameter $\Delta$.
	
\end{definition}

The generalization from global sensitivity to individual sensitivities is critical to our personalized privacy results in Section \ref{s.personal}.  This generalization does not increase the running time, and it is easy to see that this yields global sensitivity equal to the maximum individual sensitivity.

\subsection{Algorithmic Construction of Sensitivity-Preprocessing Function}\label{subsec:one-dim_algo}

The algorithm \fitting~is presented in Algorithm \ref{algo.fitting}.  It begins by initializing $g(\emptyset)=f(\emptyset)$, and building up $g$ to be defined on databases of increasing size.  At each step, the algorithm ensures that no sensitivity constraints are violated and chooses the best value for $g(D)$ subject to those constraints.

For a given database $D$, $\upper(D)$ is the maximum value $g(D)$ can take without letting it increase too much from a smaller database (violating an individual sensitivity parameter). Similarly, $\lowerbound(D)$ is the minimum value we can make $g(D)$ without letting it decrease too much from a smaller database.  We then define $g(D)$ to be the value in $[\lowerbound(D),\upper(D)]$ that is the closest to $f(D)$.






\begin{algorithm}[h!]
	\caption{Sensitivity-Preprocessing Function Algorithm : \fitting($f:\mD \to \bR, \{\Delta_i\}, D$)}
	\begin{algorithmic}
		\State \textbf{Input:} Function $f:\mD \to \bR$, individual sensitivity bounds $\{\Delta_i\}$, and database $D$ of size $n$.
		\State \textbf{Output:} $g(D)$, where $g$ satisfies individual sensitivity $\Delta_i$ for all $i$.
		\State Initialize $g(\emptyset) = f(\emptyset)$
		\For {k=1, \ldots, n}
		\For {every database $D' \subseteq D$ of size $k$}
		\State Set $\upper(D') = \min_{i \in D'} \{g(D'- x_i) + \newSens_i\}$
		\State Set $\lowerbound(D') = \max_{i \in D'} \{g(D'- x_i) - \newSens_i\}$
		\State Set $g(D')= 
\begin{cases}
\upper(D'),& \text{if } \upper(D')\leq f(D')\\
\lowerbound(D'),& \text{if } \lowerbound(D')\geq f(D')\\
f(D'),              & \text{otherwise}
\end{cases}$
		\EndFor
		\EndFor
		\State Output $g(D)$		
	\end{algorithmic}\label{algo.fitting}
\end{algorithm}

%
%
%
%
%
%

%

This construction of $g$ ensures that  the individual sensitivity of $g$ does not exceed $\newSens_i$ for each $i$.  We can then use these bounds on the sensitivity of $g$ to calibrate the scale of noise that must be added to ensure differential privacy.  In the special case that $\Delta_i = \Delta$ for all $i$, then the global sensitivity of $g$ is $\Delta$, and we can add noise that scales with $O(\frac{\Delta}{\eps})$ to achieve $\eps$-differential privacy.  Note that this guarantee holds even if $f$ has unbounded sensitivity.  In Section \ref{s.personal}, we show how to satisfy differential privacy under heterogeneous $\Delta_i$.

\begin{remark}\label{rem.empty}
Our algorithm is initialized using $f(\emptyset)$, and thus centers $g$ around this point.  In the case that $f(\emptyset)$ is undefined---for example, when $f$ computes the mean of a database---the analyst should initialize $g(\emptyset)$ using some domain knowledge or prior beliefs on reasonable centering of the function.  If no prior knowledge is available, the analyst can sample multiple databases and evaluate $f$ on the samples to estimate a reasonable centering point for $g(\emptyset)$.  The sensitivity bounds will still hold regardless of the centering of $g$, but accuracy may suffer if $g(\emptyset)$ is set to be far from most values of $f$. 
\end{remark}

In addition to sensitivity guarantees and runtime analysis, we also provide an instance-specific error bound. 
Unfortunately this bound will not be in a clean form, but it does capture the intuitive fact that if we increase any $\Delta_i$ then it is likely that accuracy also increases.

However, we are able to obtain a bit more intuition on our instance-specific error bounds, and can consider them in a similar context to local sensitivity.
Given that our \sensitivity~defines a database recursively in terms of its subsets, it makes sense that our error guarantees will be in terms of these subsets.
These error bounds can then be seen as capturing the sensitivity between the neighboring subsets of $D$. 
Analogously to local sensitivity, we will have larger errors for databases with high sensitivity between the neighboring subsets.



\begin{theorem}\label{thm:main_1d}
	Given $T(n)$ time query access to an arbitrary function $f:\mathcal{D} \rightarrow \mathbb{R}$, and sensitivity parameters $\{\Delta_i\}$, \fitting~provides $O((T(n) + n)2^n)$ time access to the \sensitivity~$g:\mathcal{D} \rightarrow \mathbb{R}$ such that $\Delta_i(g) \leq \newSens_i$ for all $i$.  Further, for any database $D = (x_1,...,x_n)$,
	\[ 
	\abs{f(D) - g(D)} \leq \max_{\sigma \in \sigma_D}\sum_{i=1}^{|D|} \max \{ \abs{f(D_{\sigma(<i)} + x_{\sigma(i)}) - f(D_{\sigma(<i)})} - \newSens_{\sigma(i)}, 0 \},
	\]
where $\sigma_D$ is the set of all permutations on $[n]$, and $D_{\sigma(<i)} = (x_{\sigma(1)},...,x_{\sigma(i-1)} )$ is the subset of $D$ that includes all individual data in the permutation before the $i$th entry.
\end{theorem}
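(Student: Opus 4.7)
\textbf{Proof plan for Theorem \ref{thm:main_1d}.}

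The plan is to establish the three parts of the theorem (runtime, sensitivity, and error bound) by induction on the size of the input database, exploiting the recursive structure of \fitting. The runtime bound is immediate: the outer loops iterate over all $2^n$ subsets of $D$, and for each subset of size $k$ the algorithm performs $k$ comparisons against $g$-values that have already been computed, plus one query to $f$ that costs $T(n)$, yielding total work $O((T(n)+n)2^n)$.

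For the sensitivity bound, I would prove by induction on $|D|$ that $|g(D) - g(D - x_i)| \leq \Delta_i$ for every $i \in D$. The definition of $\upper(D)$ and $\lowerbound(D)$ immediately gives the bound provided that $\lowerbound(D) \leq \upper(D)$, so the only real content is to verify that the feasible interval is non-empty. For any pair $i, j \in D$, both $D - x_i$ and $D - x_j$ are neighbors of $D - x_i - x_j$, so by the inductive hypothesis applied one level down,
\[
g(D - x_i) \leq g(D - x_i - x_j) + \Delta_j \quad \text{and} \quad g(D - x_j) \geq g(D - x_i - x_j) - \Delta_i,
\]
which gives $g(D - x_i) - \Delta_i \leq g(D - x_j) + \Delta_j$. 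Taking the max over $i$ and the min over $j$ yields $\lowerbound(D) \leq \upper(D)$, which is the key step. This argument is exactly the use of the two metric-space properties highlighted in the overview.

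For the error bound, I would also induct on $|D|$, with the $|D|=0$ case trivial since $g(\emptyset) = f(\emptyset)$. For general $D$, the easy case is $f(D) \in [\lowerbound(D), \upper(D)]$, where $g(D) = f(D)$. Otherwise, assume WLOG that $f(D) > \upper(D)$ and let $j$ attain the minimum in the definition of $\upper(D)$, so $g(D) = g(D - x_j) + \Delta_j$. The central step is to show
\[
|f(D) - g(D)| \leq |f(D - x_j) - g(D - x_j)| + \max\bigl\{|f(D) - f(D - x_j)| - \Delta_j,\, 0\bigr\}.
\]
This follows by writing $|f(D) - g(D)| = f(D) - f(D - x_j) - \Delta_j + (f(D - x_j) - g(D - x_j))$ and splitting on the sign of $f(D - x_j) - g(D - x_j)$: in the case $f(D - x_j) \geq g(D - x_j)$ the two non-negative pieces separate cleanly, and in the case $f(D - x_j) < g(D - x_j)$ the subtracted error only helps, while the positivity of $|f(D) - g(D)|$ forces $f(D) > f(D - x_j)$ so the first term's sign is controlled.

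Finally, to extract the stated bound, I would apply the inductive hypothesis to $D - x_j$ to obtain an optimal permutation $\sigma^*$ on $D - x_j$, and extend it to a permutation $\sigma$ on $D$ by setting $\sigma(n) = j$ and $\sigma(i) = \sigma^*(i)$ for $i < n$. Then $D_{\sigma(<n)} = D - x_j$, and the one-step inequality above shows the bound for $\sigma$, which is dominated by the max over all permutations. The main obstacle is the case analysis in the one-step inequality, together with the book-keeping to connect the chosen index $j$ (which depends on whether $f(D)$ exceeds $\upper(D)$ or lies below $\lowerbound(D)$) with a valid permutation in $\sigma_D$; everything else is routine induction.
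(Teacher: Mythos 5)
Your proposal is correct and follows essentially the same route as the paper: the sensitivity bound is established exactly via the pairwise non-empty-interval argument through the common subset $D - x_i - x_j$ (this is the paper's Lemma~\ref{lem:1d_sensitivity_guarantees}), and the error bound is obtained by the same one-step recursion on the tight constraint followed by induction (the paper's Lemma~\ref{lem:1D_error_bounds}). The only cosmetic difference is that the paper packages the one-step inequality as $\abs{f(D) - g(D)} \leq \max_{x_i \in D}\{\abs{f(D) - g(D-x_i)} - \Delta_i, 0\}$ and then applies the triangle inequality, whereas you single out the minimizing index $j$ and do a sign case-split directly; both yield the identical recursion.
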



\begin{remark}\label{rem:higher_dim}
	We can easily extend this theorem to $f: \mD \to \R^d$ by running \fitting~on each dimension independently in terms of sensitivity parameters and error bounds
	Specifically, suppose we were instead given parameters $\{\Delta_i\}$ where $\Delta_i = (\Delta_{i,1},...,\Delta_{i,d})$ has different sensitivity parameters for each dimension of the function.
	We could then consider the function restricted to a single dimension $d'$, and run \fitting~on this projection with sensitivity parameters $\{\Delta_{i,d'}\}$.
	This will give the desired sensitivity bounds in that single dimension, then running \fitting~on all dimensions and composing across dimensions will give the appropriate \sensitivity~in $d$ dimensions.  In Section \ref{sec:higher_dimensions},  we consider extensions to higher dimensions where each dimension is not treated independently.
	
\end{remark}



\subsection{Sensitivity-Preprocessing Function Correctness}

We first prove that the \sensitivity~given in Definition~\ref{def:1D_sensitivity_function} both meets the individual sensitivity criteria and is also defined on all databases.

\begin{lemma}\label{lem:1d_sensitivity_guarantees}
For any function $f:\mathcal{D} \rightarrow \mathbb{R}$ and non-negative sensitivity parameters $\{\Delta_i\}$, if $g:\mathcal{D} \rightarrow \mathbb{R}$ is defined according to Definition~\ref{def:1D_sensitivity_function}, then $g$ is defined on all databases $D \in \mathcal{D}$ and $\Delta_i(g) \leq \newSens_i$ for all $i$.
\end{lemma}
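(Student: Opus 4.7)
The plan is to establish both conclusions by strong induction on the database size $n = |D|$. The inductive hypothesis I would carry is that for every $D$ with $|D| \leq n-1$, the value $g(D)$ is well-defined and $|g(D) - g(D - x_i)| \leq \Delta_i$ for every $i \in D$. This joint claim suffices, because $\Delta_i(g)$ is the supremum of $|g(D+x_i) - g(D)|$ over databases $D$ not containing the $i$-th individual, which is exactly the same quantity re-indexed.

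The base case $n = 0$ is immediate from the initialization $g(\emptyset) = f(\emptyset)$, with no sensitivity constraint to check. For the inductive step, the case analysis in Definition~\ref{def:1D_sensitivity_function} always places $g(D)$ in the interval $[\lowerbound(D), \upper(D)]$ whenever this interval is non-empty---it selects $\upper(D)$, $\lowerbound(D)$, or $f(D)$ according to where $f(D)$ falls. Assuming non-emptiness, the sensitivity bound is essentially tautological: by definition $\upper(D) \leq g(D - x_i) + \Delta_i$ and $\lowerbound(D) \geq g(D - x_i) - \Delta_i$ for each $i \in D$, so any $g(D)$ in the feasible interval automatically satisfies $|g(D) - g(D - x_i)| \leq \Delta_i$. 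Hence the only substantive step is to show $\lowerbound(D) \leq \upper(D)$.

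The main obstacle---and the only place where the metric structure of the data universe genuinely enters---is pairwise consistency of the defining constraints: for arbitrary $j, k \in D$, I need to show
\[
g(D - x_j) - \Delta_j \;\leq\; g(D - x_k) + \Delta_k.
\]
The key idea is the second structural property of the data universe highlighted in Section~\ref{sec:mainAlgo}: when $|D| \geq 2$ and $j \neq k$, both $D - x_j$ and $D - x_k$ share the strictly smaller database $D' = D - x_j - x_k$ as a one-entry removal, and $D' \in \mathcal{D}$ either by the stated closure assumption on $\mathcal{D}$ or as the empty set. Applying the inductive sensitivity bound at the two databases of size $n-1$ yields $g(D - x_j) \leq g(D') + \Delta_k$ and $g(D - x_k) \geq g(D') - \Delta_j$, and chaining these gives
\[
g(D - x_j) - \Delta_j \;\leq\; g(D') + \Delta_k - \Delta_j \;\leq\; g(D - x_k) + \Delta_k,
\]
exactly as required. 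The remaining edge cases $j = k$ and $|D| = 1$ both reduce to checking that a single interval $[g(D - x_i) - \Delta_i, g(D - x_i) + \Delta_i]$ is non-empty, which holds trivially from $\Delta_i \geq 0$. Once pairwise consistency is in hand, taking a max over $j$ and a min over $k$ delivers $\lowerbound(D) \leq \upper(D)$, completing the induction and hence the lemma.
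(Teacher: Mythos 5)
Your proposal is correct and takes essentially the same approach as the paper: both reduce the claim to showing $\lowerbound(D) \leq \upper(D)$ by induction, and both establish the required pairwise inequality $g(D - x_j) - \Delta_j \leq g(D - x_k) + \Delta_k$ by passing through the common strictly smaller database $D' = D - x_j - x_k$ and applying the inductive sensitivity bound at $D - x_j$ and $D - x_k$. The only cosmetic difference is that the paper fixes $j,k$ as the extremizers of $\lowerbound$ and $\upper$ and phrases the inductive hypothesis as $\lowerbound \leq g \leq \upper$, whereas you prove the inequality for arbitrary $j,k$ carrying the sensitivity bound directly; these are interchangeable.
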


\begin{proof}
	It suffices to show that for any $D \in \mathcal{D}$ with at least one entry and for any $x_i \in D$, we have $g(D - x_i) - \newSens_i \leq g(D) \leq g(D - x_i) + \newSens_i$. By our construction, this must always be true if $\lowerbound(D) \leq g(D) \leq \upper(D)$ for any $D \in \mathcal{D} \setminus \{\emptyset \}$. Our construction of $g$ will always place $g(D) \in [\lowerbound(D),\upper(D)]$ if the interval is non-empty, so it suffices to show that for all $D \in \mathcal{D} \setminus \{ \emptyset \}$, 
	\[
	 \lowerbound(D) \leq \upper(D).
	\]

	We will prove this by induction starting with $D = x_i$ with one entry. Therefore, $\upper(D) = f(\emptyset) + \newSens_i$ and $\lowerbound(D) = f(\emptyset) - \newSens_i$, which implies our desired inequality because $\Delta_i \geq 0$.
	
	We now consider an arbitrary $D$ and assume that our claim holds for all $D' \subset D$.
	Let $x_k \in D$ minimize $g(D - x_i) + \newSens_i$ over all $x_i \in D$, so 
	\[
	\upper(D) = g(D - x_k) + \Delta_k,
	\]
	and let $x_j \in D$ maximize $g(D - x_i) - \newSens_i$ over all $x_i \in D$, so 
	\[
	\lowerbound(D) = g(D - x_j) - \newSens_j.
	\] 
	
	If $k = j$ then the desired inequality immediately follows. Otherwise we consider $D - x_k - x_j$. By our inductive hypothesis, we know $\lowerbound(D - x_k) \leq g(D - x_k) \leq \upper(D - x_k)$, so 
	\[
	g(D - x_k) \geq \lowerbound(D - x_k) \geq g(D - x_k - x_j) - \newSens_j.
	\] 
	Similarly, we have $\lowerbound(D - x_j) \leq g(D - x_j) \leq \upper(D - x_j) $, so 
	\[
	g(D - x_j) \leq \upper(D - x_j) \leq g(D - x_k - x_j) + \newSens_k.
	\]
	
	Combining these inequalities gives $g(D - x_k) + \newSens_j \geq g(D - x_j) - \newSens_k$, which implies our desired result.
	
\end{proof}

\subsection{Error Bounds for Sensitivity-Preprocessing Function}

We now prove the desired instance-specific error bounds between the original function and our \sensitivity.

\begin{lemma}\label{lem:1D_error_bounds}
	For any function $f:\mathcal{D} \rightarrow \mathbb{R}$ and non-negative sensitivity parameters $\{\Delta_i\}$, if $g:\mathcal{D} \rightarrow \mathbb{R}$ is defined according to Definition~\ref{def:1D_sensitivity_function}, then for any database $D \in \mathcal{D}$,
	\[ 
	\abs{f(D) - g(D)} \leq \max_{\sigma \in \sigma_D}\sum_{i=1}^{|D|} \max \{ \abs{f(D_{\sigma(<i)} + x_{\sigma(i)}) - f(D_{\sigma(<i)})} - \newSens_{\sigma(i)}, 0 \},
	\]
	where $\sigma_D$ is the set of all permutations on $[n]$, and $D_{\sigma(<i)} = (x_{\sigma(1)},\ldots,x_{\sigma(i-1)} )$ is the subset of $D$ that includes all individual data in the permutation before the $i$th entry.
	
\end{lemma}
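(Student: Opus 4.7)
The plan is to prove the bound by induction on $|D|$, mirroring the recursive structure used to define $g$ in Definition~\ref{def:1D_sensitivity_function}. The base case $|D| = 0$ holds trivially: the algorithm initializes $g(\emptyset) = f(\emptyset)$, so the left-hand side is $0$ and the right-hand side is an empty sum.

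For the inductive step, consider a database $D$ with $|D| = n \ge 1$ and assume the bound holds for every strict subset of $D$. The algorithm sets $g(D)$ to one of three values. If $g(D) = f(D)$, the bound is immediate. Otherwise $g(D) = \upper(D)$ or $g(D) = \lowerbound(D)$; by symmetry I will focus on the case $g(D) = \upper(D) = g(D - x_k) + \Delta_k$ for the minimizing index $k \in D$ from Definition~\ref{def:1D_sensitivity_function}. In this case $f(D) > \upper(D)$, so $f(D) - g(D) = f(D) - g(D - x_k) - \Delta_k \ge 0$.

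The key estimate I will establish is
\[
\abs{f(D) - g(D)} \;\leq\; \max\bigl\{\,\abs{f(D) - f(D - x_k)} - \Delta_k,\, 0\bigr\} + \abs{f(D - x_k) - g(D - x_k)}.
\]
I will prove this by splitting on the sign of $f(D) - f(D - x_k)$. If $f(D) \ge f(D - x_k)$, write $f(D) - g(D - x_k) - \Delta_k = (f(D) - f(D - x_k)) + (f(D - x_k) - g(D - x_k)) - \Delta_k$, then upper bound each difference by its absolute value. If instead $f(D) < f(D - x_k)$, then $f(D) - g(D - x_k) - \Delta_k \le f(D - x_k) - g(D - x_k) - \Delta_k \le \abs{f(D - x_k) - g(D - x_k)}$, which is dominated by the right-hand side since the first $\max$-term is nonnegative. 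The $\lowerbound(D)$ case is handled identically by swapping signs and using the maximizing index $j$.

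To close the induction, I apply the inductive hypothesis to $D - x_k$ to obtain a permutation $\sigma^* \in \sigma_{D - x_k}$ on the $n-1$ remaining indices that witnesses the bound for $\abs{f(D-x_k) - g(D-x_k)}$. I then extend $\sigma^*$ to a permutation $\sigma \in \sigma_D$ by appending $\sigma(n) = k$; the extra $\max$-term in the key estimate is exactly the $i = n$ contribution of $\sigma$, so the two terms combine into a single sum over $\sigma$, which is at most the maximum over all permutations in $\sigma_D$. This completes the induction.

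The main obstacle is simply the case split in the key estimate: keeping track of the sign of $f(D) - f(D - x_k)$ relative to $\Delta_k$, and verifying that the resulting bound collapses into the clean $\max\{\,\cdot\,, 0\}$ form in every case. Everything else, including the choice to extend $\sigma^*$ by placing the pivotal index last, is a mechanical consequence of the recursive definition of $g$.
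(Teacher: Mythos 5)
Your proof is correct and follows essentially the same recursive decomposition the paper uses: identify the pivotal index $k$ at which the sensitivity constraint is tight, establish the key one-step estimate $\abs{f(D) - g(D)} \leq \max\{\abs{f(D) - f(D-x_k)} - \Delta_k, 0\} + \abs{f(D-x_k) - g(D-x_k)}$, and close the induction by appending $k$ to a witnessing permutation for $D - x_k$. The only cosmetic differences are that you base the induction at $|D| = 0$ rather than $|D| = 1$ and argue the key estimate via an explicit sign split, where the paper uses a single triangle-inequality step and states the intermediate bound as a max over all $x_i \in D$ rather than the tight index alone; both routes are valid.
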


\begin{proof}
%
	
	We will prove this claim inductively and first consider $D = x_j$ with one entry for some $j$.  We need to show 
	\[
	\abs{f(D) - g(D)} \leq \max\{\abs{f(D) - f(\emptyset)} - \newSens_j,   0\},
	\]
	which follows easily from construction of $g$. We now consider an arbitrary $D$ and assume that the claim is true for all $D' \subset D$. From our construction we claim that
	\[
	\abs{f(D) - g(D)} \leq \max_{x_i \in D} \{\abs{f(D) - g(D - x_i)} - \newSens_i, 0 \}.
	\]
	This follows from the fact that if $f(D) = g(D)$ then we must have $\abs{f(D) - g(D - x_i)} \leq \newSens_i$ for all $i$, and otherwise there must be some $x_i \in D$ such that the constraint on $g(D)$ with respect to $\Delta_i$ is tight. Using this fact we can bound $\abs{f(D) - g(D)}$ in the following way:
	\begin{align*}
	\abs{f(D) - g(D)} & \leq \max_{x_i \in D} \{\abs{f(D) - g(D - x_i)} - \newSens_i, 0 \} \\ 
	& = \max_{x_i \in D} \{\abs{f(D) - f(D - x_i) + f(D - x_i) - g(D - x_i)} - \newSens_i, 0 \} 
	\\
	& \leq \max_{x_i \in D} \{\abs{f(D) - f(D - x_i)} - \newSens_i + \abs{f(D - x_i) - g(D - x_i)},0  \}
	\\
	& \leq \max_{x_i \in D} \{\max\{\abs{f(D) - f(D - x_i)} - \newSens_i,0\} + \abs{f(D - x_i) - g(D - x_i)}  \}
	\end{align*}

We then apply the inductive hypothesis to $\abs{f(D - x_i) - g(D - x_i)}$, which immediately implies our desired bound.
\end{proof}

\subsection{Proof of Theorem~\ref{thm:main_1d}}

\begin{proof}[Proof of Theorem~\ref{thm:main_1d}]
	The individual sensitivity guarantees are given by Lemma~\ref{lem:1d_sensitivity_guarantees}, and the error bounds are given by Lemma~\ref{lem:1D_error_bounds}. It then remains to show the running time. If we assume $T(n)$ time access to $f$ for a database with $n$ entries, then because we need to query each subset of $D$, this will contribute time $O(T(n)2^n)$. Furthermore, for each subset we need to compute $\upper(D)$ and $\lowerbound(D)$ which takes $O(n)$ time for each subset. This then gives our full runtime of $O((T(n) + n)2^n)$.

\end{proof}


\section{Optimality and Hardness of Sensitivity-Preprocessing Function}\label{s.compute}

Our algorithm in Section \ref{sec:mainAlgo} took exponential time to query the \sensitivity~$g$ at each database $D$ of interest, and, while we did achieve bounds on the error incurred, their complicated formulation makes it difficult to determine whether these bounds are strong. In this section we give strong justification for our construction of the \sensitivity~in terms of both error incurred and the exponential running time for the general setting.

In Section~\ref{subsec:2-approximation_1d}  we consider the general problem of approximating an arbitrary function $f: \mathcal{D} \rightarrow \R$ with one that has individual sensitivity bounded by $\{ \Delta_i\}$.
Under the $\ell_\infty$ metric, our \sensitivity~will achieve a 2-approximation of the optimal function.
Furthermore, this 2-approximation can still be obtained when the optimal function is restricted to certain subsets of the data universe.
Informally, this will imply that on subsets which allow for small error between $f$ and a function with individual sensitivity bounded by $\{ \Delta_i\}$, our \sensitivity~will also have small error.
Due to $\ell_\infty$ being a worst-case metric, it is then natural to ask if our \sensitivity~actually still performs well on the non-worst-case databases. 
To this end, we show that our \sensitivity~is Pareto optimal, meaning that for any other function with individual sensitivity bounded by $\{ \Delta_i\}$, if it has smaller error on some database relative to our \sensitivity, then there must exist another database on which it has higher error.
 
In Section \ref{s.hardness} we show that it is NP-hard to achieve our approximation guarantees with respect to the $\ell_\infty$ metric.
We further show that it is uncomputable to do better than a 2-approximation in the $\ell_\infty$ metric, and also uncomputable to achieve even a constant approximation in any $\ell_p$ metric for $p < \infty$ which justifies our choice of metric. 
We believe that the combination of these results gives a strong indication that our \sensitivity~and corresponding exponential time construction is the best we can hope to achieve for the general problem.
 

\subsection{Optimality guarantees}\label{subsec:2-approximation_1d}

In this section we prove that our \sensitivity~achieves certain optimality guarantees.
As there are many ways in which to measure how close one function is to another, it is first necessary to be more specific about the definition of optimality we use here.
The set that we are trying to optimize over will be all functions with bounded individual sensitivity:

\begin{definition}
	Given a data universe $\mathcal{D}$ and individual sensitivity parameters $\{ \Delta_i \}$, define
	\[
	F_{\{\Delta_i\}}(\mathcal{D}) \defeq \{f:\mathcal{D} \rightarrow \R \; | \; \Delta_i(f) \leq \Delta_i, \forall i   \}.
	\]
\end{definition}

In this context, the general goal will then be to show that our \sensitivity~is close to the optimal function on this set.
Here we will consider optimal to be under the $\ell_\infty$ metric, where we want $f^* \in F_{\{\Delta_i\}}(\mathcal{D})$ to minimize the maximum difference $\abs{f(D) - f^*(D)}$ over all $D \in \mathcal{D}$.  
Our \sensitivity~achieves a 2-approximation to the optimal $f^* \in F_{\{\Delta_i\}}(\mathcal{D})$ with respect to the $\ell_{\infty}$ metric.
For unbounded sensitivity functions, the value $\abs{f(D) - f^*(D)}$ will be unbounded, so we will instead show the stronger result that this 2-approximation also holds if we restrict the data universe to a single database and its subsets.
Specifically, we show that if for certain subsets of the data universe it is possible to perfectly fit $f$ to a $\{\Delta_i\}$ individual sensitivity bounded function, then our \sensitivity~will also perfectly fit to $f$ in this subset. These guarantees are formalized in the following lemma.


\begin{lemma}\label{lem:2-approx_1d_key_lemma}
	Given any $f:\mathcal{D} \rightarrow \mathbb{R}$, let $g:\mathcal{D} \rightarrow \mathbb{R}$ be the \sensitivity~of $f$ with parameters $\{\Delta_i\}$. For any arbitrary $D \in \mathcal{D}$, define $\mathcal{D}' = \{D' \subseteq D\}$. Then,
	\[
	\max_{D' \in \mathcal{D}'} |f(D') - g(D')| \leq 2 \min_{f^* \in F_{\{\Delta_i\}}(\mathcal{D}')} \max_{D' \in \mathcal{D}'} |f(D') - f^*(D')|
	\] 	
\end{lemma}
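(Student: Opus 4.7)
The plan is to prove the strictly stronger statement that $|g(D'') - f^{*}(D'')| \leq \alpha$ for every $D'' \in \mathcal{D}'$ and every $f^{*} \in F_{\{\Delta_i\}}(\mathcal{D}')$, where $\alpha := \max_{D'' \in \mathcal{D}'} |f(D'') - f^{*}(D'')|$. Combined with the triangle inequality
\[
|f(D'') - g(D'')| \leq |f(D'') - f^{*}(D'')| + |f^{*}(D'') - g(D'')| \leq 2\alpha,
\]
this yields $\max_{D'' \in \mathcal{D}'} |f(D'') - g(D'')| \leq 2\alpha$, and taking the infimum over $f^{*} \in F_{\{\Delta_i\}}(\mathcal{D}')$ gives the lemma.

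I would prove the intermediate claim by induction on $|D''|$. The base case $D'' = \emptyset$ holds because $g(\emptyset) = f(\emptyset)$, so $|g(\emptyset) - f^{*}(\emptyset)| = |f(\emptyset) - f^{*}(\emptyset)| \leq \alpha$. For the inductive step, fix $D''$ and assume the bound for every proper subset. Since $f^{*}$ has individual sensitivity at most $\Delta_i$, for each $i \in D''$ we have $f^{*}(D'' - x_i) - \Delta_i \leq f^{*}(D'') \leq f^{*}(D'' - x_i) + \Delta_i$, and the inductive hypothesis on $D'' - x_i$ refines this to
\[
g(D'' - x_i) - \Delta_i - \alpha \leq f^{*}(D'') \leq g(D'' - x_i) + \Delta_i + \alpha.
\]
Minimizing the right-hand bound and maximizing the left-hand bound over $i \in D''$ gives $\lowerbound(D'') - \alpha \leq f^{*}(D'') \leq \upper(D'') + \alpha$. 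Combining with $|f^{*}(D'') - f(D'')| \leq \alpha$, I split into the three cases from Definition \ref{def:1D_sensitivity_function}. If $g(D'') = f(D'')$ then the bound is immediate. If $g(D'') = \upper(D'') \leq f(D'')$ then $f^{*}(D'') \leq \upper(D'') + \alpha = g(D'') + \alpha$ and $f^{*}(D'') \geq f(D'') - \alpha \geq g(D'') - \alpha$; the case $g(D'') = \lowerbound(D'') \geq f(D'')$ is symmetric.

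The main conceptual obstacle is that a direct induction on $|f(D'') - g(D'')|$ using only the recursive definition of $g$ yields a $4$-approximation rather than the desired $2$-approximation: bounding $|f(D'') - g(D'')|$ through $|f(D'' - x_k) - g(D'' - x_k)|$ would have to absorb both the sensitivity gap between $f(D'')$ and $f(D'' - x_k)$ (which costs $2\alpha + \Delta_k$ via $f^{*}$) and the inductive error on $D'' - x_k$ (another $2\alpha$), for a total of $4\alpha$. Pivoting through $f^{*}$ as an intermediate function and inducting on $|g - f^{*}|$ instead lets the sensitivity of $f^{*}$ itself pay for one copy of $\alpha$ in the recursive step, while the inductive hypothesis pays for the other, which is exactly what recovers the sharp $2$-approximation.
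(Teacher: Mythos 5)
Your proof is correct, and it takes a genuinely different route from the paper's. The paper argues by contradiction: it first shows the maximum of $|f(D')-g(D')|$ over $\mathcal{D}'$ must be attained at $D$ itself, then invokes an auxiliary lemma (Lemma~\ref{lem:exists_far_subset_database}) to produce a witness subset $\tilde{D}\subset D$ with $|f(D)-f(\tilde{D})| \geq |f(D)-g(D)| + \sum_{i\in D\setminus\tilde{D}}\Delta_i$, and concludes that any sensitivity-bounded $f^*$ must already have error at least $|f(D)-g(D)|/2$ at one of $D$ or $\tilde{D}$. You instead fix an arbitrary competitor $f^*$, set $\alpha := \max_{D''\subseteq D}|f(D'')-f^*(D'')|$, and prove the cleaner intermediate invariant $|g(D'')-f^*(D'')|\leq\alpha$ for every $D''\subseteq D$ by a direct forward induction: the sensitivity of $f^*$ plus the inductive hypothesis bounds $f^*(D'')$ within $\alpha$ of the interval $[\lowerbound(D''),\upper(D'')]$, and the greedy choice of $g(D'')$ together with $|f(D'')-f^*(D'')|\leq\alpha$ then pins $g(D'')$ within $\alpha$ of $f^*(D'')$ in each of the three branches of Definition~\ref{def:1D_sensitivity_function}. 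A triangle inequality finishes. Your approach avoids Lemma~\ref{lem:exists_far_subset_database} entirely, dispenses with the contradiction scaffolding, and yields a stronger statement---that the greedy $g$ actually tracks every admissible $f^*$ pointwise to within the optimal $\ell_\infty$ error, rather than only bounding the worst-case gap to $f$. Your closing observation about why one must pivot through $f^*$ (to avoid a $4$-approximation from a naive induction on $|f-g|$) is a nice explanation of why the intermediate invariant is chosen as it is. One minor point of rigor: you invoke $\min_{f^*}$ at the end, but what your argument directly gives is the bound for every $f^*$, hence an infimum; since the lemma's statement already writes $\min$, this matches the paper's convention and is not an issue.
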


\begin{proof}
	We will prove this inductively on the size of $D$. It is immediately true for $D = \emptyset$. We now prove for arbitrary $D$ where we assume the claim for all strict subsets of $D$. Our proof will be by contradiction, where we suppose that our claim is not true for some $D$.
	
	We first determine the database at which $|f(D') - g(D')|$ is maximized. Suppose $\arg \max_{D' \in \mathcal{D}'} |f(D') - g(D')| = \tilde{D}$ such that $\tilde{D} \subset D$. Define $\mathcal{\tilde{D}} = \{ D' \subseteq \tilde{D} \}$. Because $\tilde{D} \subset D$, it must follow that
	\[
	\min_{f^* \in F_{\{\Delta_i\}}(\mathcal{\tilde{D}})} \max_{D' \in \mathcal{\tilde{D}}} |f(D') - f^*(D')|
	\leq 
	\min_{f^* \in F_{\{\Delta_i\}}(\mathcal{D}')} \max_{D' \in \mathcal{D}'} |f(D') - f^*(D')|.
	\]
	
	By our assumption that the claim is not true on $D$, it follows that
	\[
	|f(\tilde{D}) - g(\tilde{D})| > 2 \min_{f^* \in F_{\{\Delta_i\}}(\mathcal{D}')} \max_{D' \in \mathcal{D}'} |f(D') - f^*(D')|.
	\]
Combining this with the previous inequality implies,
	\[
	|f(\tilde{D}) - g(\tilde{D})| > 2 	\min_{f^* \in F_{\{\Delta_i\}}(\mathcal{\tilde{D}})} \max_{D' \in \mathcal{\tilde{D}}} |f(D') - f^*(D')|,
	\]	
which contradicts our inductive hypothesis. Therefore we must have $\max_{D' \in \mathcal{D}'} |f(D') - g(D')| = |f(D) - g(D)|$.

	
	We now apply Lemma~\ref{lem:exists_far_subset_database}, which we prove subsequently, to see that there must exist $\tilde{D} \subset D$ such that $|f(D) - f(\tilde{D})| \geq  |f(D) - g(D)| +  \sum_{i \in D \setminus \tilde{D}} \newSens_i$. Therefore for any $f^* \in F_{\{\Delta_i\}}(\mathcal{D}')$ it must be true that 
	\[
	\max\{|f(\tilde{D}) - f^*(\tilde{D})|,|f(D) - f^*(D)| \} \geq \frac{|f(D) - g(D)|}{2},
	\]
	because of the sensitivity constraints. We then use the fact that $\max_{D' \in \mathcal{D}'} |f(D') - g(D')| = |f(D) - g(D)|$ to conclude,
	\[
	\max_{D' \in \mathcal{D}'} |f(D') - g(D')| \leq 2 \min_{f^* \in F_{\{\Delta_i\}}(\mathcal{D}')} \max_{D' \in \mathcal{D}'} |f(D') - f^*(D')|.
	\] 
This contradicts our assumption, so the claim must therefore be true for $D$.
\end{proof}

\begin{lemma}\label{lem:exists_far_subset_database}
Given any $f:\mathcal{D} \rightarrow \mathbb{R}$, let $g:\mathcal{D} \rightarrow \mathbb{R}$ be the \sensitivity~of $f$ with individual sensitivity parameters $\{ \Delta_i \}$. For any $D \in \mathcal{D}$ such that $f(D) \neq g(D)$ there must exist some $\tilde{D} \subset D$ such that $g(D) \geq f(\tilde{D}) + \sum_{i \in D \setminus \tilde{D}} \newSens_i$ if $f(D) > g(D)$ and $g(D) \leq f(\tilde{D}) - \sum_{i \in D \setminus \tilde{D}} \newSens_i$ if $f(D) < g(D)$.
\end{lemma}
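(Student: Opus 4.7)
The plan is to prove this by induction on $|D|$, leveraging the recursive definition of $g$. The key observation is that whenever $f(D) \neq g(D)$, the value $g(D)$ must lie on the boundary of the feasible interval $[\lowerbound(D), \upper(D)]$, which means one of the sensitivity constraints with respect to a smaller neighboring database is \emph{tight}. Following this tight constraint one step at a time will carve out the desired $\tilde{D} \subset D$.

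For the base case $|D| = 1$, say $D = \{x_k\}$, the claim reduces to checking that $g(D)$ differs from $f(\emptyset) = g(\emptyset)$ by exactly $\Delta_k$ when $f(D) \neq g(D)$, which is immediate from $\upper(D) = f(\emptyset) + \Delta_k$ and $\lowerbound(D) = f(\emptyset) - \Delta_k$; so $\tilde{D} = \emptyset$ works. For the inductive step, assume the statement for all strict subsets of $D$ and consider the case $f(D) > g(D)$ (the other case is symmetric). By Definition~\ref{def:1D_sensitivity_function} we have $g(D) = \upper(D) = \min_{j \in D} \{g(D - x_j) + \Delta_j\}$, so there exists some $j^* \in D$ with
\[
g(D) = g(D - x_{j^*}) + \Delta_{j^*}.
\]
Now split on the behavior of $g$ at $D - x_{j^*}$.

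If $f(D - x_{j^*}) \leq g(D - x_{j^*})$, then $g(D - x_{j^*}) \geq f(D - x_{j^*})$, and setting $\tilde{D} = D - x_{j^*}$ yields $g(D) \geq f(\tilde{D}) + \Delta_{j^*} = f(\tilde{D}) + \sum_{i \in D \setminus \tilde{D}} \Delta_i$ as required. Otherwise $f(D - x_{j^*}) > g(D - x_{j^*})$, so the inductive hypothesis applied to $D - x_{j^*}$ produces a $\tilde{D}' \subset D - x_{j^*}$ with $g(D - x_{j^*}) \geq f(\tilde{D}') + \sum_{i \in (D - x_{j^*}) \setminus \tilde{D}'} \Delta_i$. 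Adding $\Delta_{j^*}$ to both sides and choosing $\tilde{D} = \tilde{D}'$ gives the desired inequality, since $(D - x_{j^*}) \setminus \tilde{D}' \cup \{j^*\} = D \setminus \tilde{D}$.

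There is no real obstacle here beyond carefully unpacking the three-case definition of $g(D)$ and recognizing that $f(D) \neq g(D)$ forces one of the boundary sensitivity constraints to be tight, so the induction feeds into itself cleanly. The symmetric case $f(D) < g(D)$ follows by the same argument with $\lowerbound$ in place of $\upper$ and all inequalities reversed.
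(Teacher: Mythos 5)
Your proof is correct and follows essentially the same path as the paper's: induct on $|D|$, observe that $f(D) > g(D)$ forces $g(D) = \upper(D)$ so some constraint $g(D) = g(D - x_{j^*}) + \Delta_{j^*}$ is tight, and then either take $\tilde D = D - x_{j^*}$ directly or apply the inductive hypothesis to $D - x_{j^*}$ and add $\Delta_{j^*}$. The base case and symmetric case are handled the same way as well.
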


\begin{proof}
	We prove the claim inductively, starting with the immediate observation that by construction it is true when $D$ only has one entry.
	
	We now consider an arbitrary $D$ and assume our claim for all subsets.
	Without loss of generality, we will prove the claim if $f(D) > g(D)$, and can symmetrically apply the proof for the case when $f(D) < g(D)$.
	If $f(D) > g(D)$, then there must exist some $x_i \in D$ such that $g(D) = g(D - x_i) + \newSens_i$. If $f(D - x_i) \leq g(D - x_i)$, then we can set $\tilde{D} = D - x_i$ and the claim follows. Otherwise we must have $f(D - x_i) > g(D - x_i)$ and we apply our inductive hypothesis to obtain some $\tilde{D} \subset D - x_i$ such that 
	\[
	g(D - x_i) \geq f(\tilde{D}) + \sum_{j \in (D - x_i) \setminus \tilde{D}} \newSens_j.
	\] 
	We then use the fact that $g(D) = g(D - x_i) + \newSens_i$ to achieve
	\[
	g(D) \geq f(\tilde{D}) + \sum_{j \in D \setminus \tilde{D}} \newSens_j.
	\] 
\end{proof}

We note that because Lemma~\ref{lem:2-approx_1d_key_lemma} achieves a 2-approximation when the optimal function is restricted to subsets of the data universe, we easily achieve a 2-approximation on the full data universe.

\begin{corollary}\label{cor:2-approx_1d}
	Given any $f:\mathcal{D} \rightarrow \mathbb{R}$, let $g:\mathcal{D} \rightarrow \mathbb{R}$ be the \sensitivity~of $f$ with parameters $\{\Delta_i\}$. Then,
	\[
	\max_{D' \in \mathcal{D}} |f(D') - g(D')| \leq 2 \min_{f^* \in F_{\{\Delta_i\}}(\mathcal{D})} \max_{D' \in \mathcal{D}} |f(D') - f^*(D')|.
	\] 
\end{corollary}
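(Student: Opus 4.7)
The plan is to derive the corollary almost immediately from Lemma~\ref{lem:2-approx_1d_key_lemma}, which already gives the 2-approximation on the subset-lattice $\mathcal{D}' = \{D' \subseteq D\}$ for any fixed $D$. The only gap between the lemma and the corollary is that the minimization in the lemma is over $F_{\{\Delta_i\}}(\mathcal{D}')$, while the corollary minimizes over $F_{\{\Delta_i\}}(\mathcal{D})$. I expect no real obstacle; the argument is essentially a two-line reduction using the fact that restricting a bounded-sensitivity function to a subdomain preserves the bound.

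First, I would fix an arbitrary $D \in \mathcal{D}$ and let $\mathcal{D}_D = \{D' \subseteq D\}$. Applying Lemma~\ref{lem:2-approx_1d_key_lemma} with this choice, and using that $D \in \mathcal{D}_D$, gives
\[
|f(D) - g(D)| \;\leq\; \max_{D' \in \mathcal{D}_D} |f(D') - g(D')| \;\leq\; 2 \min_{f^* \in F_{\{\Delta_i\}}(\mathcal{D}_D)} \max_{D' \in \mathcal{D}_D} |f(D') - f^*(D')|.
\]

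Next, I would observe the key monotonicity step: for every $f^* \in F_{\{\Delta_i\}}(\mathcal{D})$, its restriction $f^*|_{\mathcal{D}_D}$ lies in $F_{\{\Delta_i\}}(\mathcal{D}_D)$, since individual sensitivity is a supremum and restricting the domain can only shrink that supremum (and our standing assumption that subsets of admissible databases are admissible ensures $\mathcal{D}_D \subseteq \mathcal{D}$). Consequently,
\[
\min_{f^* \in F_{\{\Delta_i\}}(\mathcal{D}_D)} \max_{D' \in \mathcal{D}_D} |f(D') - f^*(D')| \;\leq\; \min_{f^* \in F_{\{\Delta_i\}}(\mathcal{D})} \max_{D' \in \mathcal{D}_D} |f(D') - f^*(D')| \;\leq\; \min_{f^* \in F_{\{\Delta_i\}}(\mathcal{D})} \max_{D' \in \mathcal{D}} |f(D') - f^*(D')|,
\]
where the second inequality just enlarges the domain of the inner max from $\mathcal{D}_D$ to $\mathcal{D}$.

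Chaining these inequalities gives $|f(D) - g(D)| \leq 2\min_{f^* \in F_{\{\Delta_i\}}(\mathcal{D})} \max_{D' \in \mathcal{D}} |f(D') - f^*(D')|$ for every $D \in \mathcal{D}$. Taking the maximum (or supremum) over $D \in \mathcal{D}$ on the left-hand side, which leaves the right-hand side unchanged since it does not depend on $D$, yields the stated corollary. No new machinery is required beyond the lemma and the trivial restriction observation.
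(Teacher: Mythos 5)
Your proof is correct and follows the same route the paper takes: the paper simply remarks that Corollary~\ref{cor:2-approx_1d} ``easily'' follows from Lemma~\ref{lem:2-approx_1d_key_lemma}, and your argument supplies the two elementary ingredients that remark glosses over, namely that restricting an $\{\Delta_i\}$-sensitivity-bounded function on $\mathcal{D}$ to the sub-lattice $\mathcal{D}_D$ stays in $F_{\{\Delta_i\}}(\mathcal{D}_D)$, and that enlarging the domain of the inner max from $\mathcal{D}_D$ to $\mathcal{D}$ only increases the right-hand side.
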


\subsubsection*{Pareto Optimality}

We now complement our localized 2-approximation of the $\ell_\infty$ metric with a Pareto optimality result. 
As $\ell_\infty$ is a worst-case metric we would still like our \sensitivity~to perform well on the non-worst-case databases.
In particular, for the databases that do not contribute to the $\ell_\infty$ error, we still want the error to be minimized.
The following lemma will conclude that we cannot improve the error of a single database without incurring more error on another database, indicating that we are still performing well on the non-worst-case databases.

\begin{lemma}\label{lem:optimality}
	Given any $f:\mathcal{D} \rightarrow \mathbb{R}$, let $g:\mathcal{D} \rightarrow \mathbb{R}$ be the \sensitivity~of $f$ with individual sensitivity parameters $\{ \Delta_i \}$. For any $h \in F_{\{\Delta_i\}}(\mathcal{D})$ if there is some $D \in \mathcal{D}$ such that
	\[ \abs{f(D) - h(D)} < \abs{f(D) - g(D)},\]
	then there also exists some $D' \in \mathcal{D}$ such that 
	\[ \abs{f(D') - h(D')} > \abs{f(D') - g(D')}.\]
\end{lemma}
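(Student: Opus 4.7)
I would proceed by contradiction: suppose some $h \in F_{\{\Delta_i\}}(\mathcal{D})$ satisfies both $|f(D) - h(D)| < |f(D) - g(D)|$ at some $D$ and $|f(D') - h(D')| \leq |f(D') - g(D')|$ at every other $D' \in \mathcal{D}$. By symmetry, assume $f(D) > g(D)$ (so in particular $g(D) \neq f(D)$, since otherwise strict improvement at $D$ is impossible). A short case split on whether $h(D) \leq f(D)$ or $h(D) > f(D)$ shows that strict improvement at $D$ forces $h(D) > g(D)$ in both cases.

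I would then apply Lemma \ref{lem:exists_far_subset_database} to produce $\tilde{D} \subsetneq D$ with $g(D) \geq f(\tilde{D}) + \sum_{i \in D \setminus \tilde{D}} \Delta_i$. The key auxiliary observation, visible from the recursive construction in that proof, is that $\tilde{D}$ is obtained by a chain $D = D_0 \supsetneq D_1 \supsetneq \cdots \supsetneq D_k = \tilde{D}$ each step of which satisfies the tight equality $g(D_j) = g(D_{j+1}) + \Delta_{i_{j+1}}$. Telescoping gives $g(D) = g(\tilde{D}) + \sum_{i \in D \setminus \tilde{D}} \Delta_i$, and comparing with the lemma's inequality yields $g(\tilde{D}) \geq f(\tilde{D})$ as well.

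Finally I would propagate $h(D) > g(D)$ down this chain using the sensitivity bound on $h$: iterating $h(D' - x_i) \geq h(D') - \Delta_i$ along the chain gives $h(\tilde{D}) \geq h(D) - \sum_{i \in D \setminus \tilde{D}} \Delta_i > g(D) - \sum_{i \in D \setminus \tilde{D}} \Delta_i = g(\tilde{D}) \geq f(\tilde{D})$. Therefore $|f(\tilde{D}) - h(\tilde{D})| = h(\tilde{D}) - f(\tilde{D}) > g(\tilde{D}) - f(\tilde{D}) = |f(\tilde{D}) - g(\tilde{D})|$, contradicting the assumed weak improvement of $h$ over $g$ everywhere; the symmetric case $f(D) < g(D)$ runs identically with inequalities reversed. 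The main obstacle is precisely the need for the \emph{equality} $g(D) = g(\tilde{D}) + \sum \Delta_i$: the inequality in the statement of Lemma \ref{lem:exists_far_subset_database} alone only yields $h(\tilde{D}) > f(\tilde{D})$, which does not suffice because $g(\tilde{D})$ could in principle exceed $f(\tilde{D})$ by an even larger margin. Extracting the equality from the recursive construction in the proof of that lemma (or restating the lemma to expose it) is the step that makes the argument go through.
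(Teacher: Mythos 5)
Your proof is correct, but it takes a genuinely different route from the paper's. The paper uses a minimality argument: since $g(\emptyset) = f(\emptyset)$, the assumption that $h$ weakly improves everywhere forces $h(\emptyset) = g(\emptyset)$, and one then picks the \emph{smallest} $D$ with $h(D) \neq g(D)$. At that $D$, WLOG $g(D) = \upper(D) = g(D - x_i) + \newSens_i$ for some $x_i$, and $h(D) > g(D)$; minimality gives $h(D - x_i) = g(D - x_i)$, so $h(D) > h(D - x_i) + \newSens_i$ contradicts the sensitivity bound on $h$ in a single step, never invoking Lemma~\ref{lem:exists_far_subset_database}. Your argument dispenses with minimality and instead works at an arbitrary $D$ where $h$ strictly improves, applying Lemma~\ref{lem:exists_far_subset_database} and then, as you correctly flag, extracting from its proof the stronger telescoping identity $g(D) = g(\tilde{D}) + \sum_{i \in D \setminus \tilde{D}} \newSens_i$ together with $g(\tilde{D}) \geq f(\tilde{D})$; pushing $h$ along the same chain then yields the contradiction. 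Both arguments are sound and your diagnosis of why the stated inequality of Lemma~\ref{lem:exists_far_subset_database} alone is insufficient is accurate. The trade-off: the paper's minimality route is self-contained and only touches the definition of $g$ at one neighbor pair, whereas your route avoids the ``take the smallest counterexample'' step but requires re-opening (or restating) Lemma~\ref{lem:exists_far_subset_database} to expose the equality along the chain. If you wanted to use your version cleanly, the right move is to strengthen the statement of Lemma~\ref{lem:exists_far_subset_database} to include $g(D) = g(\tilde{D}) + \sum_{i \in D \setminus \tilde{D}} \newSens_i$ and $g(\tilde{D}) \geq f(\tilde{D})$ (resp.\ $\leq$), which its existing proof already establishes.
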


\begin{proof}
	Suppose there is some $h \in F_{\{\Delta_i\}}(\mathcal{D})$ such that
	\[ \abs{f(D) - h(D)} < \abs{f(D) - g(D)} \] 	
	for some $D \in \mathcal{D}$, and for all $D' \in \mathcal{D}$, 	
	\[ \abs{f(D') - h(D')} \leq \abs{f(D') - g(D')}\]
Then it must be true that $h(\emptyset) = g(\emptyset)$ because $g(\emptyset) = f(\emptyset)$. Let $D$ be the smallest database such that $h(D) \neq g(D)$, which implies that $\abs{f(D) - h(D)} < \abs{f(D) - g(D)}$. 
	This inequality implies $g(D) \neq f(D)$, and by our construction of $g$, either $\upper(D) < f(D)$ or $\lowerbound(D) > f(D)$. 
	
	Without loss of generality, assume $\upper(D) < f(D)$ and thus $g(D) = \upper(D)$. Using the fact that $\abs{f(D) - h(D)} < \abs{f(D) - g(D)}$, we can conclude that $h(D) > g(D)$. 
	However, since $\upper(D) = g(D - x_i) + \Delta_i$ for some $x_i \in D$, we must have $h(D) > g(D - x_i) + \newSens_i$. Our assumption that $D$ was the smallest database such that $h(D) \neq g(D)$ then implies $h(D) > h(D - x_i) + \newSens_i$, contradicting the individual sensitivity of $i$ being at most $\newSens_i$ in $h$. 
	
	Therefore, $F_{\{\Delta_i\}}(\mathcal{D})$ cannot contain such an $h$, which implies our claim.
	
\end{proof}

\subsection{Hardness of approximation}\label{s.hardness}

In this section we justify the exponential running time of our implementation of the \sensitivity~for the general setting.
Recall that in our construction we did not make any assumptions about $\mathcal{D}$ and only required query access to the function $f: \mathcal{D} \rightarrow \R$.
Under this limited knowledge setting it is reasonable that our localized greedy construction is the best we can hope for, despite taking exponential time.
Accordingly, we show here that even if we restrict $\mathcal{D}$ to be exponential-sized, set all $\{\Delta_i\}$ to be the same $\Delta$, and further force $f$ to be polytime representable, it is still NP-hard to compute our \sensitivity.
This proof will further imply that it is NP-hard to compute a function that has identical individual sensitivity guarantees and achieve the same approximation guarantees that our \sensitivity~does in Lemma~\ref{lem:2-approx_1d_key_lemma}.

After proving this NP-hardness result, we will discuss the issues with computing individual sensitivity bounded functions that obtain better approximations.
We give strong justification that it is uncomputable to achieve better than a 2-approximation in the $\ell_\infty$ metric.
Further, we give similar reasons why it is uncomputable to achieve even a constant approximation on average error for the general setting, which justifies our choice of metric for proving our approximation guarantees in the previous section.
We believe these ideas could be formalized in a straightforward manner, but think that doing so is unnecessary for the scope of this paper. 



\subsubsection*{NP-hardness}

\begin{proposition}\label{prop:NP-hard}
	For certain $f: \mathcal{D} \rightarrow \R$ such that $|\mathcal{D}| = O(3^n)$, it is NP-hard to compute our \sensitivity~$g$  with parameter $\Delta$ on a specific database.
	
\end{proposition}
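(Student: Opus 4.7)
The plan is to reduce from a standard NP-hard combinatorial problem---a natural choice being \textsc{3-SAT} or \textsc{Subset-Sum}---to the task of evaluating $g(D^*)$ at a specific target database $D^*$. The crucial observation is that the recursive definition of $g$ in Definition~\ref{def:1D_sensitivity_function} forces $g(D^*)$ to depend on $g$ at every subset of $D^*$ through a chain of clipping operations; by engineering $f$ carefully, one can arrange for this recursion to perform an exponential-size optimization whose answer encodes the NP-hard decision.

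First, I would fix a uniform sensitivity $\Delta_i = \Delta$ for some $\Delta > 0$ and construct, from an instance of the NP-hard problem, a data universe $\mathcal{D}$ of size $O(3^n)$ together with a function $f\colon \mathcal{D} \to \mathbb{R}$ specified by an explicit polynomial-time computable formula (so $f$ is ``polytime representable''). The target database $D^* \in \mathcal{D}$ would have $n$ entries $x_1,\ldots,x_n$, one per variable or item in the instance, and two distinguished output values $v_{\mathrm{yes}} \neq v_{\mathrm{no}}$ would be chosen so that deciding which of these $g(D^*)$ equals solves the original instance.

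The heart of the proof is a gadget construction that makes the clipping operations do the combinatorial work. Unrolling the recursion gives
\[
\upper(D^*) \;=\; \min_{i \in D^*}\bigl(g(D^* - x_i) + \Delta\bigr),
\]
and iterating this down the subset lattice expresses $\upper(D^*)$ as a minimization over chains from $\emptyset$ to $D^*$ with a clipping step at every level; $\lowerbound(D^*)$ is analogous. I would choose $f$ so that along each chain $\emptyset = S_0 \subsetneq S_1 \subsetneq \cdots \subsetneq S_n = D^*$, the tightest clipping point encodes a partial solution to the NP-hard instance. Calibrating $\Delta$ against the gadget weights so that the only chains whose propagated upper and lower bounds pinch $g(D^*)$ to $v_{\mathrm{yes}}$ are those arising from certifying combinatorial objects (satisfying assignments, or subsets realizing the target sum), one obtains $g(D^*)=v_{\mathrm{yes}}$ iff the instance is a YES-instance.

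The main obstacle is ensuring that the gadget is tight: no spurious chain of clipping operations may short-circuit the intended logical structure, and the clipping budget $\Delta$ at each intermediate level must exactly match the weights needed to encode clause satisfaction or subset sums. This is the standard difficulty of lattice-based gadget reductions, compounded here by the fact that $\upper$ and $\lowerbound$ propagate independently, so potential side-interactions between them must also be controlled. Once correctness of the gadget is established, the reduction runs in polynomial time in the instance size, and the claimed $O(3^n)$ bound on $|\mathcal{D}|$ follows from the bookkeeping of auxiliary databases---for instance, representing each of the $n$ variable slots as being in one of three states (absent, present in a ``positive'' copy, or present in a ``negative'' copy) yields exactly $3^n$ consistent configurations.
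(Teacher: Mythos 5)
Your high-level plan matches the paper's: reduce from SAT, use a data universe in which each of $n$ individuals is in one of three states (absent, \texttt{T}, or \texttt{F}) so that $|\mathcal{D}| = O(3^n)$, and engineer $f$ so that reading off $g$ at a distinguished database answers the SAT instance. However, you stop exactly where the proof begins. The entire content of the proposition is the gadget, and you defer it: ``the main obstacle is ensuring that the gadget is tight $\ldots$'' is an acknowledgment that the construction is missing, not a construction. Without a concrete $f$, an explicit $\Delta$, and an explicit target database, nothing here is checkable, and the worry you raise about spurious chains short-circuiting the logic is never addressed.

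The paper's gadget is in fact much simpler than the generic chain-pinching machinery you envision, and it is worth seeing why. Let $\phi$ be a SAT formula on $n$ variables, set $\Delta = 1$, let $f_\phi(D) = |D| - \phi(D + T)$ where $D + T$ extends the partial assignment $D$ by setting all absent variables to true, and take $f_\phi(\emptyset) = 0$. The target database is $F^n$ (all variables set to false). Two observations close the argument with no chain bookkeeping at all. If $\phi$ is unsatisfiable, then $\phi(D + T) = 0$ always, so $f_\phi(D) = |D|$ already has sensitivity $1$; hence $g = f_\phi$ and $g(F^n) = n$. If $\phi$ is satisfiable, pick the satisfying assignment $x^*$ with the fewest false variables and let $D$ be its set of false entries, viewed as a subdatabase of $F^n$. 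By minimality of $x^*$, every proper subset $D' \subset D$ has $\phi(D' + T) = 0$, so $f_\phi(D') = |D'|$ and inductively $g(D') = |D'|$. Then $f_\phi(D) = |D| - 1$ lies in $[\lowerbound(D), \upper(D)] = [|D|-2,\, |D|]$, so $g(D) = |D| - 1$. The $\Delta = 1$ sensitivity bound then forces $g(\tilde D) \le |\tilde D| - 1$ for every $\tilde D \supseteq D$, in particular $g(F^n) \le n - 1 < n$. Thus $g(F^n) < n$ iff $\phi$ is satisfiable. Notice that the monotone structure $f_\phi(D) = |D| - \phi(D+T)$ automatically rules out the ``spurious chains'' you were worried about: the recursion can only ever lower $g$ by one below the $|D|$ baseline, and it does so precisely on (and above) false-sets of satisfying assignments. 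Your proposal would be made whole by supplying this (or an equivalent) concrete gadget and carrying out the two-case analysis; as written it identifies the right target but leaves the actual reduction unrealized.
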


\begin{proof}
	In order to prove this claim, we will construct a gadget function that takes an arbitrary SAT formula $\phi$ and constructs a function $f: \mathcal{D} \rightarrow \R$ such that $|\mathcal{D}| = O(3^n)$ and on a specified database $D \in \mathcal{D}$, $g(D) < n$ if and only if $\phi$ is satisfiable. We construct that gadget function below.
	
	\paragraph{Gadget Function:} Let $\mathcal{D}$ be the data universe with $n$ individuals such that $x_i \in \{T,F\}$. Let $\phi:\{T,F\}^n \rightarrow \{0,1\}$ be an arbitrary SAT formula of $n$ variables that outputs $0$ if false and $1$ if true. For any $D \in \mathcal{D}$, let $D + T \in \{T,F\}^n$ be the assignment of variables that correspond to $D$ and set all variables not in $D$ to be true. Let the function $f_{\phi}:\mathcal{D} \rightarrow \mathbb{R}$ be defined as $f_{\phi}(D) = |D| - \phi(D + T)$ where $|D| = |\{i\in D\}|$. Further, define $f_{\phi}(\emptyset) = 0$ and let $\Delta = 1$.

	\paragraph{Claim:} For the constructed gadget function $f$ from SAT formula $\phi$ and our corresponding \sensitivity~$g$ with parameter $\Delta$, we must have that $g(F^n) < n$ iff $\phi$ is satisfiable.
	
	\
	
	First, we assume $\phi$ is unsatisfiable which implies $f_{\phi}(D) = |D|$ for all $D$. Therefore the sensitivity of $f$ is $1$, and $g$ will be identical to $f$, so $g(F^n) = n$.
	
	Next, we show that if $g(F^n) \geq n$ then there cannot exist a satisfying assignment of $\phi$. Suppose there does exist a satisfying assignment, then take the one with the fewest false assignments and denote this as $x^* \in \{T,F\}^n$. Further, consider the database $D \subseteq F^n$ that consists of all of the false assignments of $x^*$. By definition, we must have that $f_{\phi}(D) = |D| - 1$, and we further show that $g(D) = |D| - 1$. 
	
	For any $D' \subset D$, we have $f_{\phi}(D') = |D'|$ by construction of $f_{\phi}$ and our assumption that $x^*$ was the satisfying assignment with the fewest false assignments. It is easy to see that $g(D') = |D'|$ by construction, which implies that $g(D) = |D| - 1$. Since the sensitivity is set to be $1$, we have that for every $\tilde{D}$ such that $\tilde{D} \supseteq {D}$ 
	it must be true that $g(\tilde{D}) \leq |\tilde{D}| - 1$. By construction, we know $D \subseteq F^n$, which implies $g(F^n) < n$.  This gives a contradiction and implies that $\phi$ is unsatisfiable.
\end{proof}

Note that to satisfy the approximation guarantees given in Lemma~\ref{lem:2-approx_1d_key_lemma}, any $f^* \in F_{\Delta}(\mathcal{D})$ would require $f^*(D) = |D| - 1$ in our proof as well.
Accordingly, for any $f^* \in F_{\Delta}(\mathcal{D})$ that satisfies the approximation guarantees of Lemma~\ref{lem:2-approx_1d_key_lemma}, it must also be true that $f(F^n) < n$ iff $\phi$ is satisfiable.
Therefore, any algorithm that achieves the same guarantees must also be NP-hard to compute.

%
%

%
%

\subsubsection*{Uncomputability of better approximations}

We now argue that it is uncomputable to achieve better approximation factors than our \sensitivity, with respect to both the $\ell_{\infty}$ metric and any $\ell_p$ metric.

\begin{remark} 
	We claim that no finitely computable algorithm can obtain a function with appropriately bounded individual sensitivities that achieves better than a 2-approximation on the $\ell_\infty$ error.  Let $\mathcal{D}$ only contain the empty set and databases of size one, each containing a single real-valued data entry $x \in [0,1]$, and set $\Delta = 1$. Consider any finite algorithm that constructs a $\Delta$-sensitivity function $h$ to minimize the maximum difference between (adversarially chosen) $f$ and $h$ over all databases. 
	
	If $f$ is arbitrary and only query accessible, then the algorithm can only query a finite number of databases, and an adversary could just set $f(x) = f(\emptyset) = 0$ for all queried databases. In order to achieve even a constant approximation, the algorithm would need to set $f(x) = 0$ just in case $f(x) = 0$ for all $x \in [0,1]$. However, the adversary could then set $f(y) = 2$ for all non-queried databases. The function that minimizes the $\ell_\infty$ error would then set $f(x) = 1/2$ for all queried databases and $f(y) = 3/2$ for all non-queried databases. As a result, the finite algorithm can only achieve a 2-approximation.	
\end{remark}

\begin{remark} 
	We further claim that no finitely computable algorithm can obtain a function with appropriately bounded individual sensitivities that achieves a constant approximation on the average $\ell_p$ error. The optimal function in this scenario would be $f^*$ that minimizes:
	\[
	\min_{f^* \in F_{\{\Delta_i\}}(\mathcal{D})} \left(\frac{\sum_{D \in \mathcal{D}} \left(f(D') - f^*(D')\right)^p}{|\mathcal{D}|}\right)^{1/p}.
	\]
	We consider the same example as above, and note that the number of queried databases is finite and the number of non-queried databases is infinite.
	In order to achieve a constant approximation, the algorithm would need to set $f(x) = 0$ just in case $f(x) = 0$ for all $x \in [0,1]$.
	However, it would then have to set $f(y) = 1$ for all non-queried databases and the average $\ell_p$ error would be a constant.
	If instead it set $f(x) = 1$ for all queried databases and $f(y) = 2$ for all non-queried databases, then the average $\ell_p$ error would approach 0 because the non-queried databases are infinite and the queried databases are finite. As a result, no finitely computable algorithm can achieve a constant approximation in this metric.
\end{remark}

%
%


\section{Efficient Implementation of Several Statistical Measures}\label{sec:efficient_examples}


In this section, we take our general recursive algorithm and show how it can be made efficient for a variety of important statistical measures such as mean, $\alpha$-trimmed mean, median, minimum, and maximum. 
It is important to note that we will not change the key recursive structure, but instead show that when we have more information about the function, we can ignore many of the subproblems of the recursion for significant runtime speedups. 
As a result, the algorithm given for these statistical tasks will take $O(n^2)$ time and have a simple dynamic programming construction.

The key idea will be that given a database $D = (x_1,...,x_n)$ where we assume for simplicity that $x_1 \leq \cdots \leq x_n$,\footnote{Our algorithm will presort and only incur $O(n\log n)$ running time. } the only important subproblems will be $D - x_1$ and $D - x_n$. Consequently, instead of considering every possible subset of $D$, we only need to consider every contiguous subset, which limits the number of subproblems to $O(n^2)$.


We first give a general class of functions---which includes mean, median, $\alpha$-trimmed mean, minimum, and maximum---for which it is straightforward to show our algorithm can be applied efficiently. We then give a more in-depth analysis of the error guarantees that correspond with this implementation for mean. These bounds will ultimately be quite intuitive, but the proofs will be more involved.

\subsection{Efficient implementation for a simple class of functions}

We will first define a class of functions under which database ordering is preserved for any subset, which allows us to presort the data according to this ordering and restrict the number of subproblems. 
Intuitively, it implies that for any database $D = (x_1,...,x_n)$ there is an ordering of the $x_1,...,x_n$ such that the extreme points in our recursion are determined by the databases that remove the maximum or the minimum. 
In particular, if we consider the mean function $\mu:\R^{<\mathbb{N}} \rightarrow \R$ then for any $D = (x_1,...,x_n)$ if we assume $x_1\leq \cdots \leq x_n$, then we know $\mu(D - x_n) \leq \mu(D - x_i)$ and $\mu(D - x_1) \geq \mu(D - x_i)$ for any $i$. 
This will ultimately imply that our upper and lower bounds on the allowable region for $g(D)$ will be defined by $g(D - x_1)$ and $g(D - x_n)$, respectively.

\begin{definition}[Database-ordered function]
A function $f: \mathcal{D} \rightarrow \R$ is \textit{database-ordered} if for any $D = (x_1,...,x_n) \in \mathcal{D}$ and any pair $x_i,x_j \in D$, we have that for every subset database $D' \subset D$ such that $x_i,x_j \notin D'$, then either $f(D' + x_i) \leq f(D' + x_j)$ for every $D'$ or $f(D' + x_i) \geq f(D' + x_j)$ for every $D'$. Furthermore, if $f(D' + x_i) \leq f(D' + x_j)$ for every $D'$, we say that $x_i \leq x_j$ in the \textit{entry-ordering}, and vice-versa if $f(D' + x_i) \geq f(D' + x_j)$ for every $D'$.
	
\end{definition}

The general idea of our efficient implementation will be to use the ordering and only consider contiguous subsets according to this ordering.

\begin{lemma}\label{lem:database_ordered_is_efficient_implementable}
Given a database-ordered function $f: \mathcal{D} \rightarrow \R$, let $g: \mD \to \R$ be the \sensitivity~of $f$ with parameter $\Delta$. Then for any $D= (x_1,...,x_n)$ where $x_1 \leq \cdots \leq x_n$ in the \textit{entry-ordering} we must have $\upper(D) = g(D - x_n) + \Delta$ and $\lowerbound(D) = g(D - x_1) - \Delta$, and our \fitting~algorithm only requires solving $O(n^2)$ subproblems 
\end{lemma}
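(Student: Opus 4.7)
The plan is to prove a slightly stronger monotonicity claim and then derive both parts of the lemma from it. Specifically, I would show by induction on subset size that for a database-ordered $f$, the Sensitivity-Preprocessing Function $g$ also respects the entry-ordering: whenever $x_i \leq x_j$ in the entry-ordering and $D'$ is any database not containing either, $g(D' + x_i) \leq g(D' + x_j)$. Once this is in hand, taking $D' = D - x_n - x_i$ (or $D - x_1 - x_i$) and using $x_1 \leq x_i \leq x_n$ gives $g(D - x_n) \leq g(D - x_i) \leq g(D - x_1)$ for every $i$, which immediately yields $\upper(D) = g(D - x_n) + \Delta$ and $\lowerbound(D) = g(D - x_1) - \Delta$.

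For the base case $|D'| = 0$, the value $g(\{x_i\})$ is just $f(\{x_i\})$ clipped to $[f(\emptyset) - \Delta, f(\emptyset) + \Delta]$, and clipping to a fixed interval preserves order. For the inductive step, I would first observe that
\[
\upper(D' + x_i) = \min\!\left(g(D') + \Delta,\ \min_{y \in D'} g((D' - y) + x_i) + \Delta\right),
\]
and similarly for $\upper(D' + x_j)$ and for the $\lowerbound$ quantities. Applying the inductive hypothesis to the strictly smaller database $D' - y$ gives $g((D' - y) + x_i) \leq g((D' - y) + x_j)$ termwise, so both bounds shift in the correct direction: $\upper(D' + x_i) \leq \upper(D' + x_j)$ and $\lowerbound(D' + x_i) \leq \lowerbound(D' + x_j)$. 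Combined with the hypothesis $f(D' + x_i) \leq f(D' + x_j)$ (from database-orderedness), I invoke a simple ``joint-shift'' lemma for clipping: if $v_1 \leq v_2$, $L_1 \leq L_2$, $U_1 \leq U_2$ with $L_k \leq U_k$, then $\mathrm{clip}(v_1, L_1, U_1) \leq \mathrm{clip}(v_2, L_2, U_2)$. This can be verified by a short case analysis on which of the three regions $v_1$ and $v_2$ land in, and applying it with $v = f(D' + \cdot)$ closes the induction.

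The $O(n^2)$ claim then follows directly. Since $\upper(D)$ and $\lowerbound(D)$ depend only on $g(D - x_n)$ and $g(D - x_1)$, the recursion for any database of the form $\{x_i, x_{i+1}, \ldots, x_j\}$ only ever calls $\{x_i, \ldots, x_{j-1}\}$ and $\{x_{i+1}, \ldots, x_j\}$. Starting from $D = \{x_1, \ldots, x_n\}$ (after presorting in the entry-ordering in $O(n \log n)$ time), the reachable subproblems are exactly the contiguous intervals of the sorted sequence, of which there are $\binom{n+1}{2} = O(n^2)$. A standard dynamic program fills these in bottom-up in $O(n^2)$ time.

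The main obstacle is the joint-shift clipping lemma and keeping the inductive bookkeeping clean. It is essentially a routine case analysis, but one must be careful that the inductive hypothesis is stated for \emph{all} pairs $x_i \leq x_j$ and \emph{all} subsets $D'$ of the appropriate size, not just the specific pair in the lemma statement, since the recursion on $\upper$ and $\lowerbound$ mixes in arbitrary elements $y \in D'$ whose positions in the entry-ordering are not controlled a priori.
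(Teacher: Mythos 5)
Your proposal is correct and mirrors the paper's proof: both argue by induction on database size that $g$ respects the entry-ordering, by showing that $\upper$ and $\lowerbound$ are themselves monotone in the added element and then invoking monotonicity of the clipping operation, and both then read off that only contiguous intervals of the sorted sequence arise as subproblems. Your version is marginally cleaner in that keeping the common term $g(D') + \Delta$ explicit and comparing the remaining terms of the min termwise (via your ``joint-shift'' lemma) avoids the separate $i = n-1$ edge case the paper handles, but the underlying argument is the same.
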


\begin{proof}
	We first want to show $\upper(D) = g(D - x_n) + \Delta$ and $\lowerbound(D) = g(D - x_1) - \Delta$. It is sufficient to show $g(D - x_n) \leq g(D - x_{n-1}) \leq \cdots \leq g(D - x_1)$, which we will prove by induction on the size of the database. 
	If $D$ only has one entry, then this must be true.
	
	Assume this is true for all $D$ with at most $n-1$ entries, and we want to show $g(D - x_{i+1}) \leq g(D - x_{i})$ for any $i \in [n-1]$. Since $f$ is database-ordered, we know that $f(D - x_{i+1}) \leq f(D - x_i)$. 
	It then suffices to show $\upper(D - x_{i+1}) \leq \upper(D - x_i)$ and $\lowerbound(D - x_{i+1}) \leq \lowerbound(D - x_i)$. By our inductive hypothesis, $\upper(D - x_i) = g(D - x_i - x_n) + \Delta$ and $\upper(D - x_{i+1}) = g(D - x_{i+1} - x_n) + \Delta$ if $i < n-1$, and we note that $\upper(D - x_{n-1}) = \upper(D - x_n)$. Also by our inductive hypothesis, $g(D - x_{i+1} - x_n) \leq g(D - x_i - x_n) $, implying $\upper(D - x_{i+1}) \leq \upper(D - x_i)$. The proof for $\lowerbound(D - x_{i+1}) \leq \lowerbound(D - x_i)$ follows symmetrically.
	
	With this fact, it is straightforward to see that opening up our algorithm, instead of considering all subsets of size $k$, it suffices to consider subsets $(x_1,\ldots, x_k), (x_2,\ldots, x_{k+1}),\ldots,(x_{n-k},\ldots,x_n)$. Then the total number of subproblems that need to be solved is $O(n^2)$.
\end{proof}

If our function is efficiently computable and the entry-ordering is efficiently computable, this then gives an efficient implementation of our recursive algorithm. In particular, for several functions of statistical interest including mean, $\alpha$-trimmed mean, median, maximum, and minimum, this easily yields an efficient algorithm.

\begin{algorithm}[h!]
	\caption{Efficient Implementation for database-ordered functions}
	\begin{algorithmic}
		\State \textbf{Input:} Database-ordered function $f:\R^{<\mathbb{N}} \to \bR$, sensitivity bound $\Delta$, estimate for the empty set $\hmu$, and database $D = (x_1,...,x_n) \in \R^n$ for some arbitrary $n$.
		\State \textbf{Output:} $g(D)$, where $g$ is the \sensitivity~of $f$.
		\State Initialize $g(\emptyset) = \hmu$
		\State Sort $D$ (We will assume $x_1 \leq \cdots \leq x_n$ for simplicity)
		\For {k=1, \ldots, n}
		\For {i = 1, \ldots n-k+1}
		\For {every database $D' = (x_i,...,x_{i+k-1})$}
		\State Let $g(D')= 
		\begin{cases}
		g(D' - x_{i+k-1}) + \Delta,& \text{if } g(D' - x_{i+k-1}) + \Delta \leq f(D')\\
		g(D' - x_{i}) - \Delta,& \text{if } g(D' - x_{i}) - \Delta\geq f(D')\\
		f(D'),              & \text{otherwise}
		\end{cases}$
		\EndFor
		\EndFor
		\EndFor
		\State Output $g(D)$		
	\end{algorithmic}\label{algo.mean}
\end{algorithm}

\begin{corollary}\label{cor:efficient_for_mean_etc}
	We can implement our \sensitivity~with parameter $\Delta$ in $O(n^2)$ time for the functions mean, $\alpha$-trimmed mean, median, maximum, and minimum.
\end{corollary}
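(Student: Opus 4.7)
The plan is to invoke Lemma~\ref{lem:database_ordered_is_efficient_implementable} directly: once each function is shown to be database-ordered with an efficiently computable entry-ordering and an efficiently computable function value, the $O(n^2)$ runtime follows immediately from Algorithm~\ref{algo.mean}. So the heart of the proof is to verify that mean, $\alpha$-trimmed mean, median, maximum, and minimum are all database-ordered, and in each case to identify the entry-ordering with the natural numerical ordering of the real-valued entries.

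For each function, I would fix a database $D'$ not containing $x_i, x_j \in \R$ with $x_i \leq x_j$, and argue that $f(D' + x_i) \leq f(D' + x_j)$. For the mean, $\mu(D' + x_i) - \mu(D' + x_j) = (x_i - x_j)/(|D'|+1) \leq 0$, so the claim is immediate. For maximum, $\max(D' + x_i) = \max(\max(D'), x_i) \leq \max(\max(D'), x_j) = \max(D' + x_j)$; minimum is symmetric. For the median, one can compare element by element once both augmented databases are sorted: the $k$-th order statistic of $D' + x_i$ is at most the $k$-th order statistic of $D'+x_j$, and the median is either a single such statistic or an average of two. The $\alpha$-trimmed mean reduces to the mean of the middle $|D'|+1-2\lfloor \alpha(|D'|+1) \rfloor$ order statistics, and the same monotonicity of order statistics, combined with the mean argument, does the job.

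Having established that the entry-ordering is just the real-number ordering on the $x_i$, one sorts the database in $O(n \log n)$ time. Then the state space of the recursion collapses by Lemma~\ref{lem:database_ordered_is_efficient_implementable} to $O(n^2)$ contiguous intervals, and each function of interest can be evaluated on any subinterval in $O(n)$ time (trivially for max, min, median; for mean and $\alpha$-trimmed mean one can also precompute prefix sums to make each subinterval query $O(1)$, though $O(n)$ per subproblem already suffices for the claimed $O(n^2)$ bound). The dynamic programming update at each subproblem is $O(1)$ given the two adjacent subproblems of size one smaller, yielding total runtime $O(n^2)$.

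The only mildly nontrivial step is the verification for the median and $\alpha$-trimmed mean, since one must argue the entry-ordering claim for every subset $D'$ simultaneously; this is handled by the order-statistic monotonicity observation above. Once that is in place, the rest is a direct appeal to Lemma~\ref{lem:database_ordered_is_efficient_implementable} and a brief inspection of Algorithm~\ref{algo.mean}.
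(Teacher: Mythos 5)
Your proof follows the same plan as the paper's: verify that each function is database-ordered with the entry-ordering given by the usual order on $\R$, invoke Lemma~\ref{lem:database_ordered_is_efficient_implementable} to collapse the recursion to $O(n^2)$ contiguous-interval subproblems, and then argue each subproblem is solvable in constant time given previously computed values. You supply more detail than the paper on the database-ordered verification (the paper merely asserts it is ``simple to see''), and your order-statistic monotonicity argument for median and $\alpha$-trimmed mean is sound. You also propose prefix sums for $O(1)$ evaluation of mean and $\alpha$-trimmed mean, whereas the paper uses the incremental identity $\mu(D) = \frac{n-1}{n}\mu(D - x_n) + \frac{x_n}{n}$ on adjacent subproblems; both work, and for $\alpha$-trimmed mean prefix sums are arguably the cleaner choice since the incremental identity does not carry over directly once trimming boundaries move.

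One slip to flag: you write that ``$O(n)$ per subproblem already suffices for the claimed $O(n^2)$ bound.'' It does not --- there are $\Theta(n^2)$ contiguous-interval subproblems, so $O(n)$ work each gives $O(n^3)$. You genuinely need the $O(1)$-per-subproblem evaluation you describe (via prefix sums, or the paper's recursion, or simply reading off endpoints/midpoint for max/min/median on a sorted interval) to reach $O(n^2)$. Since you already state that fix in the same parenthetical, the overall argument is correct, but the quoted sentence should be deleted or corrected.
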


\begin{proof}
	Let $f$ be any of the functions listed above. It is simple to see that for any $D = (x_1,...,x_n) \in \R^n$, and any $y,z \in \R$, if $y \leq z$ then $f(D + y) \leq f(D + z)$, and if $y \geq z$ then $f(D + y) \geq f(D + z)$. This implies that $f$ is database-ordered, then by Lemma~\ref{lem:database_ordered_is_efficient_implementable} we only need to solve $O(n^2)$ subproblems.
	
	Further, we note that finding the entry-ordering simply requires sorting the entries of $D$ in $O(n\log n)$ time. If the database is ordered, then computing median, minimum, and maximum only requires $O(1)$ time. If we know the mean or $\alpha$-trimmed mean for $D - x_i$ for some $x_i$, we can compute the mean or $\alpha$-trimmed mean of $D$ in $O(1)$ time using the fact that
	\[
	\frac{x_1+...+x_n}{n} = \frac{n-1}{n}\left(\frac{x_1+...+x_{n-1}}{n-1}\right) + \frac{x_n}{n}
	\]
	Note that we compute $D- x_i$ for some $i$ in our subproblems, so we will in fact have access to this value.
	As a result, the full running time will take $O(n^2)$ time.
	
\end{proof}

\subsection{Improved runtime and accuracy for median}\label{subsec:median}

In the previous section, we showed that for several important statistical measures we could give a simple efficient version of our general algorithm. 
To complement this result, we further examine the median function and give an improved analysis that requires only $O(n)$ time for presorted data and provides strong accuracy guarantees.
Improving the running time will utilize the critical property that removing the minimum and maximum value does not change the median.
As was seen in our previous section, our recursion was reduced by only considering removing the maximum or minimum value.
The related fact regarding median will be incorporated into an inductive claim that we never overshoot the true median, and can further reduce our recursion.


\begin{lemma}\label{lem:med_runtime}
Let $med: \R^{<\mathbb{N}} \rightarrow \R$ be the median function and $g: \R^{<\mathbb{N}} \to \R$ be the \sensitivity~of $med$ with parameter $\Delta$. Then for any $D= (x_1,...,x_n)$ such that $x_1 \leq \cdots \leq x_n$, computing $g(D)$ takes $O(n)$ time.

\end{lemma}

\begin{proof}
It follows immediately from Lemma~\ref{lem:database_ordered_is_efficient_implementable} and Lemma~\ref{lem:median_above_and_below} that if $med(D) \geq med(\emptyset)$ then $g(D) = \min\{med(D), g(D - x_n) + \Delta \}$ and otherwise $g(D) = \max\{med(D), g(D - x_1) - \Delta$ \}. We can calculate $med(D)$ and any contiguous subset of $D$ in $O(1)$ time, and the recursion will only be upon one subproblem, implying a runtime of $O(n)$.
\end{proof}

\begin{lemma}\label{lem:median_above_and_below}
If $med(D) \geq med(\emptyset)$, then $med(\emptyset) \leq g(D) \leq med(D)$
\end{lemma}

\begin{proof}
The proof will be inductive, and it is easy to verify that the inequality holds for $|D| \leq 2$. We then consider an arbitrary $D = (x_1,...,x_n)$ where we assume without loss of generality that $x_1 \leq \cdots \leq x_n$ and $n \geq 3$. The critical fact we use here will be that the median does not change if you remove the minimum and maximum values, which is to say that $med(D) = med(D - x_1 - x_n)$. Therefore, if $med(D) \geq med(\emptyset)$, then we must also have $med(D - x_1 - x_n) \geq med(\emptyset)$, which by our inductive claim implies that $med(\emptyset) \leq g(D - x_1 - x_n) \leq med(D - x_1 - x_n) = med(D)$. Applying Lemma~\ref{lem:database_ordered_is_efficient_implementable}, we then have 

\[
g(D - x_1) \leq g(D - x_1 - x_n) + \Delta \leq med(D) + \Delta
\]
and
\[
g(D - x_n) \geq g(D - x_1 - x_n) - \Delta \geq med(D) - \Delta
\]

We then reapply Lemma~\ref{lem:database_ordered_is_efficient_implementable} to achieve our desired result that $med(\emptyset) \leq g(D) \leq med(D)$
\end{proof}


As in \cite{NRS07}, define 
\[
A^{(k)}(D) = \max_{d(D,D') \leq k} LS_f(D').
\]
which is the $k$-local sensitivity of function $f$ for database $D$. For odd $n$, this just reduces to 
$A^{(k)}(D) = \max_{0 \leq t \leq k+1} (x_{m + t} - x_{m + t - k - 1})$ and $m = \frac{n+1}{2}$. It is similar for $n$ is even, and essentially bounds the distance of each value from the median.

Combining this assumption with our previous lemma will then allow for stronger bounds upon $g(D)$.

\begin{lemma}
Given some parameter $\Delta$ and $med(\emptyset)$, if $A^{(k)}(D) \leq 2(k + 1)\Delta$ for $k \leq n/4$ and $med(D) \in [med(\emptyset) - \frac{n}{2}\Delta, med(\emptyset) + \frac{n}{2}\Delta]$, then $g(D) = med(D)$
\end{lemma}

\begin{proof}
Without loss of generality, assume that $med(D) \geq med(\emptyset)$. By Lemma~\ref{lem:median_above_and_below} we know $g(D) \leq med(D)$, then applying Lemma~\ref{lem:median_accuracy_helper} gives our desired result.
\end{proof}

\begin{lemma}\label{lem:median_accuracy_helper}
Given some parameter $\Delta$ and $med(\emptyset)$, assume $A^{(k)}(D) \leq 2(k + 1)\Delta$ for $k \leq n/4$ and $med(D) \in [med(\emptyset) - \frac{n}{2}\Delta, med(\emptyset) + \frac{n}{2}\Delta]$. Let $D_{[1:k]} = (x_1,...,x_k)$, if $med(D) \geq med(\emptyset)$, then $g(D_{[1:k]}) \geq med(D) - (n - k) \Delta$
\end{lemma}

\begin{proof}
It is straightforward to see that our assumptions imply

\[
med(D_{[1:k]}) \geq med(D) - (n - k)\Delta
\]
for any $k \geq n/2$. We then consider our base case to be $k = n/2$, and note that from Lemma~\ref{lem:median_above_and_below} we have $g(D_{[1:k]}) \geq \min \{med(\emptyset), med(D_{[1:k]}) \}$, which by our assumptions immediately implies $g(D_{[1:n/2]}) \geq med(D) - \frac{n}{2}\Delta$.

We then assume this is true for $k - 1 \geq n/2$, so $g(D_{[1:k-1]}) \geq med(D) - (n-k)\Delta - \Delta$. We also know from Lemma~\ref{lem:database_ordered_is_efficient_implementable} that 

\[
g(D_{[1:k]}) \geq \min \{med(D_{[1:k]}), g(D_{[1:k-1]}) + \Delta \}
\]

which implies our desired inequality.

\end{proof}

\subsubsection{Proof of Theorem~\ref{thm:median}}

We now have all the necessary components to give our proof of Theorem~\ref{thm:median}, which we restate and prove below.

\median*

\begin{proof}[Proof of Theorem~\ref{thm:median}] 
The runtime guarantees follow immediately from Lemma~\ref{lem:med_runtime}. Furthermore, if we assume that $med(D) \geq med(\emptyset)$, then Lemma~\ref{lem:median_above_and_below} implies that $g(D) \leq med(D)$ and Lemma~\ref{lem:median_accuracy_helper} implies that $g(D) \geq med(D)$ because we have the same assumptions, and so $g(D) = med(D)$
The symmetric version of these lemmas follows immediately, and we also have $g(D) = med(D)$ when $med(D) \leq med(\emptyset)$.

\end{proof}

\subsection{Accuracy bounds for mean}\label{s.meanacc}

We next consider the mean function, and provide strong bounds on the accuracy of our \sensitivity.
While the analysis will be rather involved, we believe that the ultimate guarantees are highly intuitive.
Our proof will also show that for databases with entries bounded in a $\Delta$ sensitivity range, we perfectly preserve the accuracy between our new function and the mean function.
Further, the key ideas in our proof are closely related to the construction of our recursive function, and we believe could be extended to other functions using a similar framework.

The general proof idea will be to give two simpler recursive functions that yield reasonably tight upper and lower bounds on our function.
Due to their further simplicity, it will be much easier to give nice error bounds with respect to the true mean for these functions.

The idea behind constructing the upper and lower bound functions will be simple. 
Recall that we showed our $g$ for the mean function has the property that $\upper(D) = g(D - x_n) + \Delta$ and $\lowerbound(D) = g(D - x_1) - \Delta$ because we showed $g(D - x_n) \leq g(D - x_{n-1}) \leq \cdots \leq g(D - x_1)$ if we assume $x_1 \leq \cdots \leq x_n$.
Intuitively, this is due to the fact that removing the maximum value will minimize mean and removing the minimum value will maximize mean. 
Accordingly, we will just iteratively remove the maximum value to give a lower bound on our function and iteratively remove the minimum value to give an upper bound on our function.
These functions then only require solving $O(n)$ subproblems which will simplify the analysis.

\begin{definition}[Mean-bounding functions]
	For any $D = (x_1,...,x_n)$, define 
	\[
	h_{lower}(D)= 
	\begin{cases}
	h_{lower}(D - x_n) + \Delta,& \text{if } h_{lower}(D - x_n) + \Delta\leq \mu(D)\\
	h_{lower}(D - x_n) - \Delta,& \text{if } h_{lower}(D - x_n) - \Delta\geq \mu(D)\\
	\mu(D),              & \text{otherwise}
	\end{cases}
	\] 
	and
	\[
	h_{upper}(D)= 
	\begin{cases}
	h_{upper}(D -  x_1) + \Delta,& \text{if } h_{upper}(D -  x_1) + \Delta\leq \mu(D)\\
	h_{upper}(D -  x_1) - \Delta,& \text{if } h_{upper}(D -  x_1) - \Delta\geq \mu(D)\\
	\mu(D),              & \text{otherwise}
	\end{cases}
	\] 
\end{definition}

We will first show that $h_{upper}$ and $h_{lower}$ are upper and lower bounds, respectively, of our \sensitivity~$g$ with parameter $\Delta$. Then we further examine the properties of these functions.

\begin{lemma}\label{lem:upper_and_lower_bound_g}
	Let $\mu: \R^{<\mathbb{N}} \rightarrow \R$ be the mean function with chosen parameters $\hmu$ and $\Delta$. For any $D = (x_1,...,x_n) \in \R^n$ with $x_1 \leq x_2 \leq \cdots \leq x_n$, then $h_{lower}(D) \leq g(D)$  and $h_{upper}(D) \geq g(D)$ where $g: \R^{<\mathbb{N}} \rightarrow \R$ is our \sensitivity~with parameter $\Delta$.
\end{lemma}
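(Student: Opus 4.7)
\medskip

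\noindent\textbf{Proof proposal.} The plan is to prove both inequalities by induction on $|D|$. The base case $D = \emptyset$ is immediate since $g(\emptyset) = h_{lower}(\emptyset) = h_{upper}(\emptyset) = \hmu$. By symmetry, I will only outline the lower-bound direction $h_{lower}(D) \leq g(D)$; the other direction follows by a symmetric argument with maxima swapped for minima.

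The key observation is that both $g(D)$ and $h_{lower}(D)$ are defined as projections of $\mu(D)$ onto an interval of width $2\Delta$: by Lemma~\ref{lem:database_ordered_is_efficient_implementable}, $g(D)$ projects $\mu(D)$ onto $[\lowerbound(D), \upper(D)] = [g(D-x_1)-\Delta,\; g(D-x_n)+\Delta]$, whereas $h_{lower}(D)$ projects $\mu(D)$ onto $[h_{lower}(D-x_n)-\Delta,\; h_{lower}(D-x_n)+\Delta]$. I will use the elementary fact that the projection $\mu \mapsto \max(a,\min(b,\mu))$ is monotone jointly in both endpoints: if $a \le c$ and $b \le d$ then $\mathrm{proj}_{[a,b]}(\mu) \le \mathrm{proj}_{[c,d]}(\mu)$. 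So it suffices to verify the two endpoint comparisons
\begin{align*}
h_{lower}(D-x_n) + \Delta &\le g(D-x_n) + \Delta,\\
h_{lower}(D-x_n) - \Delta &\le g(D-x_1) - \Delta.
\end{align*}
The first is immediate from the outer inductive hypothesis applied to $D-x_n$. The second reduces, using the outer inductive hypothesis $h_{lower}(D-x_1) \le g(D-x_1)$, to the sub-claim $h_{lower}(D-x_n) \le h_{lower}(D-x_1)$.

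The main obstacle is this sub-claim, which is a monotonicity property of $h_{lower}$ under entry-wise domination. I would prove the following auxiliary statement: if $(y_1,\dots,y_k)$ and $(z_1,\dots,z_k)$ are sorted databases with $y_i \le z_i$ for all $i$, then $h_{lower}((y_1,\dots,y_k)) \le h_{lower}((z_1,\dots,z_k))$. Applied to the two sorted sequences $(x_1,\dots,x_{n-1})$ and $(x_2,\dots,x_n)$ (which satisfy $x_i \le x_{i+1}$ position by position), this yields the required inequality. The auxiliary statement is itself proved by an inner induction on $k$: letting $a_k$ and $b_k$ denote the respective values of $h_{lower}$ on the length-$k$ prefixes, the recursive definitions of $h_{lower}$ give that $a_k$ (resp.\ $b_k$) is the projection of $\mu((y_1,\dots,y_k))$ (resp.\ $\mu((z_1,\dots,z_k))$) onto $[a_{k-1}-\Delta,a_{k-1}+\Delta]$ (resp.\ $[b_{k-1}-\Delta,b_{k-1}+\Delta]$). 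The input to the projection satisfies $\mu((y_1,\dots,y_k)) \le \mu((z_1,\dots,z_k))$, and by the inner inductive hypothesis the interval shifts weakly right, so the joint monotonicity of projection gives $a_k \le b_k$.

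Combining everything, the two endpoint inequalities hold, the projection-monotonicity lemma delivers $h_{lower}(D) \le g(D)$, and the symmetric argument handles $g(D) \le h_{upper}(D)$. The only non-routine piece is the sub-claim on monotonicity of $h_{lower}$ under entry-wise domination, and this is exactly where we exploit that $h_{lower}$ adds entries in increasing order while $h_{upper}$ adds them in decreasing order.
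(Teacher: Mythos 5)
Your proof is correct, and at the top level it mirrors the paper's: induction on $|D|$, recognizing that both $g(D)$ and $h_{lower}(D)$ project $\mu(D)$ onto intervals and then comparing the endpoints. But you take an unnecessary detour in handling the second endpoint inequality. To show $h_{lower}(D-x_n) - \Delta \le g(D-x_1) - \Delta$, you route through $h_{lower}(D-x_n) \le h_{lower}(D-x_1) \le g(D-x_1)$, which forces you to state and prove a new auxiliary claim about monotonicity of $h_{lower}$ under entry-wise domination. The paper instead routes through $h_{lower}(D-x_n) \le g(D-x_n) \le g(D-x_1)$: the first inequality is the very same outer inductive hypothesis you already used for the other endpoint, and the second is precisely the chain $g(D-x_n) \le g(D-x_{n-1}) \le \cdots \le g(D-x_1)$ established inside the proof of Lemma~\ref{lem:database_ordered_is_efficient_implementable} — the same lemma you already invoked to identify the $g$ interval. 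That chain is thus available for free, and using it eliminates your auxiliary lemma entirely. Your auxiliary lemma and its inner induction are themselves correct (the joint monotonicity of projection argument goes through), so the proposal is sound; it is simply longer than it needs to be, because the required monotonicity fact is already in hand for $g$ rather than needing to be re-derived for $h_{lower}$.
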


\begin{proof}
	We will prove both inequalities by induction, where we first note that if $D$ only has one entry, then by construction $h_{lower}(D) = g(D) = h_{upper}(D)$.
	
	For any database $D$ of $n$ entries, by induction we have $h_{lower}(D - x_n) \leq g(D - x_n)$ and note that within the proof of Lemma~\ref{lem:database_ordered_is_efficient_implementable} we showed $g(D - x_n) \leq g(D - x_1)$, which implies $h_{lower}(D) \leq g(D)$. Similarly, by induction we have $h_{upper}(D - x_1) \geq g(D - x_1)$ and Lemma~\ref{lem:database_ordered_is_efficient_implementable} gives $g(D - x_1) \geq g(D - x_n)$, which implies $h_{upper}(D) \leq g(D)$.	
\end{proof}


We now use the simpler recursive structure of $h_{lower}$ and $h_{upper}$ to get more explicit forms of their output.

\begin{lemma}\label{lem:mean_first_index_for_accuracy}
	Let $\mu: \R^{<\mathbb{N}} \rightarrow \R$ be the mean function with chosen parameters $\hmu$ and $\Delta$.
	For any $D = (x_1,...,x_n)$, assume that $x_1 \leq x_2 \leq \cdots \leq x_n$, and let $D_{[i:j]} = (x_i,...,x_j)$. Let $k$ be the largest index such that $h_{lower}(D_{[1:k]}) \geq \mu({D_{[1:k]}})$ (if one exists), then
	\[
	h_{lower}(D_{[1:k]}) = \max \{\hmu - k\Delta, \mu({D{[1:k]}})  \}.
	\]
	Let $l$ be the smallest index such that $h_{upper}(D_{[l:n]}) \geq \mu({D_{[l:n]}})$ (if one exists), then
	\[
	h_{upper}(D_{[l:n]}) = \min \{\hmu + (n-l)\Delta, \mu({D_{[l:n]}})  \}.
	\]
\end{lemma}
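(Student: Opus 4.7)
The plan is to prove both statements by tracking the trajectory of $h_{lower}$ along the increasing chain of subsets $\emptyset \subset D_{[1:1]} \subset D_{[1:2]} \subset \cdots \subset D_{[1:n]}$ (and symmetrically, for $h_{upper}$, along the chain $\emptyset \subset D_{[n:n]} \subset D_{[n-1:n]} \subset \cdots \subset D_{[1:n]}$). Two elementary observations set up the argument. First, since $x_1 \leq \cdots \leq x_n$, each newly added element $x_i$ satisfies $x_i \geq \mu(D_{[1:i-1]})$, so the sequence of means $\mu(D_{[1:i]})$ is non-decreasing in $i$. Second, inspecting the three cases of the recursion, $h_{lower}$ changes by at most $\Delta$ in absolute value per step: the decrease and increase cases move by exactly $\Delta$, and the jump case moves to $\mu(D_{[1:i]})$, which must be strictly within $\Delta$ of the previous value (otherwise a decrease or increase case would have triggered first). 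This immediately yields the trivial bound $h_{lower}(D_{[1:k]}) \geq \hmu - k\Delta$.

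Given the hypothesis $h_{lower}(D_{[1:k]}) \geq \mu(D_{[1:k]})$, I split into two cases. If $h_{lower}(D_{[1:k]}) = \mu(D_{[1:k]})$, the trivial bound gives $\hmu - k\Delta \leq \mu(D_{[1:k]})$, so the $\max$ equals $\mu(D_{[1:k]})$ as claimed. If instead $h_{lower}(D_{[1:k]}) > \mu(D_{[1:k]})$, I argue by reverse induction that the decrease case was taken at every step $i \in \{1, \ldots, k\}$. Indeed, strict inequality at step $i$ rules out both the increase case (which yields $h_{lower}(D_{[1:i]}) \leq \mu(D_{[1:i]})$) and the jump case (which yields equality), forcing the decrease case. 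This then gives $h_{lower}(D_{[1:i-1]}) - \Delta > \mu(D_{[1:i]}) \geq \mu(D_{[1:i-1]})$ where the last inequality is monotonicity of the mean, so $h_{lower}(D_{[1:i-1]}) > \mu(D_{[1:i-1]}) + \Delta$, which is a fortiori a strict inequality at step $i-1$ that lets the induction continue. Unrolling $k$ decrements from $\hmu$ yields $h_{lower}(D_{[1:k]}) = \hmu - k\Delta$, and the $\max$ collapses to $\hmu - k\Delta$ since $\hmu - k\Delta > \mu(D_{[1:k]})$ in this case.

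The proof for $h_{upper}$ is symmetric: the chain is built by removing the minimum, so the relevant mean sequence $\mu(D_{[l:n]})$ is non-increasing as $l$ decreases, the roles of the increase and decrease cases are swapped with respect to being above or below the mean, and strict inequality $h_{upper}(D_{[l:n]}) > \mu(D_{[l:n]})$ propagates backward by the same monotonicity-plus-case-exclusion argument to give $h_{upper}(D_{[l:n]}) = \hmu + (n-l)\Delta$. The main obstacle is the reverse induction in the strict case: one must verify that strict inequality at step $i$ not only propagates back to strict inequality at step $i-1$, but actually strengthens to a gap of more than $\Delta$ at step $i-1$, which is precisely the slack the case analysis needs to rule out the non-decrease cases there. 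Monotonicity of the mean sequence is exactly what supplies this strengthening at each step.
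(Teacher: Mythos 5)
Your proof is correct and uses the same key ingredients as the paper's — monotonicity of $\mu(D_{[1:i]})$ in $i$, the observation that strict inequality $h_{lower} > \mu$ forces the decrease branch, and the trivial bound $h_{lower}(D_{[1:k]}) \geq \hmu - k\Delta$. The only superficial difference is bookkeeping: you propagate the strict inequality backward by reverse induction, whereas the paper runs the equivalent contrapositive forward (``once $h_{lower} \leq \mu$, it stays $\leq \mu$'') and splits on the location of the first crossing; your remark that the gap must strengthen to more than $\Delta$ at each step is true but not actually needed, since plain strict inequality already rules out the increase and jump cases.
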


\begin{proof}
	We consider the first equality here, and the second follows symmetrically. 
	
	Note that $\mu({D_{[1:k]}})$ is increasing in $k$ because $x_1 \leq \cdots \leq x_n$. By construction of $h_{lower}$, if for some index $k'$ we have $h_{lower}(D_{[1:k']}) \leq \mu({D_{[1:k']}})$, then $h_{lower}(D_{[1:k'+1]}) \leq \mu({D_{[1:k'+1]}})$. 
	Accordingly, if we let $k_{min}$ be the first index such that $h_{lower}(D_{[1,k_{min}]}) \leq \mu(D_{1,k_{min}})$, then in the case that $k \geq k_{min}$ we must have $h_{lower}(D_{[1:k]}) = \mu(D_{[1:k]})$. If $k < k_{min}$, then we must have $h_{lower}(D_{[1:k]}) > \mu(D_{[1:k]})$, and furthermore $h_{lower}(D_{[1:k']}) > \mu(D_{[1:k']})$ for all $k' \leq k$, which implies that we always decreased by $\Delta$ and we get $h_{lower}(D_{[1:k]}) = \hmu - k\Delta$.	
\end{proof}

We use the explicit forms of $h_{lower}$ and $h_{upper}$ to sandwich the loss in accuracy, by considering the inflection point of $n/3$ and bounding the error from $h_{lower}$ separately for $k \leq n/3$ and for $k \geq n/3$. 
The analogous result follows symmetrically for $h_{upper}$.

\begin{lemma}\label{lem:mean_error_bounds}
	Let $\mu: \R^{<\mathbb{N}} \rightarrow \R$ be the mean function with chosen parameters $\hmu$ and $\Delta$. If $g: \R^{<\mathbb{N}} \rightarrow \R$ is our \sensitivity~with parameter $\Delta$, then given any $D = (x_1,...,x_n)$,
	\[
	\abs{g(D) - \mu(D)} \leq \max\{\abs{\hmu - \mu(D)} - \frac{n}{3} \Delta, 0 \} + \sum_{i=1}^n \max\left\{\frac{27\abs{x_i - \mu(D)}}{n} - \Delta,0\right\}.
	\]
\end{lemma}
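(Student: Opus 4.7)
My plan is first to use Lemma~\ref{lem:upper_and_lower_bound_g} to sandwich $g(D)$ between the simpler one-sided functions $h_{lower}(D)$ and $h_{upper}(D)$, reducing the task to bounding the errors of $h_{lower}$ and $h_{upper}$ individually. Since $h_{lower}(D) \leq g(D) \leq h_{upper}(D)$, a short case split on the position of $\mu(D)$ relative to this interval yields
\[
|g(D) - \mu(D)| \leq \max\{|h_{lower}(D) - \mu(D)|,\ |h_{upper}(D) - \mu(D)|\},
\]
so it suffices to bound each of these two simpler errors by the claimed right-hand side. The construction of $h_{upper}$ on $D$ mirrors that of $h_{lower}$ on the negated database (swapping the roles of smallest and largest entry and of $\hmu$ with $-\hmu$), so by a symmetry argument it is enough to handle $|h_{lower}(D) - \mu(D)|$.

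For $h_{lower}$, I would invoke Lemma~\ref{lem:mean_first_index_for_accuracy} with $k$ equal to the largest index for which $h_{lower}(D_{[1:k]}) \geq \mu(D_{[1:k]})$, taking $k = 0$ if no such index exists. The lemma supplies the explicit formula $h_{lower}(D_{[1:k]}) = \max\{\hmu - k\Delta,\ \mu(D_{[1:k]})\}$; for $j > k$, the function sits weakly below $\mu$ and moves toward $\mu(D_{[1:j]})$ with step size at most $\Delta$, so the gap $\mu(D_{[1:j]}) - h_{lower}(D_{[1:j]})$ can only grow through jumps $\mu(D_{[1:j]}) - \mu(D_{[1:j-1]}) = (x_j - \mu(D_{[1:j-1]}))/j$ that exceed $\Delta$. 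I then split along the inflection $k = n/3$ suggested in the setup: when $k \geq n/3$, the residual descent bias $\max\{\hmu - k\Delta - \mu(D_{[1:k]}),\ 0\}$ is dominated by $\max\{|\hmu - \mu(D)| - n\Delta/3,\ 0\}$ once the shift from $\mu(D_{[1:k]})$ to $\mu(D)$ is absorbed into the outlier sum; when $k < n/3$, the first bound term vanishes and the entire error accumulates from the excessive jumps over the remaining $n - k > 2n/3$ steps.

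The main obstacle will be the bookkeeping required to convert each per-step excess $\max\{(x_j - \mu(D_{[1:j-1]}))/j - \Delta,\ 0\}$ into the cleaner form $\max\{27|x_j - \mu(D)|/n - \Delta,\ 0\}$ appearing in the statement. This conversion chains together the estimate $1/j \leq 3/n$ (valid for $j \geq n/3$) with the triangle inequality $|x_j - \mu(D_{[1:j-1]})| \leq |x_j - \mu(D)| + |\mu(D) - \mu(D_{[1:j-1]})|$ applied along the sequence. Each step of the chain costs a factor of $3$, and three successive applications accumulate to the $27 = 3^3$ factor in the lemma statement; the same cubic accumulation is what forces the $n/3$ threshold appearing in the first term.
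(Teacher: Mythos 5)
Your high-level plan matches the paper's: sandwich $g$ between $h_{lower}$ and $h_{upper}$ via Lemma~\ref{lem:upper_and_lower_bound_g}, reduce to $h_{lower}$ by symmetry, invoke Lemma~\ref{lem:mean_first_index_for_accuracy} with $k$ the last index at which $h_{lower}$ sits at or above $\mu$, and split at $k = n/3$. However, there are two real problems in how you fill in the cases.

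First, you have the case split backwards. When $k \leq n/3$ (or $k$ does not exist), the paper concludes $h_{lower}(D) = h_{lower}(D_{[1:k]}) + (n-k)\Delta \geq (\hmu - k\Delta) + (n-k)\Delta \geq \hmu + \tfrac{n}{3}\Delta$, which directly yields the \emph{first} term $\max\{|\hmu - \mu(D)| - \tfrac{n}{3}\Delta, 0\}$. When $k \geq n/3$, the paper instead uses $h_{lower}(D) \geq \mu(D_{[1:k]}) + (n-k)\Delta$ and charges the error to the \emph{second} (outlier) term. You assert the opposite pairing --- that $k < n/3$ makes the first term vanish and the outlier sum dominate --- which indicates a misread of how the two pieces of the bound are activated.

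Second, and more importantly, the accounting for the factor $27$ does not work as described. The paper obtains $27 = 9 \cdot 3$: the $9$ comes from $\frac{1}{i(i+1)} \leq \frac{9}{n^2}$ when $i \geq n/3$ (note it is $i(i+1)$ in the denominator, not just $j$, because $\mu(D_{[1:i+1]}) - \mu(D_{[1:i]}) = \frac{1}{i(i+1)}\sum_{j\leq i}(x_{i+1}-x_j)$), and the remaining factor of $3$ comes from the nontrivial combinatorial estimate in Lemma~\ref{cor:bound_on_sum_of_absolute_values}, which shows
\[
\frac{1}{n^2}\sum_{i=k}^n\sum_{j=1}^i \tfrac{1}{3}\,\abs{x_i - x_j} \;\leq\; \frac{1}{n}\sum_{i=k}^n \abs{x_i - \mu(D)}.
\]
That lemma is the crux of the conversion from pairwise differences to centered absolute deviations, and its proof involves a careful analysis of the transition index where $\sum_j(x_i - x_j)$ changes sign --- it is not a triangle-inequality chain. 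Your proposal replaces this with ``$|x_j - \mu(D_{[1:j-1]})| \leq |x_j - \mu(D)| + |\mu(D) - \mu(D_{[1:j-1]})|$ applied along the sequence,'' but this introduces the cross-terms $|\mu(D) - \mu(D_{[1:j-1]})|$ whose sum is not obviously controlled by $\sum_i|x_i - \mu(D)|$ with a constant factor; handling it is exactly what Lemma~\ref{cor:bound_on_sum_of_absolute_values} accomplishes. Without that ingredient (or an equivalent argument), the $\frac{27}{n}\abs{x_i - \mu(D)}$ form of the bound is unjustified.
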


\begin{proof}
	If we can instead prove the same upper bounds for both $\abs{h_{lower}(D) - \mu(D)}$ and $\abs{h_{upper}(D) - \mu(D)}$, then the desired bound for $\abs{g(D) - \mu(D)}$ follows from Lemma~\ref{lem:upper_and_lower_bound_g}. We give the desired bound for $\abs{h_{lower}(D) - \mu(D)}$, and the bound for $\abs{h_{upper}(D) - \mu(D)}$ follows symmetrically.
	
	Again, let $k$ be the largest index such that $h_{lower}(D_{[1:k]}) \geq \mu({D_{[1:k]}})$ (if one exists). If $k \leq n/3$ or none exists, then it immediately follows from Lemma~\ref{lem:mean_first_index_for_accuracy} that $h_{lower}(D) \geq \hmu + \frac{n}{3}\Delta$, which implies $|\mu(D) - h_{lower}(D)| \leq |\mu(D) - \hmu| - \frac{n}{3}\Delta$.
	
	If $k \geq n/3$, then it is implied by Lemma~\ref{lem:mean_first_index_for_accuracy} that $h_{lower}(D)  = \max \{\hmu - k\Delta, \mu({D_{[1:k]}})  \} + (n-k)\Delta \geq \mu({D_{[1:k]}}) + (n-k)\Delta $ and therefore,
	\[
	\mu_D - h_{lower}(D) \leq \mu(D) - \mu({D_{[1:k]}}) + (n-k)\Delta = \sum_{i=k}^{n-1} \left(\mu({D_{[1:i+1]}}) - \mu({D_{[1:i]}}) \right) - (n - k)\Delta.
	\]
	Furthermore, 
	\[
	\mu({D_{[1:i+1]}}) - \mu({D_{[1:i]}})= \frac{x_1 + \cdots + x_{i+1}}{i+1} - \frac{x_1 + \cdots + x_i}{i} = \frac{1}{i(i+1)}\left( \sum_{j=1}^i x_{i+1} - x_j \right).
	\]
	
%

	We use the fact that $i \geq n/3$ to achieve,
	\[
	\mu(D) - h_{lower}(D) \leq \left(\frac{9}{n^2} \sum_{i=k}^n\sum_{j=1}^i (x_{i} - x_j) \right)- (n - k)\Delta.
	\]
	
	Applying Lemma~\ref{cor:bound_on_sum_of_absolute_values} (stated below) gives, 
	\[
	\mu(D) - h_{lower}(D) \leq \left(\frac{27}{n} \sum_{i=k}^n \abs{x_{i} - \mu(D)} \right)- (n - k)\Delta =  \sum_{i=k}^n \left( \frac{27\abs{x_{i} - \mu(D)}}{n} -  \Delta \right)
	\]
	We then add in non-negative terms that are necessary for the symmetric version with $h_{upper}$ to achieve our desired bound.
\end{proof}

We used the following lemma to simplify the bounds in Lemma \ref{lem:mean_error_bounds} beyond those stated in the more general Lemma~\ref{lem:1D_error_bounds}. We relegate the proof of this lemma to the appendix.

\begin{lemma}\label{cor:bound_on_sum_of_absolute_values}
	For any set of reals $D = (x_1,...,x_n)$ where $x_1 \leq \cdots \leq x_n$, given any index $k \in [n]$,
	\[
	\frac{1}{n^2} \sum_{i=k}^n \sum_{j=1}^i \frac{1}{3} \abs{x_i - x_j} 
	\leq \frac{1}{n} \sum_{i=k}^n \abs{x_i - \mu(D)}.
	\]	
\end{lemma}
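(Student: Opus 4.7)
The plan is to reduce the left-hand side to a combination of signed deviations $(x_i - \mu)$ plus a pair-difference remainder, and then control the remainder by the triangle inequality. The crucial identity is that, for every fixed $i$,
\[
\sum_{j=1}^{n}(x_i - x_j) = n x_i - n\mu = n(x_i - \mu),
\]
so splitting the sum at index $i$ yields
\[
\sum_{j=1}^{i}(x_i - x_j) \;=\; n(x_i - \mu) \;+\; \sum_{j=i+1}^{n}(x_j - x_i).
\]
Since the $x_j$ are sorted, $\sum_{j=1}^{i}|x_i - x_j| = \sum_{j=1}^{i}(x_i - x_j)$, and the second term on the right is non-negative. Writing $S_i$ and $T_i$ for the two sides of this decomposition, summing over $i\in\{k,\dots,n\}$ gives
\[
\sum_{i=k}^{n} S_i \;=\; n\sum_{i=k}^{n}(x_i - \mu) \;+\; \sum_{i=k}^{n} T_i.
\]

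For the first piece I will use the trivial bound $\sum_{i=k}^{n}(x_i - \mu) \le \sum_{i=k}^{n}|x_i - \mu|$. For the second piece, I rewrite
\[
\sum_{i=k}^{n} T_i \;=\; \sum_{k \le i < j \le n}(x_j - x_i),
\]
and apply the triangle inequality $x_j - x_i \le |x_j - \mu| + |x_i - \mu|$ term by term. A quick counting argument shows each index $m\in\{k,\dots,n\}$ appears in exactly $n-k$ of these ordered pairs, so
\[
\sum_{k \le i < j \le n}(|x_i-\mu| + |x_j - \mu|) \;=\; (n-k)\sum_{m=k}^{n}|x_m - \mu|.
\]

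Combining the two bounds gives $\sum_{i=k}^{n} S_i \le (n + (n-k))\sum_{i=k}^{n}|x_i - \mu| \le 2n\sum_{i=k}^{n}|x_i - \mu|$. Dividing by $3n^2$ yields $\tfrac{2}{3n}\sum_{i=k}^{n}|x_i - \mu|$, which is at most $\tfrac{1}{n}\sum_{i=k}^{n}|x_i - \mu|$, matching the target inequality (with some slack in the constant). The only subtle point is recognizing the telescoping identity $S_i = n(x_i-\mu) + T_i$; once that is in hand, the rest is the triangle inequality and a counting check. I expect this identity to be the main conceptual step, since without it the factor $1/3$ looks mysterious and a direct term-by-term bound of $S_i$ against $|x_i - \mu|$ fails (as seen in the example $x_1 = -M$, $x_2 = \cdots = x_n = 0$ with $k=2$, where $S_i = M$ while $n|x_i-\mu| = M$, so the slack must come from summing over $i$, not from a pointwise comparison).
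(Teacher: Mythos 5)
Your proof is correct, and it takes a genuinely different route from the paper's. The paper also begins by writing $n\abs{x_i - \mu} = \abs{\sum_{j=1}^n (x_i - x_j)}$, but then keeps absolute values, observes that the quantity $\sum_{j\le i}\abs{x_i-x_j} - \sum_{j\ge i}\abs{x_i-x_j}$ changes sign at some transition index $l$, and carries out a case analysis on whether $l \le k$ or $l > k$, invoking an auxiliary fact (Fact~\ref{fact:helper_for_fact_about_absolute_value_sum}) about where the transition sits. Your argument sidesteps the case split entirely: the signed identity $S_i = \sum_{j\le i}(x_i - x_j) = n(x_i - \mu) + T_i$ with $T_i = \sum_{j>i}(x_j - x_i) \ge 0$, together with the pair-counting bound $\sum_{i=k}^n T_i = \sum_{k\le i<j\le n}(x_j - x_i) \le (n-k)\sum_{m=k}^n \abs{x_m - \mu}$, gives $\sum_{i=k}^n S_i \le (2n-k)\sum_{i=k}^n\abs{x_i - \mu}$ in one pass. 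This is both shorter and strictly stronger — it shows the constant $1/3$ in the lemma could be improved to $1/2$ — whereas the paper's route is tailored to exactly the stated $1/3$. One small quibble: your concluding example ($x_1 = -M$, $x_2=\cdots=x_n=0$) does not actually exhibit failure of a pointwise bound $S_i \le 3n\abs{x_i-\mu}$; there $S_i = n\abs{x_i-\mu}$, which still satisfies the pointwise inequality with slack. A cleaner witness is any case with $x_i = \mu$ but $S_i > 0$ (e.g.\ $x = (0,1,2)$, $i=2$), where the right-hand side vanishes pointwise but $S_i$ does not. This does not affect the validity of your proof, which is sound as written.
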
 

To finally obtain all the necessary components for the proof of Theorem~\ref{thm:mean}, it is only left to show that when all the inputs of the database are in a nicely bounded range, our \sensitivity~will perfectly fit to the function $\mu$.


\begin{lemma}\label{lem:mean_does_good_on_wellbehaved}
	Let $\mu: \R^{<\mathbb{N}} \rightarrow \R$ be the mean function with chosen parameters $\hmu$ and $\Delta$. If $g: \R^{<\mathbb{N}} \rightarrow \R$ is our \sensitivity~with parameter $\Delta$, then given any $D = (x_1,...,x_n) \in \R^n$, if for all $x_i \in D$ we have $x_i \in [\hmu + \alpha \Delta, \hmu + (\alpha + n)\Delta]$ for $\alpha \in [-n,0]$, then $g(D) = \mu(D)$.
\end{lemma}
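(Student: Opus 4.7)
The plan is to verify $g(D) = \mu(D)$ by showing the feasibility interval $[\lowerbound(D), \upper(D)]$ contains $\mu(D)$. Since mean is database-ordered, Lemma~\ref{lem:database_ordered_is_efficient_implementable} gives $\upper(D) = g(D_{[1:n-1]}) + \Delta$ and $\lowerbound(D) = g(D_{[2:n]}) - \Delta$ (assuming $x_1 \le \cdots \le x_n$), so it suffices to show $g(D_{[1:n-1]}) \ge \mu(D) - \Delta$ and $g(D_{[2:n]}) \le \mu(D) + \Delta$. By Lemma~\ref{lem:upper_and_lower_bound_g}, these follow from the stronger statements $h_{lower}(D_{[1:n-1]}) \ge \mu(D) - \Delta$ and $h_{upper}(D_{[2:n]}) \le \mu(D) + \Delta$, which are symmetric, so I would focus on the $h_{lower}$ side.

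The main technical step is to prove by induction on $k$ the stronger claim $h_{lower}(D_{[1:k]}) \ge \mu(D) - (n-k)\Delta$. The base case $k=0$ follows since $h_{lower}(\emptyset) = \hmu$ and the bounded assumption gives $\mu(D) \le \hmu + n\Delta/2 \le \hmu + n\Delta$. The inductive step splits into the three cases of the $h_{lower}$ recursion. Case (i), where $h_{lower}$ rises by $\Delta$, follows immediately from the inductive hypothesis. Case (iii), where $h_{lower}(D_{[1:k]}) = \mu(D_{[1:k]})$, reduces to $\mu(D) - \mu(D_{[1:k]}) \le (n-k)\Delta$; writing the difference as $\frac{k \sum_{i>k} x_i - (n-k)\sum_{i\le k} x_i}{nk}$ and plugging the extreme allowed values $x_i = \hmu \pm n\Delta/2$ into numerator and denominator yields exactly $(n-k)\Delta$.

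The main obstacle is case (ii), where $h_{lower}$ drops by $\Delta$, since the induction hypothesis alone would only give the useless bound $\mu(D) - (n-k+2)\Delta$. The plan here is to invoke the monotonicity argument from the proof of Lemma~\ref{lem:mean_first_index_for_accuracy}: case (ii) at step $k$ implies $h_{lower}(D_{[1:k-1]}) > \mu(D_{[1:k-1]})$ strictly, which by backward induction (ruling out cases (i) and (iii) at each earlier step) forces case (ii) at every step $j \le k$, pinning down $h_{lower}(D_{[1:j]}) = \hmu - j\Delta$. Combining the case (ii) condition $\hmu - k\Delta \ge \mu(D_{[1:k]})$ with the bounded assumption $\mu(D_{[1:k]}) \ge \hmu - n\Delta/2$ forces $k \le n/2$. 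Then bounding $\mu(D) = \tfrac{k}{n}\mu(D_{[1:k]}) + \tfrac{n-k}{n}\mu(D_{[k+1:n]}) \le \tfrac{k}{n}(\hmu - k\Delta) + \tfrac{n-k}{n}(\hmu + n\Delta/2)$ and simplifying reduces the required inequality $\mu(D) \le \hmu + (n-2k)\Delta$ to $(n-k)(n-2k) \ge 0$, which holds for $k \le n/2$. This yields $h_{lower}(D_{[1:k]}) = \hmu - k\Delta \ge \mu(D) - (n-k)\Delta$.

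Setting $k = n-1$ in the induction gives $h_{lower}(D_{[1:n-1]}) \ge \mu(D) - \Delta$, hence $g(D_{[1:n-1]}) \ge \mu(D) - \Delta$ by Lemma~\ref{lem:upper_and_lower_bound_g}. The symmetric argument for $h_{upper}$ produces $g(D_{[2:n]}) \le \mu(D) + \Delta$. Together these give $\lowerbound(D) \le \mu(D) \le \upper(D)$, and by Lemma~\ref{lem:1d_sensitivity_guarantees} the interval is nonempty, so Definition~\ref{def:1D_sensitivity_function} yields $g(D) = \mu(D)$.
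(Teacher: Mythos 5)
Your proof is correct, but it takes a genuinely different route from the paper's. The paper shows directly that $h_{lower}(D) \geq \mu(D)$ and $h_{upper}(D) \leq \mu(D)$ by contradiction: it supposes $h_{lower}(D) < \mu(D)$, locates the last index $k$ with $h_{lower}(D_{[1:k]}) \geq \mu(D_{[1:k]})$ so that $h_{lower}(D) = h_{lower}(D_{[1:k]}) + (n-k)\Delta$, and then uses the bounded range to show $\mu(D) - h_{lower}(D_{[1:k]}) \leq (n-k)\Delta$, yielding the contradiction. Since $h_{lower} \leq g \leq h_{upper}$ by Lemma~\ref{lem:upper_and_lower_bound_g}, the sandwich $\mu(D) \leq h_{lower}(D) \leq g(D) \leq h_{upper}(D) \leq \mu(D)$ finishes. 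You instead prove the stronger parameterized claim $h_{lower}(D_{[1:k]}) \geq \mu(D) - (n-k)\Delta$ by forward induction on $k$, splitting into the three branches of the $h_{lower}$ recursion, and invoking the monotonicity-of-phases argument (implicit in Lemma~\ref{lem:mean_first_index_for_accuracy}) to handle the decreasing branch. You then route through $\lowerbound(D)$ and $\upper(D)$ at $k=n-1$, though you could equally take $k=n$ in your induction and conclude $h_{lower}(D)\geq \mu(D)$ directly, recovering the paper's endpoint. The paper's contradiction argument is leaner and avoids the three-case split; your forward induction yields slightly more information (a quantitative deficit bound at every prefix) at the cost of the delicate case (ii) analysis, which is nevertheless handled correctly. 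Both hinge on the same two pillars: the sandwich from Lemma~\ref{lem:upper_and_lower_bound_g} and the structure of $h_{lower}$'s trajectory.
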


\begin{proof}
	First, it is straightforward to see by the construction of $h_{lower}$ and $h_{upper}$ that $h_{lower}(D) \leq \mu(D)$ if $\mu(D) \geq \hmu - n \Delta$ and $h_{upper}(D) \geq \mu(D)$ if $\mu(D) \leq \hmu + n \Delta$. 
	Therefore, by Lemma~\ref{lem:upper_and_lower_bound_g}, the desired result is implied if $h_{lower}(D) \geq \mu(D)$ and $h_{upper}(D) \leq \mu(D)$.
	Here we show that $h_{lower}(D) \geq \mu(D)$, and $h_{upper}(D) \leq \mu(D)$ will be implied symmetrically.
	
	
	Suppose it is not true that $h_{lower}(D) \geq \mu(D)$, then there must exist the last index $k < n$ such that $h_{lower}(D_{[1:k]}) \geq \mu({D_{[1:k]}})$, which by construction implies that $h_{lower}(D) = h_{lower}(D_{[1:k]}) + (n-k)\Delta$. To achieve our contradiction, we want to show that $\mu(D) - h_{lower}(D_{[1:k]}) \leq (n-k) \Delta$.
	
	By our restriction of each $x_i$ and by assumption we have,
	\[
	\hmu + \alpha \Delta \leq \mu({D_{[1:k]}}) \leq h_{lower}(D_{[1:k]}).
	\]
	Furthermore, because all of the remaining $x_i \leq \hmu + (\alpha + n)\Delta$, we must have, 
	\[
	\mu(D) \leq \frac{k \mu({D_{[1:k]}}) + (n-k) (\hmu + (\alpha + n)\Delta)}{n} \leq 
	\frac{k \cdot h_{lower}(D_{[1:k]}) + (n-k) (\hmu + (\alpha + n)\Delta)}{n},
	\]
	where the second inequality follows from our assumption that $h_{lower}(D_{[1:k]}) \geq \mu({D_{[1:k]}}) $. This implies, 
	\begin{align*}
	\mu(D) - h_{lower}(D_{[1:k]}) &\leq \frac{k \cdot  h_{lower}(D_{[1:k]}) + (n-k) (\hmu + (\alpha + n)\Delta)}{n} - h_{lower}(D_{[1:k]})
	\\ 
	&= \frac{(k-n) h_{lower}(D_{[1:k]}) + (n-k) (\hmu + \frac{n}{2}\Delta)}{n}
	\end{align*}
	
	We use the fact that $h_{lower}(D_{[1:k]}) \geq \hmu + \alpha \Delta$ and $k < n$ to get,
	\[
	\mu(D) - h_{lower}(D_{[1:k]}) \leq \frac{(k - n) (\hmu + \alpha\Delta) + (n-k) (\hmu + (\alpha + n)\Delta)}{n} = (n-k) \Delta,
	\]
	giving our desired contradiction, which implies $h_{lower}(D) \geq \mu(D)$.
\end{proof}

\subsubsection{Proof of Theorem~\ref{thm:mean}}

We now have all the necessary components to give our proof of Theorem~\ref{thm:mean}, which we restate and prove below.

\mean*

\begin{proof}[Proof of Theorem~\ref{thm:mean}]
	The fact that $g$ has sensitivity $\Delta$ follows from the fact that it is our \sensitivity~and the guarantees of Lemma~\ref{lem:1d_sensitivity_guarantees}.
	The runtime follows from Corollary~\ref{cor:efficient_for_mean_etc}.
	We then achieve the error bounds from Lemma~\ref{lem:mean_error_bounds} and Lemma~\ref{lem:mean_does_good_on_wellbehaved}.

\end{proof}


\section{Efficient Implementation for Variance}\label{sec:variance_algo}


In this section, we show how to efficiently extend our recursive algorithm to \emph{variance}, which is an important statistical metric and a more complicated function than those considered in Section \ref{sec:efficient_examples}. Although variance is not a database-ordered function, we can still implement our \sensitivity~for variance in $O(n^2)$ time, using similar techniques to reduce the number of subproblems that must be considered.  This suggests that database-ordered functions are not the only class that have an efficient implementation, and that running time of our algorithm can be improved more generally using structural properties of the function being considered.

The general idea will remain the same as we reduce the number of subproblems to $O(n^2)$ by using structural properties of variance. 
We first formally define the discrete version of variance with two equivalent equations.

\begin{definition}
	For any $D = (x_1,...,x_n) \in \R^n$, let $\mu(D) = \frac{1}{n}(x_1 + \cdots + x_n)$ and define the variance function,
	\[
	\var{D} \defeq \frac{1}{n} \sum_{i=1}^n \left(x_i - \mu(D)\right)^2,
	\]
or equivalently,
	\[
	\var{D} \defeq \frac{1}{n^2} \sum_{i=1}^n \sum_{j=1}^n \frac{1}{2} \left(x_i - x_j\right)^2.
	\]
\end{definition}

As with mean, $\alpha$-trimmed mean, median, maximum, and minimum, we will first sort the entries of the database. 
Intuitively, we can decrease the variance most by removing either the minimum or maximum value.  We make use of the following fact, which we prove in the appendix for completeness.

\begin{fact}\label{fact:variance_minimized_with_min_max_tossed}
	Given $D = (x_1,...,x_n) \in \R^n$ such that $x_1 \leq \cdots \leq x_n$, then for any $i$,
	\[
	\min \{\var{D-x_1},\var{D - x_n} \} \leq \var{D - x_i}.
	\]
\end{fact}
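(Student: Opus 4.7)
\textbf{Proof plan for Fact~\ref{fact:variance_minimized_with_min_max_tossed}.}

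My plan is to work with the pairwise-differences form of variance so that removing a single entry has a clean algebraic effect. Let $S = \sum_{j=1}^n \sum_{k=1}^n \tfrac{1}{2}(x_j - x_k)^2$, so that $\var{D} = S/n^2$. For any index $i$, the terms in $S$ that involve index $i$ are those with $j = i$ or $k = i$; their total contribution equals $T_i \defeq \sum_{k=1}^n (x_i - x_k)^2$ (the diagonal term $j = k = i$ is zero, so double-counting does not matter). Therefore
\[
\var{D - x_i} \;=\; \frac{S - T_i}{(n-1)^2}.
\]
Since the denominator $(n-1)^2$ is the same for every $i$, the inequality we want becomes equivalent to
\[
\max\{T_1, T_n\} \;\geq\; T_i \qquad \text{for every } i \in \{1,\ldots,n\}.
\]

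The second step is to view $T_i$ as the evaluation $\varphi(x_i)$ of the single-variable function
\[
\varphi(y) \defeq \sum_{k=1}^n (y - x_k)^2.
\]
This $\varphi$ is a convex quadratic in $y$ (its second derivative equals $2n > 0$). A convex function on an interval attains its maximum at one of the endpoints, so for every $y \in [x_1, x_n]$,
\[
\varphi(y) \;\leq\; \max\{\varphi(x_1), \varphi(x_n)\}.
\]
Because $x_1 \leq x_i \leq x_n$ by the sorting assumption, plugging in $y = x_i$ gives $T_i \leq \max\{T_1, T_n\}$, which is exactly what we need. Substituting back into the formula for $\var{D - x_i}$ yields
\[
\min\{\var{D - x_1},\, \var{D - x_n}\} \;=\; \frac{S - \max\{T_1, T_n\}}{(n-1)^2} \;\leq\; \frac{S - T_i}{(n-1)^2} \;=\; \var{D - x_i},
\]
completing the argument.

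I do not anticipate any real obstacle here; both steps are essentially a one-line algebraic identity followed by the standard observation that a convex function on an interval is maximized at its endpoints. The only thing to be a little careful about is the bookkeeping when peeling off index $i$ from the double sum in $S$, to confirm that the removed mass is exactly $T_i$ rather than, say, $T_i/2$ or $2T_i$; the diagonal-is-zero observation handles this cleanly.
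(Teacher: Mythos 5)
Your proof is correct, and it takes a genuinely cleaner route than the paper's. The paper splits into cases depending on whether $x_i$ lies below or above $\mu(D)$: in the first case it argues $\var{D - x_1} \leq \var{D - x_i}$ by cancelling the common pairwise terms and reducing to $\sum_{j\neq 1,i}(x_i - x_j)^2 \leq \sum_{j\neq 1,i}(x_1 - x_j)^2$, which it justifies by a least-squares/convexity observation over the $(n-2)$-point set $D - x_1 - x_i$ (whose mean lies to the right of both $x_1$ and $x_i$); the other case is symmetric with $x_n$. Your approach avoids the case split entirely by writing $\var{D - x_i} = (S - T_i)/(n-1)^2$ with $T_i = \varphi(x_i)$ for the single fixed convex quadratic $\varphi(y) = \sum_k (y - x_k)^2$, and then invoking that a convex function on $[x_1, x_n]$ is maximized at an endpoint. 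This replaces the two-case comparison against a moving $(n-2)$-point target with one uniform evaluation of a fixed function, and the identity $\var{D-x_i} = (S - T_i)/(n-1)^2$ is a nice standalone decomposition. Your bookkeeping (the terms of $S$ touching index $i$ contribute exactly $T_i$, using the zero diagonal) checks out, as does the final chain of inequalities.
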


We will use this fact to show that the lower bound on $g(D)$ will be defined by $g(D - x_1)$ or $g(D - x_n)$. 
The difficulty now becomes that to increase variance the most, we would want to remove an entry between $x_1$ and $x_n$. 
This poses a significant complication in constructing a dynamic program for the subproblems.
More specifically, even if $g(D)$ only required solving two subproblems $g(D - x_i)$ and $g(D - x_j)$ for some $x_i,x_j$, we are still doubling the number of subproblems at each step.
The straightforward dynamic program for ordered-databases was able to reuse different subproblems to avoid a runtime blow-up.
The key idea will then be that we can bound, with respect to the original variance, the amount variance can be increase by removing an entry.
In particular, we use the following fact that is likely a folklore result, but we could not find a citation, so we prove it in the appendix for completeness.

\begin{fact}\label{fact:bound_marginal_variance}
	Given any unordered $(x_1,...,x_n) \in \R^n$,
	\[
	\var{x_1,...,x_{n-1}} \leq \frac{n}{n-1} \var{x_1,...,x_n}.
	\]	
\end{fact}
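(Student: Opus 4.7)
The plan is to establish the slightly stronger identity that $\var{x_1,\ldots,x_n}$ decomposes as a positive combination of $\var{x_1,\ldots,x_{n-1}}$ and a non-negative correction term, then discard the correction term. Since the statement is symmetric in the indices only up to a relabeling and the removed entry is fixed as $x_n$, there is no harm in doing the computation once for that particular removal.

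First I would write $\mu_n = \frac{(n-1)\mu_{n-1} + x_n}{n}$, which gives the two useful identities $x_n - \mu_n = \frac{n-1}{n}(x_n - \mu_{n-1})$ and, for $i < n$, $x_i - \mu_n = (x_i - \mu_{n-1}) + \frac{\mu_{n-1} - x_n}{n}$. Next I would plug these into the definition $n \var{x_1,\ldots,x_n} = \sum_{i=1}^n (x_i - \mu_n)^2$, split off the $i=n$ term, and expand the square inside the remaining sum. The cross term will involve $\sum_{i=1}^{n-1}(x_i - \mu_{n-1})$, which vanishes by definition of $\mu_{n-1}$. Collecting the surviving terms should give
\[
n\var{x_1,\ldots,x_n} = (n-1)\var{x_1,\ldots,x_{n-1}} + \frac{(n-1)(x_n - \mu_{n-1})^2}{n},
\]
after combining the $\frac{n-1}{n^2}$ contribution from the expanded square with the $\frac{(n-1)^2}{n^2}$ contribution from the $i=n$ term.

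Dividing by $n$ and dropping the non-negative remainder yields $\var{x_1,\ldots,x_n} \geq \frac{n-1}{n}\var{x_1,\ldots,x_{n-1}}$, which is equivalent to the claim after multiplying by $\frac{n}{n-1}$. The main obstacle, if any, is just bookkeeping: getting the $\frac{n-1}{n^2}$ and $\frac{(n-1)^2}{n^2}$ factors to add up cleanly to $\frac{n-1}{n}$, which is the only place where a small algebraic slip could occur. There is no conceptual difficulty, since the identity in question is essentially the standard ``variance decreases under addition of a sample up to a positive correction'' fact applied in reverse.
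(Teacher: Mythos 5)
Your plan is correct and the algebra checks out: with $\mu_n = \frac{(n-1)\mu_{n-1}+x_n}{n}$, summing $(x_i-\mu_n)^2$ over $i<n$ gives $(n-1)\var{x_1,\ldots,x_{n-1}} + \frac{n-1}{n^2}(x_n-\mu_{n-1})^2$ (the cross term vanishes), the $i=n$ term contributes $\frac{(n-1)^2}{n^2}(x_n-\mu_{n-1})^2$, and the two remainder pieces combine to $\frac{n-1}{n}(x_n-\mu_{n-1})^2$, yielding the identity $n\var{x_1,\ldots,x_n} = (n-1)\var{x_1,\ldots,x_{n-1}} + \frac{n-1}{n}(x_n-\mu_{n-1})^2$ and hence the claim. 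However, this is a genuinely different route from the paper's. The paper proves a helper fact (its Fact~\ref{fact:mean_minimizes_variance}) that $y\mapsto\var{x_1,\ldots,x_{n-1},y}$ is minimized at $y=\mu_{n-1}$, then observes $\var{x_1,\ldots,x_n}\ge\min_y\var{x_1,\ldots,x_{n-1},y}=\var{x_1,\ldots,x_{n-1},\mu_{n-1}}=\frac{n-1}{n}\var{x_1,\ldots,x_{n-1}}$. That variational argument avoids expanding any square and is conceptually tidy, at the cost of stating and proving a separate lemma. Your direct decomposition avoids the helper lemma and gives something slightly stronger --- an exact, nonnegative remainder $\frac{(n-1)}{n^2}(x_n-\mu_{n-1})^2$ quantifying the slack in the inequality --- at the cost of a bit more bookkeeping. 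Either way the proof of the stated fact goes through.
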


We can then use this strong bound to show that if we initialize $g(\emptyset) = 0$, the \sensitivity~will never go above $\var{D}$ for any $g(D)$.
As a result, the \sensitivity~will never actually use $\lowerbound(D)$.
This will then allow us to only recurse on subproblems where the minimum or maximum has been removed, and the dynamic program will be analogous to the one given for mean.

We first give the efficient implementation for variance and show that it can be done in $O(n^2)$ time. Then we give stronger bounds on the error incurred by this efficient implementation, and finally use these facts to prove Theorem~\ref{thm:variance}.

\subsection{Efficient algorithm for variance}

As with mean and the database-ordered functions, the key to our efficient implementation will be showing that the \sensitivity~can be equivalently defined using far fewer subproblems.
Using some of the intuition above, we are able to prove the following lemma that reduces the \sensitivity~to a much simpler recursion.

\begin{lemma}\label{lem:variance_efficient_recursion}
	Let $\textbf{Var}:\R^{<\mathbb{N}} \to \bR$ be the variance function and set $\var{\emptyset} = 0$. Then the \sensitivity~with parameter $\Delta$ can be equivalently defined as $g(\emptyset) = 0$ and $g(D) = \min\{\var{D}, g(D - x_1) + \Delta, g(D - x_{n}) + \Delta\}$ where $D = (x_1,...,x_n)$ with $x_1 \leq \cdots \leq x_n$.
\end{lemma}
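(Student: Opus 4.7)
The plan is to show, by strong induction on $|D|$, that $g$ satisfies the simplified recursion. The base case $g(\emptyset) = 0$ holds by assumption. For the inductive step, fix $D$ of size $n$ with sorted entries $x_1 \leq \cdots \leq x_n$ and assume the simplified recursion (and hence $g(D') \leq \var{D'}$) holds on every $D' \subsetneq D$. Since $g(D)$ is defined in Definition~\ref{def:1D_sensitivity_function} as the projection of $\var{D}$ into $[\lowerbound(D), \upper(D)]$, it suffices to establish (I) $\upper(D) = \min\{g(D-x_1), g(D-x_n)\}+\Delta$, so that the minimum of $g(D-x_i)$ over $i$ is attained at an endpoint, and (II) $\lowerbound(D) \leq \var{D}$, so that the Lower case of the projection never triggers. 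Together, (I) and (II) immediately yield $g(D) = \min\{\var{D}, g(D-x_1)+\Delta, g(D-x_n)+\Delta\}$.

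For (I), I argue that for any middle index $1 < i < n$, $g(D-x_i) \geq \min\{g(D-x_1), g(D-x_n)\}$, mirroring the argument of Lemma~\ref{lem:database_ordered_is_efficient_implementable}. By the inductive simplified recursion applied to $D-x_i$, $g(D-x_i)$ equals one of $\var{D-x_i}$, $g(D-x_i-x_1)+\Delta$, or $g(D-x_i-x_n)+\Delta$. In the first case, Fact~\ref{fact:variance_minimized_with_min_max_tossed} combined with the inductive $g \leq \var{}$ gives $\var{D-x_i} \geq \min\{\var{D-x_1}, \var{D-x_n}\} \geq \min\{g(D-x_1), g(D-x_n)\}$. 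In the remaining two cases, the $\Delta$-sensitivity of $g$ from Lemma~\ref{lem:1d_sensitivity_guarantees} gives $g(D-x_j) \leq g(D-x_j-x_i)+\Delta = g(D-x_i)$ for the relevant endpoint $x_j$.

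The main obstacle is (II), since Fact~\ref{fact:bound_marginal_variance} only yields $\var{D-x_i} \leq \frac{n}{n-1}\var{D}$, which may strictly exceed $\var{D}+\Delta$ when $\Delta$ is small, so the naive bound $g(D-x_i) \leq \var{D-x_i}$ is not directly sufficient. I handle this with a case split at the threshold $\Delta = \var{D}/(n-1)$. If $\Delta \geq \var{D}/(n-1)$, then Fact~\ref{fact:bound_marginal_variance} gives $\var{D-x_i} \leq \var{D} + \var{D}/(n-1) \leq \var{D}+\Delta$, so $g(D-x_i) \leq \var{D-x_i} \leq \var{D}+\Delta$ and $\lowerbound(D) \leq \var{D}$. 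If instead $\Delta < \var{D}/(n-1)$, I iterate the inductive simplified recursion down to $g(\emptyset) = 0$, picking up at most $\Delta$ per step, to obtain $g(D-x_i) \leq (n-1)\Delta < \var{D}$, which gives $\lowerbound(D) \leq (n-2)\Delta < \var{D}$. In either regime, $\lowerbound(D) \leq \var{D}$, which combined with (I) completes the induction.
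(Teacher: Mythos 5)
Your proof is correct and is essentially the paper's argument folded into a single strong induction: your claim (II) is the content of Lemma~\ref{lem:variance_doesnt_go_above} with a case split on $\Delta$ versus $\var{D}/(n-1)$ in place of the paper's split on $g(D-x_i)$ versus $\var{D}$ (same two ingredients, Fact~\ref{fact:bound_marginal_variance} and the bound $g(D') \leq |D'|\Delta$), and your claim (I) is Lemma~\ref{lem:upper_can_get_rid_of_minmax}. One small streamlining you get for free is that the inductive hypothesis of the simplified recursion restricts $g(D-x_i)$ to three cases outright, letting you skip the additional sub-induction step the paper needs in Lemma~\ref{lem:upper_can_get_rid_of_minmax} to handle $g(D-x_i)=g(D-x_i-x_j)+\Delta$ for interior $j$.
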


We will prove this lemma with the following two helper lemmas. The first will show that the \sensitivity~will never exceed the true variance. The second uses the fact that variance is minimized by either removing the minimum or maximum value to show that the lower bound can simply consider the subproblems $g(D - x_1)$ and $g(D - x_n)$.

\begin{lemma}\label{lem:variance_doesnt_go_above}
	Given any $D = (x_1,...,x_n) \in \R^n$, if $g$ is the \sensitivity~of variance with parameter $\Delta$ and $g(\emptyset) = 0$, then,
	\[
	g(D) \leq \var{D}.
	\]
\end{lemma}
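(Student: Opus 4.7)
The plan is strong induction on $n = |D|$. The bases $|D| = 0$ and $|D| = 1$ are trivial: $g(\emptyset) = 0 = \var{\emptyset}$, and $\var{\{x\}} = 0$ lies in $[-\Delta, \Delta]$ so the recursion forces $g(\{x\}) = 0$. Fix $n \geq 2$ and assume the claim for all databases of size less than $n$. By Definition~\ref{def:1D_sensitivity_function}, $g(D)$ equals one of $\var{D}$, $\upper(D)$ (when $\upper(D) \leq \var{D}$), or $\lowerbound(D)$ (when $\lowerbound(D) \geq \var{D}$). The first two cases already give $g(D) \leq \var{D}$, so the entire job is to derive a contradiction from the assumption $\lowerbound(D) > \var{D}$.

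Suppose $\lowerbound(D) > \var{D}$. By definition of $\lowerbound$, some $j^* \in D$ satisfies $g(D - x_{j^*}) > \var{D} + \Delta$. Combined with the inductive hypothesis $g(D - x_{j^*}) \leq \var{D - x_{j^*}}$, this gives $\var{D - x_{j^*}} > \var{D} + \Delta$. Now Fact~\ref{fact:bound_marginal_variance} yields $\var{D - x_{j^*}} \leq \tfrac{n}{n-1}\var{D}$, and rearranging produces the \emph{lower} bound
\[ \var{D} > (n-1)\Delta. \]

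For the matching \emph{upper} bound I would use only the universal inequality $g(A) \leq \upper(A) \leq g(A - x) + \Delta$, equivalently $g(A - x) \geq g(A) - \Delta$, and iterate it from $A_0 = D - x_{j^*}$ (size $n-1$) down to $\emptyset$ along any chain of single-element removals. After $n - 1$ iterations this yields $g(\emptyset) \geq g(D - x_{j^*}) - (n-1)\Delta$; substituting $g(\emptyset) = 0$ and $g(D - x_{j^*}) > \var{D} + \Delta$ rearranges to
\[ \var{D} < (n-2)\Delta. \]
The two bounds together force $(n-1)\Delta < \var{D} < (n-2)\Delta$, which is impossible whenever $\Delta > 0$ (and for $n = 2$ the upper bound already reads $\var{D} < 0$, contradicting $\var{D} \geq 0$; the degenerate case $\Delta = 0$ is handled separately since then the recursion forces $g \equiv 0$).

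The main conceptual obstacle is recognizing that both ingredients have to be used in tandem: the extra $+\Delta$ in $g(D - x_{j^*}) > \var{D} + \Delta$, which comes from the $-\Delta$ in $\lowerbound(D) = \max_j g(D - x_j) - \Delta$, is precisely what sharpens the trace-back bound from the naive $\var{D} < n\Delta$ to $\var{D} < (n-2)\Delta$, and this sharpening is exactly what is needed to collide with the regularity bound $\var{D} > (n-1)\Delta$ coming from Fact~\ref{fact:bound_marginal_variance}.
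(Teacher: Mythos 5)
Your proof is correct and uses the same essential ingredients as the paper's: the inductive hypothesis combined with Fact~\ref{fact:bound_marginal_variance} to control $\var{D - x_{j^*}}$, and the bound $g(D') \leq |D'|\Delta$ obtained by walking the sensitivity constraint down to $g(\emptyset) = 0$. The paper presents this as a direct two-case argument showing $g(D - x_i) - \Delta \leq \var{D}$, whereas you reorganize it as a proof by contradiction yielding the incompatible pair $(n-1)\Delta < \var{D} < (n-2)\Delta$; the route is essentially the same.
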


\begin{proof}
	We will prove this by induction. If $D$ contains only a single entry, then $\var{D} = 0$ and by construction $g(D) = 0$. 
	
	We then consider $D = (x_1,...,x_n)$ and assume the inequality holds for all subsets. By the definition of the \sensitivity, it suffices to show that $g(D - x_i) - \Delta \leq \var{D}$ for all $x_i$. Our inductive claim gives $g(D - x_i) \leq \var{D - x_i}$, and Fact~\ref{fact:bound_marginal_variance} implies:
	\[
	\var{D-x_i} - \var{D} \leq \frac{1}{n-1}\var{D},
	\]
These combine to give,
	\[
	g(D - x_i) - \var{D} \leq \frac{1}{n-1}\var{D}.
	\]
	
	We now consider two cases. If $g(D - x_i) \leq \var{D}$, then $g(D - x_i)  - \Delta \leq \var{D}$ because $\Delta \geq 0$ and we have our desired inequality. If $\var{D} \leq g(D - x_i)$ then,
	\[
	g(D - x_i) - \var{D} \leq \frac{1}{n-1}g(D-x_i).
	\]	
	Further, by the definition of \sensitivity~and the fact that $g(\emptyset) = 0$, we must have $g(D - x_i) \leq (n-1)\Delta$, implying,
	\[
	g(D - x_i) - \var{D} \leq \Delta,
	\]
which is our desired inequality.
\end{proof}

%
%

\begin{lemma}\label{lem:upper_can_get_rid_of_minmax}
	Given $D = (x_1,...,x_n) \in \R^n$ such that $x_1 \leq \cdots \leq x_n$ and $g$ is the \sensitivity~of variance with parameter $\Delta$ and $g(\emptyset) = 0$, then,
	\[
	\min\{ g(D - x_1), g(D - x_n)\} \leq g(D - x_i),
	\]
	for any $x_i \in D$.	
\end{lemma}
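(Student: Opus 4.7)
The plan is to induct on $n = |D|$, relying on three ingredients already established in the paper: (i)~Fact~\ref{fact:variance_minimized_with_min_max_tossed}, which says variance is minimized by removing the extreme values; (ii)~Lemma~\ref{lem:variance_doesnt_go_above}, which gives $g \leq \textbf{Var}$; and (iii)~the individual-sensitivity bound on $g$ from Lemma~\ref{lem:1d_sensitivity_guarantees}. The base cases $n \leq 2$ and the cases $i \in \{1,n\}$ are trivial, so the real content is the inductive step for $1 < i < n$.

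First I would unfold the definition of $g(D - x_i)$. The proof of Lemma~\ref{lem:variance_doesnt_go_above} already shows that for any database $E$, one has $g(E - x_j) - \Delta \leq \var{E}$ for every $j$, so $\lowerbound(E) \leq \var{E}$ and thus
\[
g(E) = \min\Bigl\{\var{E},\; \min_{j \in E} g(E - x_j) + \Delta \Bigr\}.
\]
Applying this to $E = D - x_i$, I then invoke the inductive hypothesis on the size-$(n-1)$ database $D - x_i$, whose sorted min and max are $x_1$ and $x_n$ (since $1 < i < n$). The IH gives $g(D - x_i - x_j) \geq \min\{g(D - x_i - x_1),\, g(D - x_i - x_n)\}$ for every interior $j$, so the inner minimum over $j$ collapses to the two extremes and
\[
g(D - x_i) \;=\; \min\Bigl\{\var{D - x_i},\; g(D - x_i - x_1) + \Delta,\; g(D - x_i - x_n) + \Delta \Bigr\}.
\]

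I would then do a three-way case analysis on which term attains this minimum. If it is $\var{D - x_i}$, then Fact~\ref{fact:variance_minimized_with_min_max_tossed} gives $\var{D - x_i} \geq \min\{\var{D - x_1}, \var{D - x_n}\}$, and Lemma~\ref{lem:variance_doesnt_go_above} then yields $\min\{\var{D - x_1}, \var{D - x_n}\} \geq \min\{g(D - x_1), g(D - x_n)\}$, as needed. If it is $g(D - x_i - x_1) + \Delta$, then since $g$ has individual sensitivity at most $\Delta$ (Lemma~\ref{lem:1d_sensitivity_guarantees}), removing $x_i$ from $D - x_1$ changes $g$ by at most $\Delta$, so $g(D - x_1 - x_i) + \Delta \geq g(D - x_1) \geq \min\{g(D - x_1), g(D - x_n)\}$. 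The third case is symmetric with the roles of $x_1$ and $x_n$ swapped.

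The main subtlety is the first step: one must legitimately reduce $g(D - x_i)$ to a minimum of only three quantities. This reduction is where both the inductive hypothesis and the observation from Lemma~\ref{lem:variance_doesnt_go_above} (ensuring the $\lowerbound$ clause never binds) are essential; once that reduction is in hand, the case analysis is nearly mechanical. I do not foresee any technical obstacles beyond cleanly stating the IH and making sure the edge indices $i = 2$ and $i = n-1$ are covered (they are, since the IH bound becomes tight when $x_i$ itself is the extreme of $D - x_1$ or $D - x_n$).
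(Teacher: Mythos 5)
Your proof is correct, and it takes a genuinely different inductive route from the paper's. The paper, in the case $g(D - x_i) < \var{D - x_i}$, lets $j$ be the index achieving $g(D-x_i) = g(D-x_i-x_j)+\Delta$ and splits on whether $j\in\{1,n\}$; if not, it applies the inductive hypothesis to the database $D - x_j$ (not $D - x_i$), and then converts $\min\{g(D-x_1-x_j),g(D-x_n-x_j)\}$ back to $\min\{g(D-x_1),g(D-x_n)\}$ via the definition of $\upper$. You instead apply the inductive hypothesis directly to $D - x_i$ — which is the more natural database to recurse on — to collapse the inner minimum to $j\in\{1,n\}$ up front, reducing $g(D-x_i)$ to a three-term minimum $\min\{\var{D-x_i},\,g(D-x_i-x_1)+\Delta,\,g(D-x_i-x_n)+\Delta\}$ before doing any case analysis; the remaining two cases are then dispatched by the individual-sensitivity bound $g(D-x_1)\le g(D-x_1-x_i)+\Delta$ (equivalently, the definitional fact $g(D-x_1)\le\upper(D-x_1)$). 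Your version avoids the auxiliary subcase on the identity of the minimizing $j$, so the case analysis is a bit more streamlined; the paper's version avoids explicitly invoking that $\lowerbound$ never binds (though it, too, implicitly relies on that fact to drop the $\lowerbound$ branch). You correctly note that the collapse step relies on $1<i<n$ so that $x_1$ and $x_n$ remain the extremes of $D-x_i$, and you correctly use the intermediate inequality $g(E-x_j)-\Delta\le\var{E}$ from the proof of Lemma~\ref{lem:variance_doesnt_go_above} rather than the final Lemma~\ref{lem:variance_efficient_recursion}, so there is no circularity.
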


\begin{proof}
	We will prove this by induction. If $D$ has just one entry then $x_1 = x_i = x_n$ and each term is equivalent.
	
We then consider $D = (x_1,...,x_n)$ and assume the inequality holds for all subsets. We will consider two cases. Our first case is  $g(D - x_i) = \var{D - x_i}$. Lemma~\ref{lem:variance_doesnt_go_above} implies:
	\[
	\min\{g(D - x_1),g(D - x_n)\} \leq \min \{\var{D - x_1},\var{D - x_n} \}.
	\]
	Furthermore, by Fact~\ref{fact:variance_minimized_with_min_max_tossed} we have $\min \{\var{D - x_1},\var{D - x_n} \} \leq \var{D - x_i}$.  Combining this with the assumption $g(D - x_i) = \var{D - x_i}$ gives the desired inequality.
	
	It is implied by Lemma~\ref{lem:variance_doesnt_go_above} that the only other case we need to consider is $g(D - x_i) < \var{D - x_i}$. This assumption and our definition of \sensitivity~together imply,
	\[
	g(D - x_i) = \min_{j \neq i} \{g(D - x_i - x_j) + \Delta\}.
	\]
	The definition of \sensitivity~also gives:
	\[
	\min\{g(D - x_1),g(D - x_n)\} \leq \min \{\min_{j\neq 1} \{g(D - x_1 - x_j) + \Delta \}, \min_{j\neq n} \{g(D - x_n - x_j) + \Delta \} \}.
	\]
	
	As a result, if $\min_{j\neq 1} \{g(D - x_i - x_j) + \Delta \}$ is minimized for $j=1$ or $j=n$, then we easily have $\min\{g(D - x_1),g(D - x_n)\} \leq g(D - x_i)$. Furthermore, if $j \neq 1,n$, then it suffices to show that,
	\[
	\min \{g(D - x_1 - x_j),g(D - x_n - x_j)\} \leq g(D - x_i - x_j),
	\]
	which follows from the inductive hypothesis and implies our desired result.
\end{proof}

These two helper lemmas now easily imply Lemma~\ref{lem:variance_efficient_recursion}.

\begin{proof}[Proof of Lemma~\ref{lem:variance_efficient_recursion}]
	Lemma~\ref{lem:variance_doesnt_go_above} implies that we will never need to use $\lowerbound(D)$, so we can eliminate that case. Further, Lemma~\ref{lem:upper_can_get_rid_of_minmax} implies that $\upper(D) = \min\{g(D - x_1) + \Delta, g(D - x_{n}) + \Delta \}$. Combining these facts implies our recursion defined in the lemma statement is equivalent to the \sensitivity.
\end{proof}

With this reduction in the number of subproblems for the \sensitivity, we will be able to give a similar efficient dynamic programming algorithm for the implementation. 

\begin{algorithm}[h!]
	\caption{Efficient Implementation for Variance}
	\begin{algorithmic}
		\State \textbf{Input:} Variance function $\textbf{Var}:\R^{<\mathbb{N}} \to \bR$, sensitivity bound $\Delta$, and database $D = (x_1,...,x_n) \in \R^n$ for some arbitrary $n$.
		\State \textbf{Output:} $g(D)$ where $g$ is the \sensitivity~of variance with parameter $\Delta$.
		\State Initialize $g(\emptyset) = 0$
		\State Sort $D$ (We will assume $x_1 \leq \cdots \leq x_n$ for simplicity)
		\For {k=1, \ldots, n}
		\For {i = 1, \ldots n-k+1}
		\For {every database $D' = (x_i,...,x_{i+k-1})$}
		\State Let $g(D')= \min\{\var{D'}, g(D' - x_i) + \Delta, g(D' - x_{i+k-1}) + \Delta\}$
		\EndFor
		\EndFor
		\EndFor
		\State Output $g(D)$		
	\end{algorithmic}\label{algo.variance}
\end{algorithm}

It immediately follows that the number of subproblems that we need to consider is $O(n^2)$, but we still need to efficiently compute $\var{D}$. This computation would normally take $O(n)$ time and increase our running time to $O(n^3)$. However, we can use the computation from previous subproblems to compute the variance in $O(1)$ time with the following folklore fact that we prove in the appendix.

\begin{fact}\label{fact:variance_in_constant_time}
	For any $D = (x_1,...,x_n) \in \R^n$ and any $x_a \neq x_b \in D$,
	\[
	\var{D} = \left(\frac{n-1}{n}\right)^2\var{D - x_a} + \left(\frac{n-1}{n}\right)^2\var{D - x_b} - \left(\frac{n-2}{n}\right)^2\var{D - x_a - x_b}
	+ \frac{1}{n^2}(x_a - x_b)^2.
	\]
\end{fact}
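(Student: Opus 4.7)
The plan is to work with the pairwise formulation of variance given in the second definition, namely $n^2 \var{D} = \sum_{i,j \in D} \tfrac{1}{2}(x_i - x_j)^2$. Introducing the unnormalized sum $S_E := \sum_{i,j \in E} \tfrac{1}{2}(x_i - x_j)^2$ for any finite $E$, the identity in the fact becomes (after multiplying through by $n^2$) the much cleaner statement
\[
S_D \;=\; S_{D - x_a} + S_{D - x_b} - S_{D - x_a - x_b} + (x_a - x_b)^2,
\]
since $S_D = n^2\var{D}$, $S_{D-x_a} = (n-1)^2\var{D-x_a}$, $S_{D-x_b} = (n-1)^2\var{D-x_b}$, and $S_{D-x_a-x_b} = (n-2)^2\var{D-x_a-x_b}$. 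So the whole proof reduces to an inclusion-exclusion statement about the pairwise sum $S$.

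First I would establish the single-removal identity $S_D - S_{D - x_a} = \sum_{j \in D}(x_a - x_j)^2$. The index pairs that appear in $S_D$ but not in $S_{D-x_a}$ are exactly those with $i = a$ or $j = a$. Splitting these into the pairs $(a, j)$ with $j \in D$ and the pairs $(i, a)$ with $i \in D - x_a$, and noting that the diagonal pair $(a,a)$ contributes $0$, the factor $\tfrac{1}{2}$ combines with the two-fold symmetry to produce exactly $\sum_{j \in D}(x_a - x_j)^2$. Applying the same identity to $D - x_b$ instead of $D$ gives $S_{D - x_b} - S_{D - x_a - x_b} = \sum_{j \in D - x_b}(x_a - x_j)^2$.

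Subtracting these two identities yields
\[
\bigl(S_D - S_{D-x_a}\bigr) - \bigl(S_{D-x_b} - S_{D - x_a - x_b}\bigr) \;=\; \sum_{j \in D}(x_a - x_j)^2 - \sum_{j \in D - x_b}(x_a - x_j)^2 \;=\; (x_a - x_b)^2,
\]
which rearranges to the desired inclusion-exclusion identity for $S$. Dividing through by $n^2$ and substituting $S_E = |E|^2 \var{E}$ recovers the statement of the fact.

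The only real place to be careful is the accounting in the single-removal step: one must correctly handle the diagonal pair $(a,a)$ (which vanishes) and the $\tfrac{1}{2}$ that exactly cancels the doubling from the $i \leftrightarrow j$ symmetry, so that no stray constants appear. Beyond that bookkeeping, the identity is a direct two-variable inclusion-exclusion on the symmetric index set $D \times D$, so no further structural argument is required.
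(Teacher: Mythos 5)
Your proof is correct and takes essentially the same approach as the paper: both multiply through by $n^2$ to reduce the claim to an inclusion--exclusion identity on the unnormalized pairwise sum $\sum_{i,j}\tfrac12(x_i-x_j)^2$ over $D\times D$. Your version is a slightly cleaner bookkeeping of the same cancellation, organized via a single-removal lemma applied twice rather than reshuffling the double sum directly, but the underlying argument is identical.
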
 

With this fact we can now show that we implement the \sensitivity~for variance with parameter $\Delta$ in $O(n^2)$ time.

\begin{lemma}\label{lem:variance_runtime_and_correctness}
	Let $\textbf{Var}:\R^{<\mathbb{N}} \to \bR$ be the variance function and set $\var{\emptyset} = 0$. Then Algorithm~\ref{algo.variance} will compute $g(D)$ for any database of $n$ entries in $O(n^2)$ time where $g$ is the \sensitivity~for variance with parameter $\Delta$.
\end{lemma}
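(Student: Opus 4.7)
The plan is to separately establish correctness and the $O(n^2)$ runtime, both leveraging the simplified recursion from Lemma~\ref{lem:variance_efficient_recursion} together with the presorting of $D$.

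For correctness, I would argue by induction on the size of the subproblem. After sorting, every subproblem the algorithm actually considers is a contiguous range $D' = (x_i, \ldots, x_{i+k-1})$, and within that range the minimum and maximum entries are exactly $x_i$ and $x_{i+k-1}$. By Lemma~\ref{lem:variance_efficient_recursion}, the \sensitivity~satisfies
\[
g(D') = \min\{\var{D'},\, g(D' - x_i) + \Delta,\, g(D' - x_{i+k-1}) + \Delta\},
\]
which is precisely the update performed by the algorithm. The two subproblems $D' - x_i = (x_{i+1}, \ldots, x_{i+k-1})$ and $D' - x_{i+k-1} = (x_i, \ldots, x_{i+k-2})$ are themselves contiguous ranges of length $k-1$, hence correctly computed in the previous iteration of the outer loop, which closes the induction. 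Note that it is crucial here that we never escape the class of contiguous ranges, which is exactly what Lemma~\ref{lem:variance_efficient_recursion} provides.

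For the runtime, the number of contiguous subranges is $\sum_{k=1}^n (n-k+1) = O(n^2)$, so it suffices to show each subproblem is handled in $O(1)$ time. The $\min$ over two already-computed values of $g$ is clearly $O(1)$. The only nontrivial issue is computing $\var{D'}$; the naive formula would cost $\Theta(k)$ and blow the runtime up to $\Theta(n^3)$. Here I would invoke Fact~\ref{fact:variance_in_constant_time} with $x_a = x_i$ and $x_b = x_{i+k-1}$, which expresses $\var{D'}$ in terms of the three variances of $D' - x_i$, $D' - x_{i+k-1}$, and $D' - x_i - x_{i+k-1}$. All three are contiguous subranges of $D'$ of length $k-1$ or $k-2$, so if the algorithm is augmented to cache $\var{D'}$ alongside $g(D')$ at each step, these three values are already available in $O(1)$. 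Adding the $O(n \log n)$ cost of the initial sort and the $O(n)$ initialization, the total runtime is $O(n^2)$.

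The main potential obstacle is simply the bookkeeping: one must be careful that the variances needed by Fact~\ref{fact:variance_in_constant_time} correspond to contiguous ranges already stored by the algorithm, and that the base cases ($k = 1$ giving $\var{D'} = 0$ and $g(D') = 0$, and $k = 2$ requiring the formula to degenerate sensibly since $D' - x_a - x_b = \emptyset$ with $\var{\emptyset} = 0$) are handled consistently. Once one verifies that contiguous subranges are closed under the operation of removing the leftmost or rightmost entry, both correctness and runtime follow cleanly from the two facts already in hand.
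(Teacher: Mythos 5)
Your proof is correct and takes essentially the same approach as the paper's: correctness via Lemma~\ref{lem:variance_efficient_recursion}, and the $O(n^2)$ runtime by noting there are $O(n^2)$ contiguous subranges each handled in $O(1)$ time via Fact~\ref{fact:variance_in_constant_time}. You spell out more of the bookkeeping (caching $\var{D'}$, verifying the three needed variances are contiguous subranges of smaller length, checking the $k=1,2$ base cases) that the paper leaves implicit, but the argument is the same.
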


\begin{proof}
	Correctness of the procedure follows immediately from Lemma~\ref{lem:variance_efficient_recursion}. The running time follows from the fact that we have $O(n^2)$ subproblems and from Fact~\ref{fact:variance_in_constant_time} we can compute $\var{D}$ in $O(1)$ time using the previous subproblems.
\end{proof}

\subsection{Accuracy guarantees for variance implementation}

In this section we give stronger bounds on the error incurred by the \sensitivity. 
The proofs will be similar to those in Section \ref{s.meanacc} for mean, but will be slightly simpler due to that fact that the \sensitivity~will never go above the actual variance.
As a result, we achieve a simpler form for the error of the \sensitivity~with respect to variance in the following lemma.

\begin{lemma}\label{lem:simplification_of_variance_value}
	Given $D = (x_1,...,x_n) \in \R^n$ and $g$ that is the \sensitivity~of variance with parameter $\Delta$ and $g(\emptyset) = 0$, then there must exist some $D' \subseteq D$ such that $g(D) = \var{D'} + (n-k)\Delta$ for $k=|D'|$.
\end{lemma}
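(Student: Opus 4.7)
The plan is a straightforward induction on $n = |D|$, leveraging the simplified recursive form of $g$ established in Lemma~\ref{lem:variance_efficient_recursion}, namely
\[
g(D) = \min\{\var{D},\; g(D-x_1) + \Delta,\; g(D-x_n) + \Delta\},
\]
where $x_1 \le \cdots \le x_n$.

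For the base case $D = \emptyset$, we have $g(\emptyset) = 0 = \var{\emptyset} + 0\cdot\Delta$, which matches the claim with $D' = \emptyset$ and $k = 0$.

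For the inductive step, assume the claim holds for every strict subset of $D$, and split on which of the three terms achieves the minimum in the recursion. If $g(D) = \var{D}$, then setting $D' = D$ (so $k = n$ and $(n-k)\Delta = 0$) yields the desired form. Otherwise $g(D) = g(D - x_i) + \Delta$ for some $i \in \{1, n\}$, and by the inductive hypothesis applied to the $(n-1)$-entry database $D - x_i$, there exists $D'' \subseteq D - x_i$ with $k'' = |D''|$ such that $g(D - x_i) = \var{D''} + ((n-1) - k'')\Delta$. Taking $D' = D''$, which is a subset of $D$, gives
\[
g(D) = \var{D'} + ((n-1)-k'')\Delta + \Delta = \var{D'} + (n - k)\Delta,
\]
with $k = k''$, completing the induction.

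There is essentially no hard step here: the recursion has already been reduced by Lemma~\ref{lem:variance_efficient_recursion} to a minimum over three terms, each of which cleanly corresponds to a case in the induction. The only subtlety is ensuring that each recursive ``+$\Delta$'' peels off exactly one unit of the $(n-k)\Delta$ offset, which is immediate from the arithmetic above.
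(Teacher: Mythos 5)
Your proof is correct and takes essentially the same inductive approach as the paper. The only cosmetic difference is that you route the case split through Lemma~\ref{lem:variance_efficient_recursion} (the three-term simplified recursion), whereas the paper appeals directly to Lemma~\ref{lem:variance_doesnt_go_above} to conclude $g(D) \leq \var{D}$ and then infers $g(D) = g(D - x_i) + \Delta$ for some $x_i$ when the inequality is strict; both amount to the same two cases and the same application of the inductive hypothesis.
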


\begin{proof}
	We prove this inductively on the size of $D$ and see immediately that the claim holds by construction for $D$ with a single entry.
	
	We then consider $D = (x_1,...,x_n)$ and assume that our claim holds for all subsets.
	From Lemma~\ref{lem:variance_doesnt_go_above} we know that $\var{D} \geq g(D)$ for all databases. If $\var{D} = g(D)$, then our claim is immediately implied. If $g(D) < \var{D}$ then we must have $g(D) = g(D - x_i) + \Delta$ for some $x_i$. Applying the inductive hypothesis on $g(D - x_i)$ gives our desired claim.
\end{proof}

With this lemma in hand, the main idea for bounding accuracy is to condition on the size of $D'$, which we denote $k$, and give bounds separately for the cases when $k\leq n/2$ and $k \geq n/2$.
When $k$ is small we will just bound our error by $\var{D} - (n-k)\Delta$ and use the fact that $(n-k)\Delta$ is large.
When $k$ is large we will look at the loss in accuracy from $\var{D} - \var{D'}$ where we will bound this by iteratively applying the following lemma.

\begin{lemma}\label{lem:marginal_variance_bound}
	For any  $D = (x_1,...,x_n) \in \R^n$ and any $x_a \in D$, then 
	\[
	\var{D} - \var{D - x_a} \leq \frac{1}{n^2} \sum_{i=1}^n \left(x_a - x_i \right)^2. 
	\]
\end{lemma}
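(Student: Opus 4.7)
The plan is to work directly with the pairwise form of variance
\[
\var{D} = \frac{1}{n^2}\sum_{i=1}^n\sum_{j=1}^n \frac{1}{2}(x_i-x_j)^2,
\]
which cleanly separates the ``interactions'' that involve $x_a$ from those that do not. Let
\[
S \defeq \sum_{i\neq a}\sum_{j\neq a}\frac{1}{2}(x_i-x_j)^2, \qquad T_a \defeq \sum_{i=1}^n (x_a-x_i)^2.
\]
First, I would expand the double sum defining $\var{D}$ by splitting off the row $i=a$ and the column $j=a$. Since $(x_a-x_a)^2=0$, the cross terms combine to exactly $\sum_{i\neq a}(x_a-x_i)^2 = T_a$, so
\[
\var{D} = \frac{1}{n^2}(S+T_a).
\]
On the other hand, the pairwise form applied to $D-x_a$ (a database of size $n-1$) gives $\var{D-x_a} = \frac{1}{(n-1)^2}S$.

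Next I would subtract these two expressions and regroup:
\[
\var{D}-\var{D-x_a} = \frac{T_a}{n^2} + S\left(\frac{1}{n^2}-\frac{1}{(n-1)^2}\right).
\]
The key observation is that the second term is non-positive, because $\frac{1}{n^2}<\frac{1}{(n-1)^2}$ and $S\geq 0$. Dropping that non-positive term yields the desired upper bound
\[
\var{D}-\var{D-x_a} \leq \frac{T_a}{n^2} = \frac{1}{n^2}\sum_{i=1}^n (x_a-x_i)^2.
\]

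There is no real obstacle here; the only care needed is in the initial bookkeeping when splitting the double sum, to make sure the cross terms (rows $i=a$ paired with columns $j\neq a$, and symmetrically) combine with the factor $\tfrac{1}{2}$ to give exactly $T_a$ rather than $\tfrac{1}{2}T_a$ or $2T_a$. Everything after that is an identity plus a sign check.
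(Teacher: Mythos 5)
Your proof is correct and takes essentially the same route as the paper: both expand $\var{D}$ and $\var{D - x_a}$ via the pairwise $\sum_{i,j}\tfrac12(x_i-x_j)^2$ form, isolate the terms involving $x_a$ as $T_a/n^2$, and drop the remaining non-positive term $S\bigl(\tfrac{1}{n^2}-\tfrac{1}{(n-1)^2}\bigr)$ (which the paper writes equivalently as $-\tfrac{2n-1}{n^2(n-1)^2}S$). Your explicit decomposition $\var{D}=\tfrac{1}{n^2}(S+T_a)$ is just a slightly more spelled-out version of the paper's one-line "this reduces to."
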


\begin{proof}
	By the definition of variance,
	\[
		\var{D} - \var{D - x_a} = \frac{1}{n^2} \sum_{i=1}^n\sum_{j=1}^n\frac{1}{2}\left(x_i - x_j\right)^2 - \frac{1}{(n-1)^2} \sum_{i\neq a}\sum_{j\neq a}\frac{1}{2}\left(x_i - x_j\right)^2.
	\]
	 This reduces to,
	 \[
	 \var{D} - \var{D - x_a} = 
	  \frac{1}{n^2} \sum_{i=1}^n\left(x_a - x_i\right)^2 - \frac{2n-1}{n^2(n-1)^2} \sum_{i\neq a}\sum_{j\neq a}\frac{1}{2}\left(x_i - x_j\right)^2,
	 \]
	 which gives our desired equality.
\end{proof}

Recall that we want to use this lemma to bound $\var{D} - \var{D'}$ where $D'$ is a subset of $D$ with size $k$. 
Suppose $D' = (x_1,...,x_k)$ and let $D_i = (x_1,...,x_i)$ for any $i$; we will use the fact that $\var{D} - \var{D'} = \sum_{i=k+1}^n \var{D_i} - \var{D_{i-1}}$.  The above Lemma \ref{lem:marginal_variance_bound} allows us to bound this sum, which will be the key step in our accuracy bounds.

\begin{lemma}\label{lem:variance_error}
	Given $D = (x_1,...,x_n) \in \R^n$ and $g$ that is the \sensitivity~of variance with parameter $\Delta$ and $g(\emptyset) = 0$, then
	\[
	\abs{\var{D} - g(D)} \leq \max\left\{\var{D} - \frac{n}{2}\Delta,0\right\} + \sum_{i=1}^n \max \left\{\sum_{j=1}^n \frac{4(x_i - x_j)^2}{n^2} - \Delta, 0\right\}.
	\]
\end{lemma}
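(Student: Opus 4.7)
The starting point is to combine Lemma~\ref{lem:simplification_of_variance_value} and Lemma~\ref{lem:variance_doesnt_go_above}. The former yields some $D' \subseteq D$ with $k = |D'|$ such that $g(D) = \var{D'} + (n-k)\Delta$, and the latter gives $g(D) \leq \var{D}$. Hence
\[
\abs{\var{D} - g(D)} = \var{D} - \var{D'} - (n-k)\Delta \geq 0.
\]
I will bound this quantity by splitting on whether $k \leq n/2$ or $k > n/2$, show that each case is controlled by one of the two terms on the right-hand side of the lemma, and conclude by noting that both terms are non-negative.

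In the easy case $k \leq n/2$, I simply drop $\var{D'} \geq 0$ and use $n - k \geq n/2$ to obtain
\[
\var{D} - g(D) \;\leq\; \var{D} - (n-k)\Delta \;\leq\; \var{D} - \tfrac{n}{2}\Delta.
\]
Since the left side is non-negative, this is at most $\max\{\var{D} - \tfrac{n}{2}\Delta, 0\}$, which is the first term of the stated bound.

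For the case $k > n/2$, the plan is to chain the marginal-variance bound (Lemma~\ref{lem:marginal_variance_bound}) along any deletion path from $D$ to $D'$. Enumerate $D \setminus D'$ as $x_{a_{k+1}}, \ldots, x_{a_n}$ in arbitrary order and set $D_i = D' \cup \{x_{a_{k+1}}, \ldots, x_{a_i}\}$, so $|D_i| = i$ and $D_k = D'$, $D_n = D$. Telescoping,
\[
\var{D} - \var{D'} \;=\; \sum_{i=k+1}^{n} \bigl(\var{D_i} - \var{D_{i-1}}\bigr).
\]
For each $i \geq k+1 > n/2$ we have $1/i^2 \leq 4/n^2$, so Lemma~\ref{lem:marginal_variance_bound} applied to $D_i$ with $x_{a_i}$ removed gives $\var{D_i} - \var{D_{i-1}} \leq \tfrac{4}{n^2}\sum_{j=1}^n (x_{a_i} - x_j)^2$, where I have also enlarged the inner sum from $x_j \in D_i$ to all $j \in [n]$. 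Subtracting $(n-k)\Delta = \sum_{i=k+1}^n \Delta$ distributes one $\Delta$ of savings to each removed entry, yielding
\[
\var{D} - g(D) \;\leq\; \sum_{i=k+1}^n \left(\tfrac{4}{n^2}\sum_{j=1}^n (x_{a_i} - x_j)^2 - \Delta\right) \;\leq\; \sum_{i=1}^n \max\!\left\{\sum_{j=1}^n \tfrac{4(x_i - x_j)^2}{n^2} - \Delta,\; 0\right\},
\]
where the last step replaces each summand by its positive part and extends the sum to all $i \in [n]$ (only adding non-negative terms). This is the second term of the stated bound. Adding the non-negative contribution of the other term to each case gives the combined inequality.

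The main obstacle is case~2: identifying the correct case split at $n/2$ so that the factor $1/i^2$ becomes $4/n^2$ uniformly along the deletion chain, and observing that the telescoping sum lets us distribute the $-(n-k)\Delta$ term one $\Delta$ at a time to match the per-index structure of the right-hand side. Note that the argument is insensitive to which enumeration of $D \setminus D'$ is chosen, since the final bound is taken over all indices.
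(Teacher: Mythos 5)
Your proposal is correct and follows essentially the same argument as the paper: invoke Lemma~\ref{lem:variance_doesnt_go_above} and Lemma~\ref{lem:simplification_of_variance_value} to write $\var{D}-g(D)=\var{D}-\var{D'}-(n-k)\Delta$, split on $k$ relative to $n/2$, and in the large-$k$ case telescope $\var{D}-\var{D'}$ along a deletion chain, applying Lemma~\ref{lem:marginal_variance_bound} with the factor $1/i^2\leq 4/n^2$ and distributing one $\Delta$ per removed entry. The only cosmetic difference is that you keep the enumeration of $D\setminus D'$ arbitrary from the start, whereas the paper temporarily assumes $D'=(x_1,\ldots,x_k)$ and discharges the assumption afterward; the underlying estimates are identical.
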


\begin{proof}
	Note that Lemma~\ref{lem:variance_doesnt_go_above} implies $\abs{\var{D} - g(D)} = \var{D} - g(D)$.
	From Lemma~\ref{lem:simplification_of_variance_value} we know that $g(D) = \var{D'} + (n-k)\Delta$ for some $D' \subseteq D$ of size $k$, and we can rewrite $\var{D} - g(D) = \var{D} - \var{D'} - (n-k)\Delta$.
	If $k \leq n/2$, then 
	\[
	\var{D} - g(D) \leq \var{D} - \frac{n}{2}\Delta,
	\]
	because $(n-k) \geq n/2$ and $\var{D'} \geq 0$.
	
	If $k \geq n/2$, then for simplicity we will assume $D' = (x_1,...,x_k)$ and address this assumption later. We then let $D_i = (x_1,...,x_i)$ for any $i$ and use the fact that $\var{D} - \var{D'} = \sum_{i=k+1}^n \var{D_i} - \var{D_{i-1}}$. Lemma~\ref{lem:marginal_variance_bound} along with the fact that $k \geq n/2$ allows us to then bound this summation as
	\[
	\sum_{i=k+1}^n \var{D_i} - \var{D_{i-1}} \leq \frac{4}{n^2} \sum_{i=k+1}^n \sum_{j=1}^n (x_i - x_j)^2
	\]
	We can then use this to achieve (for $D' = (x_1,...,x_k)$)
	\[
	\var{D} - \var{D'} - (n-k)\Delta \leq \sum_{i=k+1}^n \left(\sum_{j=1}^n \frac{4(x_i - x_j)^2}{n^2} - \Delta\right)
	\]
	
	At this point we address the assumption that $D' = (x_1,...,x_k)$ by simply adding non-negative terms to the summation and ensuring that all of the entries in $D'$ are be included in this summation.
	This gives us,
	\[
	\var{D} - \var{D'} - (n-k)\Delta \leq \sum_{i=1}^n \max \left\{\sum_{j=1}^n \frac{4(x_i - x_j)^2}{n^2} - \Delta, 0\right\}.
	\]
	Adding both errors for $k \leq n/2$ and $k \geq n/2$ gives our desired bound.
\end{proof}

\subsection{Proof of Theorem~\ref{thm:variance}}

We now have all the necessary pieces for Theorem~\ref{thm:variance}, which we restate and prove here.

\variance*

\begin{proof}
	The fact that $g$ has sensitivity $\Delta$ follows from the fact that it is our \sensitivity~from Lemma~\ref{lem:variance_runtime_and_correctness}, 	
	and the guarantees of Lemma~\ref{lem:1d_sensitivity_guarantees}.
	The runtime also follows from Lemma~\ref{lem:variance_runtime_and_correctness}.
	We then achieve the error bounds from Lemma~\ref{lem:variance_error}.
	
\end{proof}


\section{Sensitivity preprocessing for personalized privacy guarantees}\label{s.personal}



In this section, we introduce \emph{personalized differential privacy}, where each individual in a database may receive a different privacy parameter $\eps_i$.  We show that our \sensitivity~is naturally compatible with this privacy notion, and demonstrate the use of sensitivity-bounded functions for achieving personalized privacy guarantees, using the Laplace Mechanism and the Exponential Mechanism as illustrative examples.  The notion of personalized privacy has been previously applied to the design of markets for privacy.  We demonstrate the use of \sensitivity~for this application in Section \ref{s.markets}, and hope that our results may be useful tools for this well-studied problem in algorithmic economics.

\subsection{Personalized differential privacy}

We begin by defining \emph{personalized differential privacy}, which extends the standard definition of differential privacy (Definition \ref{def.dp}) to a setting where different individuals participating in the same computation may experience different, personalized privacy guarantees.  Similar definitions have also been used in previous work \cite{JYC15, ESS15, AGK17, LXJJ17}.  Recall from Section \ref{s.prelims} that two databases are neighboring if they differ in at most one entry.  We will say that two databases are \emph{$i$-neighbors} if they differ only in the $i$-th entry.


\begin{definition}[Personalized differential privacy]
A mechanism $\cM: \mathcal{D} \rightarrow \mathcal{R}$ is \emph{$\{\epsilon_i\}$-personally differentially private} if for all $i$, for every pair of $i$-neighbors $D, D' \in \mathcal{D}$, and for every subset of possible outputs $\mathcal{S} \subseteq \mathcal{R}$,
\[ \Pr[\cM(D) \in \mathcal{S}] \leq \exp(\epsilon_i)\Pr[\cM(D') \in \mathcal{S}]. \]
\end{definition}

Note that any $\{\epsilon_i \}$-personally differentially private algorithm is also $(\max_i \eps_i)$-differentially private, since differential privacy provides a worst-case guarantee over all pairs of neighboring databases.



In this section, we show that personalized differential privacy can be achieved by combining our sensitivity preprocessing step with existing differentially private mechanisms.  An analyst can first apply our preprocessing step to get $g$ with desired individual sensitivity bounds, and then evaluate $g$ using a differentially private algorithm. The resulting $\{\epsilon_i\}$-personal differential privacy guarantees will depend on the chosen sensitivity parameters $\{\Delta_i\}$.  Since the function $g$ is independent of the database, the sensitivity preprocessing step does not leak any additional privacy.



Individual sensitivity guarantees are critical for accurate analysis in this new privacy model.  Using only global sensitivity bounds $\Delta$, personally differentially private mechanisms add noise that scales with $\max_i \{\Delta/\eps_i \}$.  This alone cannot offer significant accuracy improvements because the noise must still scale inversely proportionally to the smallest $\eps_i$.   By utilizing individual sensitivity bounds, an analyst can tune each $\Delta_i$ to scale with $\eps_i$ to achieve overall accuracy improvements with personalized differential privacy.


We note that \emph{local differential privacy} \cite{KLN+08} also affords different privacy guarantees to different individuals in the same database, by perturbing each user's data locally before submitting it to the database.  Significantly stronger accuracy guarantees are possible in the presence of a trusted curator---which we assume in our model---because the analyst can leverage correlation of noise across individuals \cite{Ull18}.

A formal statement of the privacy and accuracy guarantees that arise from applying differentially private algorithms to sensitivity-bounded functions will depend on the exact algorithm used.  We illustrate this approach below applying it on two of the most foundational differentially private algorithms: the Laplace Mechanism and the Exponential Mechanism.  


\subsubsection*{Laplace Mechanism}

The \emph{Laplace Mechanism} \cite{DMNS06} is perhaps the most fundamental of all differentially private algorithms.  It first evaluates a real-valued function $f$ on an input database $D$, and then perturbs the answer by adding Laplace noise scaled to the global sensitivity of $f$ divided by $\eps$.  The \emph{Laplace distribution} with scale $b$, denoted $\Lap(b)$, has probability density function:
\[ \Lap(x|b) = \frac{1}{2b} \exp\left(-\frac{|x|}{b}\right). \]

\begin{definition}[Laplace Mechanism \cite{DMNS06}]\label{def.laplacemech}
Given any function $f : \cD \to \mathbb{R}$, the \emph{Laplace Mechanism} is defined as,
\[ \cM_L (D, f, \Delta f / \eps) = f(D) + Y, \]
where $Y$ is drawn from $\Lap(\Delta f / \eps)$.\footnote{We note that the standard definition of the Laplace Mechanism in \cite{DMNS06} takes $\eps$ as input instead of $\frac{\Delta f}{\eps}$.  We use the latter here for ease of notation when extending to personalized differential privacy.  This change does not affect the algorithm at all.}
\end{definition}

The Laplace Mechanism is $\eps$-differentially private \cite{DMNS06}.  We now show how to combine the Laplace Mechanism with our \sensitivity~to achieve personalized differential privacy guarantees.


\begin{proposition}\label{prop.laplace}
Let $g: \cD \to \R$ be a function with individual sensitivities $\{\Delta_i\}$.  For any $\{\eps_i\}$, the Laplace Mechanism $\cM_L \left(D, g, \max_j \{\Delta_j  / \eps_j \} \right)$ is $\{ \eps_i \}$-personally differentially private.
\end{proposition}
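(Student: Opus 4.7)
The plan is to mimic the standard privacy proof of the Laplace Mechanism, but to track the bound separately for each coordinate $i$ rather than using a single worst-case sensitivity. The key observation is that when we restrict attention to $i$-neighbors $D,D'$, only the individual sensitivity $\Delta_i$ is relevant, not the full global sensitivity of $g$.

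First I would fix an index $i$ and a pair of $i$-neighbors $D,D' \in \cD$. By the definition of individual sensitivity, $|g(D) - g(D')| \leq \Delta_i(g) \leq \Delta_i$. Next, I would write $b = \max_j\{\Delta_j/\eps_j\}$ for the scale of the Laplace noise, and note the crucial inequality $\Delta_i / b \leq \eps_i$, which follows immediately from $b \geq \Delta_i/\eps_i$.

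Then I would carry out the usual density ratio computation. For any measurable $S \subseteq \R$, it suffices (by integration) to bound the pointwise ratio of densities. For any $z \in \R$,
\[
\frac{\Lap(z - g(D) \mid b)}{\Lap(z - g(D') \mid b)} \;=\; \exp\!\left(\frac{|z - g(D')| - |z - g(D)|}{b}\right) \;\leq\; \exp\!\left(\frac{|g(D) - g(D')|}{b}\right),
\]
by the reverse triangle inequality. Combining with the individual sensitivity bound and the choice of $b$ gives the ratio is at most $\exp(\Delta_i / b) \leq \exp(\eps_i)$. Integrating over any set $S$ yields $\Pr[\cM_L(D) \in S] \leq \exp(\eps_i)\Pr[\cM_L(D') \in S]$, and since $i$ was arbitrary this establishes $\{\eps_i\}$-personal differential privacy.

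There is no real obstacle here; the proof is essentially the standard Laplace Mechanism argument, with the only subtlety being to recognize that on $i$-neighbors the relevant sensitivity is $\Delta_i$, and that the noise scale $\max_j\{\Delta_j/\eps_j\}$ is by construction large enough to absorb $\Delta_i/\eps_i$ for every $i$ simultaneously. This is exactly why individual sensitivity, rather than global sensitivity, is the right primitive for personalized privacy.
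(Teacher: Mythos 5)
Your proof is correct and takes essentially the same route as the paper's: fix an index $i$ and a pair of $i$-neighbors, compute the pointwise ratio of Laplace densities, apply the reverse triangle inequality to reduce to $|g(D)-g(D')| \le \Delta_i$, and use $\Delta_i/b \le \eps_i$ where $b = \max_j\{\Delta_j/\eps_j\}$. One small remark: by working directly with $1/b$ you sidestep a minor slip in the paper's displayed computation, which writes $\max_j\{\eps_j/\Delta_j\}$ for the reciprocal noise scale where $\min_j\{\eps_j/\Delta_j\} = 1/\max_j\{\Delta_j/\eps_j\}$ is intended; with $\min$ in place the paper's chain of inequalities matches yours.
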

\begin{proof}
Let $D, D' \in \cD$ be $i$-neighbors, let $g : \cD \to \mathbb{R}$ be a function with individual sensitivities $\{\Delta_i\}$, and let $r \in \bR$ be arbitrary.
\begin{align*}
\frac{\Pr[\cM_L(D, g,\max_j \{\Delta_j  / \eps_j \})=r]}{\Pr[\cM_L(D', g,\max_j \{\Delta_j  / \eps_j \})=r]} &= \frac{\exp\left( -\min_j\{\frac{\eps_j}{\Delta_j} \} |g(D)-r|\right)}{\exp\left( -\min_j\{\frac{\eps_j}{\Delta_j} \} |g(D')-r|\right)} \\
&= \exp \left( \min_j\{\frac{\eps_j}{\Delta_j} \}  \left( |g(D') - r| - |g(D) - r| \right) \right) \\
&\leq \exp \left( \min_j\{\frac{\eps_j}{\Delta_j} \} \left( |g(D) - g(D')| \right) \right) \\
&\leq \exp \left(\min_j\{\frac{\eps_j}{\Delta_j} \} \Delta_i \right) \\
&\leq \exp(\eps_i)
\end{align*}
Then this version of the Laplace Mechanism run on a function with individual sensitivities $\{\Delta_i\}$ is $\{\eps_i \}$-personally differentially private.
\end{proof}


Proposition \ref{prop.laplace} shows that to achieve personalized privacy guarantees for a given function $f$, one can apply our \sensitivity~to produce Sensitivity-Bounded $g$, and then apply the Laplace Mechanism.  The accuracy guarantees of this procedure will depend on the worst-case ratio of $\Delta_i/\eps_i$, as well as global sensitivity of the original function $f$.  If one person $j$ requires significantly higher privacy protections than the rest of the population, the analyst can account for this by reducing $\Delta_j$.  This may greatly improve accuracy over the standard approach, which would require the analyst to add increased noise to the entire population.
 We address this challenge more concretely in Section \ref{s.markets}, using the application of market design for private data.


\subsubsection*{Exponential Mechanism}

The \emph{Exponential Mechanism} \cite{MT07} is a powerful private mechanism for answering non-numeric queries with an arbitrary range, such as selecting the best outcome from a set of alternatives.  The quality of an outcome is measured by a \emph{score function} $q \colon \cD \times \cR \to \mathbb{R}$, which relates each alternative to the underlying data through a real-valued score.  The global sensitivity of the score function is measured only with respect to the database argument; it can be arbitrarily sensitive in its range argument:
\[ \Delta q = \max_{r \in \cR} \max_{D,D' \; neighbors} | q(D,r) - q(D',r)|. \]
We define the individual sensitivity of a quality score analogously with respect to only its database argument:
\[ \Delta_i(q) = \max_{r \in \cR} \max_{D,D' \; i-neighbors} | q(D,r) - q(D',r)|. \]

The Exponential Mechanism samples an output from the range $\cR$ with probability exponentially weighted by score.  Outcomes with higher scores are exponentially more likely to be selected, thus ensuring both privacy and a high quality outcome.

\begin{definition}[Exponential Mechanism \cite{MT07}]\label{def.expmech}
Given a quality score $q : \cD \times \cR \to \mathbb{R}$, the \emph{Exponential Mechanism} is defined as:\footnote{As with the Laplace Mechanism, we define the Exponential Mechanism to take $\frac{\Delta q}{\eps}$ as input, instead of $\eps$.  This change is purely notational, and has no impact on the algorithm.}
\[ \cM_E(D, q, \Delta q /\eps) = \mbox{output } r \in \cR \mbox{ with probability proportional to } \exp\left(\frac{\eps q(D, r)}{2 \Delta q}\right).\]
\end{definition}

The Exponential Mechanism is $\eps$-differentially private \cite{MT07}.  We now show that when a score function has bounded individual sensitivity, the Exponential Mechanism is personally differentially private.

\begin{proposition}\label{prop.expo}
Let $q: \cD \times \cR \to \mathbb{R}$ be a score function with individual sensitivities $\{\Delta_i\}$.  For any $\{\eps_i\}$, the Exponential Mechanism $\cM_E \left(D, q, \max_j \{\Delta_j  / \eps_j \} \right)$ is $\{ \eps_i \}$-personally differentially private.
\end{proposition}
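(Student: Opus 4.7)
The plan is to adapt the standard privacy proof of the Exponential Mechanism, but restrict attention to $i$-neighbors so that the global sensitivity $\Delta q$ is replaced by the individual sensitivity $\Delta_i(q)$. The key observation is that when we scale the exponent by $c := \max_j \{\Delta_j/\eps_j\}$, then for any $i$ we have $\Delta_i(q)/c \leq \eps_i$, which is exactly the slack needed to recover an $\eps_i$ guarantee along $i$-neighbors.

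First, I would fix $i$-neighbors $D, D' \in \cD$, an arbitrary $r \in \cR$, and write the ratio of output probabilities as a product of two factors, as in the classical exponential mechanism analysis:
\[
\frac{\Pr[\cM_E(D, q, c) = r]}{\Pr[\cM_E(D', q, c) = r]}
= \underbrace{\frac{\exp\!\left(\frac{q(D,r)}{2c}\right)}{\exp\!\left(\frac{q(D',r)}{2c}\right)}}_{\text{(A)}}
\cdot
\underbrace{\frac{\sum_{r' \in \cR} \exp\!\left(\frac{q(D',r')}{2c}\right)}{\sum_{r' \in \cR} \exp\!\left(\frac{q(D,r')}{2c}\right)}}_{\text{(B)}}.
\]
Factor (A) is immediately bounded by $\exp\!\left(\frac{|q(D,r)-q(D',r)|}{2c}\right) \leq \exp\!\left(\frac{\Delta_i(q)}{2c}\right) \leq \exp(\eps_i/2)$, where the first inequality uses the definition of $\Delta_i(q)$ (since $D, D'$ are $i$-neighbors, not merely neighbors), and the second uses $c \geq \Delta_i/\eps_i$.

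For factor (B), I would bound each numerator term $\exp\!\left(\frac{q(D',r')}{2c}\right)$ by $\exp\!\left(\frac{q(D,r') + \Delta_i(q)}{2c}\right) = \exp(\Delta_i(q)/(2c)) \cdot \exp\!\left(\frac{q(D,r')}{2c}\right)$, again using individual sensitivity along $i$-neighbors. Pulling the common factor $\exp(\Delta_i(q)/(2c)) \leq \exp(\eps_i/2)$ out of the sum, the remaining ratio collapses to $1$, so (B) $\leq \exp(\eps_i/2)$. Multiplying the two bounds gives the required $\exp(\eps_i)$.

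I do not anticipate a real obstacle here: the proof is essentially a bookkeeping adaptation of the classical exponential mechanism argument, with the only conceptual point being that restricting to $i$-neighbors lets us use $\Delta_i(q)$ in place of $\Delta q$ on both occurrences (one in the ratio, one in the normalizer bound), while the scale $c = \max_j \{\Delta_j/\eps_j\}$ absorbs any index $i$ we might have been forced to consider. The only subtlety to be careful about is keeping track of the factor-of-two in the exponent that comes from bounding the normalizer, which is exactly what motivates the $1/(2 \cdot \max_j\{\Delta_j/\eps_j\})$ scaling inside the exponential, rather than $1/\max_j\{\Delta_j/\eps_j\}$.
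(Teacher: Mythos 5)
Your proof is correct and follows exactly the paper's approach: the same two-factor decomposition of the probability ratio, with each factor bounded by $\exp(\eps_i/2)$ using the individual sensitivity $\Delta_i(q)$ along $i$-neighbors and the scale $c=\max_j\{\Delta_j/\eps_j\}$. One small remark: your bookkeeping $\Delta_i(q)/c \leq \eps_i$ is actually tidier than the paper's displayed computation, which writes $\max_j\{\eps_j/\Delta_j\}$ where the intended multiplier $1/c = \min_j\{\eps_j/\Delta_j\}$ should appear (the final inequality $\max_j\{\eps_j/\Delta_j\}\cdot\Delta_i \leq \eps_i$ would not hold in general, whereas $\min_j\{\eps_j/\Delta_j\}\cdot\Delta_i \leq \eps_i$ does); your version avoids this slip entirely.
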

\begin{proof}
Let $D, D' \in \cD$ be $i$-neighbors, let $q$ be a score function with individual sensitivities $\{\Delta_i\}$, and let $r \in \cR$ be an arbitrary element of the output range.
\begin{align*}
&\frac{\Pr[\cM_E(D,q,\max_j \{\Delta_j  / \eps_j \})=r]}{\Pr[\cM_E(D',q,\max_j \{\Delta_j  / \eps_j \})=r]} \\
&\quad \quad = \frac{\left(\frac{\exp\left(\min_j \{\frac{\eps_j}{\Delta_j}\}q(D,r)/2 \right) }{\sum_{r' \in \cR} \exp\left(\min_j \{\frac{\eps_j}{\Delta_j}\} q(D,r')/2 \right) } \right)}{\left(\frac{\exp\left(\min_j \{\frac{\eps_j}{\Delta_j}\}q(D',r)/2 \right)}{\sum_{r' \in \cR} \exp\left(\min_j \{\frac{\eps_j}{\Delta_j}\} q(D',r')/2 \right) }\right) }\\
&\quad \quad = \left( \frac{ \exp\left(\min_j \{\frac{\eps_j}{\Delta_j}\} q(D,r)/2 \right)}{\exp\left(\min_j \{\frac{\eps_j}{\Delta_j}\} q(D',r)/2 \right)}\right) \cdot \left( \frac{\sum_{r' \in \cR} \exp\left(\min_j \{\frac{\eps_j}{\Delta_j}\}q(D',r')/2 \right) }{\sum_{r' \in \cR} \exp\left(\min_j \{\frac{\eps_j}{\Delta_j}\}q(D,r')/2 \right) }\right) \\
&\quad \quad = \exp \left( \min_j \{\frac{\eps_j}{\Delta_j}\} \left( q(D,r) - q(D',r) \right)/2 \right) \cdot \left( \frac{\sum_{r' \in \cR} \exp\left(\min_j \{\frac{\eps_j}{\Delta_j}\}q(D',r')/2 \right) }{\sum_{r' \in \cR} \exp\left(\min_j \{\frac{\eps_j}{\Delta_j}\}q(D,r')/2 \right) }\right) \\
&\quad \quad \leq \exp \left(\frac{1}{2} \min_j \{\frac{\eps_j}{\Delta_j}\} \Delta_i \right) \cdot \exp \left( \frac{1}{2}\min_j \{\frac{\eps_j}{\Delta_j}\} \Delta_i \right) \cdot \left( \frac{\sum_{r' \in \cR} \exp\left(\min_j \{\frac{\eps_j}{\Delta_j}\}q(D,r')/2 \right) }{\sum_{r' \in \cR} \exp\left(\min_j \{\frac{\eps_j}{\Delta_j}\} q(D,r')/2 \right) }\right) \\
&\quad \quad =\exp\left(\min_j \{\frac{\eps_j}{\Delta_j}\} \Delta_i \right) \\
&\quad \quad \leq \exp(\eps_i)
\end{align*}
\end{proof}

%


\begin{remark}\label{rem.expo}
The Exponential Mechanism is a canonical $\eps$-differentially private algorithm: every $\eps$-differentially private algorithm $\cM$ can be written as an instantiation of the Exponential Mechanism using quality score $q(D,r) = \ln ( \Pr[\cM(D)=r] )$ with global sensitivity $\Delta q = \eps$.  We can use this reduction to show that \emph{any} $\eps$-differentially private algorithm can be modified to give personal privacy guarantees using our \sensitivity.   First, re-write private mechanism $\cM$ as an Exponential Mechanism $\cM_E$, and then perform Sensitivity-Preprocessing on the quality score of $\cM_E$.  Proposition \ref{prop.expo} shows that the sensitivity-bounded version of $\cM_E$ will satisfy personalized differential privacy.
\end{remark}

%
%
%


%
%
%

\subsection{Application: Markets for privacy}\label{s.markets}

One motivating application for wanting personalized privacy guarantees comes from algorithmic game theory and the study of market design for privacy.  This is a well-studied problem in the algorithmic economics community \cite{CCKMV13, NOS12, NST12, LR12, FL12, GR13, GLRS14, CLR+15, CIL15, WFA15, CPWV16}, and of practical importance as growing amounts of data are collected about individuals.  In a \emph{market for privacy}, a data analyst wishes to purchase and aggregate data from multiple strategic individuals.  These individuals may have privacy concerns, and will require compensation for their privacy loss from sharing data.  On the opposite side of the market, firms demand accurate estimates of population statistics, for uses such as market research or operational decision making.

The analyst must first purchase data from these strategic individuals, and then aggregate the collected data into an accurate estimate for firms.  Her goal is to perform this task while maximizing her own profits.  One of the tools at her disposal is differential privacy: by offering individuals formal privacy guarantees, their privacy costs from sharing data are diminished, and the analyst can provide smaller payments.  However, the noise from differential privacy may introduce additional error.

It is the analyst's task to determine the optimal privacy level for the market that balances these opposing effects.  Due to potentially heterogeneous privacy costs of the individuals, it may be optimal in terms of her profit for the analyst to provide different privacy guarantees to different individuals in the population.  She could then use our \sensitivity~to algorithmically provide the heterogeneous privacy levels demanded by the market.  We leave the challenge of modeling specifics of these markets as an open question to the algorithmic game theory community, and hope that our preprocessing tool and mechanisms for personalized privacy will open new avenues for designing markets for privacy.






\section{Extension to 2-dimensions for $\ell_1$ sensitivity}\label{sec:higher_dimensions}


In this section we show that our \sensitivity~can be naturally extended to functions that map to 2-dimensional space where we consider the sensitivity in the $\ell_1$ distance metric.
While there is a natural extension of our \sensitivity~to higher dimensions, the primary difficulty will be ensuring that our greedy construction still yields a non-empty intersection of the constraints.
Interestingly, we show that this set of constraints
will give a non-empty intersection for 2 dimensions, and provide a counter-example for higher dimensions.
 
Recall that our \sensitivity~found a range $[\lowerbound(D),\upper(D)]$ where it could feasibly place $g(D)$, then choose the point in that segment closest to $f(D)$.
This range of feasible points came from intersecting each constraint $[g(D - x_i) - \Delta_i, g(D- x_i) + \Delta_i]$ induced by the neighbors of $D$ that are strictly smaller.
The \sensitivity~then chose the point in this intersection closest to $f(D)$. The key property needed by the algorithm was that this intersection was non-empty.

To prove this key property we took advantage of the data universe structure, which immediately yielded the fact that for any $x_i,x_j \in D$ we must have  $[g(D - x_i) - \Delta_i, g(D- x_i) + \Delta_i] \cap [g(D - x_j) - \Delta_j, g(D- x_j) + \Delta_j] \neq \emptyset$.
As a result, we had a finite set of line segments whose intersection was pair-wise non-empty, which immediately implies that the intersection of all line segments was non-empty.
We note that $[g(D - x_i) - \Delta_i, g(D- x_i) + \Delta_i]$ is the $\ell_1$ ball with radius $\Delta_i$ around $g(D - x_i)$ in one dimension.
For higher dimensions, the constraints will now be the $\ell_1$ ball with radius $\Delta_i$ around $g(D - x_i)$ in $d$ dimensions.
The structure of the data universe will still give that each of these $\ell_1$ balls has a non-empty pair-wise intersection.
However, this only implies that the intersection of all these $\ell_1$ balls is non-empty if we are in 2 dimensions. 
We first formally define the notion of an $\ell_1$ ball in higher dimensions.

\begin{definition}[$\ell_1$ ball]\label{def:l1_ball}
	The $\ell_1$ ball around point $x^* \in \R^d$ with radius $\Delta$ is the set:
	\[
	(x^*,\Delta)^d_1 \defeq \{ x \in \R^d | \norm{x - x^*}_1 \leq \Delta \}.
	\]	
\end{definition}

To ensure that our choice of $g(D)$ does not violate the individual sensitivity parameter $\Delta_i$, we must place $g(D) \in (g(D - x_i),\Delta_i)^d_1$.
In the one-dimensional case, we had the same constraints, but they were simpler to handle because the $\ell_1$ ball is simply a line segment.
We now define our \sensitivity~for two dimensions, which chooses the point that satisfies our constraints and is closest to $f(D)$, just as in the one-dimensional case.

\begin{definition}[2-dimensional Sensitivity-Preprocessing Function]
	Given a function $f: \mathcal{D} \rightarrow \R^2$ for any data universe such that for any $D \in \mathcal{D}$, all $D' \subset D$ are also in $\mathcal{D}$. For any non-negative individual sensitivity parameters $ \{\Delta_i\}$, we say that a function $g: \mathcal{D} \rightarrow \R^2$ is a \sensitivity~of $f$ with parameters $\{\Delta_i\}$ if $g(\emptyset) = f(\emptyset)$ and 
	\[g(D)= 
	 \text{closest point in } \cap_{x_i \in D} (g(D - x_i),\Delta_i)^2_1 \text{ to } f(D) \text{ in the $\ell_2$ metric}.
	\]
	If all $\Delta_i = \Delta$ for some non-negative $\Delta$, then we say that $g$ is a \sensitivity~of $f$ with parameter $\Delta$.
\end{definition}

Our primary goal of this section will then be to prove the following theorem that is equivalent to Theorem~\ref{thm:main_1d} but works for 2-dimensions. We will also point out the key spot within the proof where it breaks for dimensions greater than 2.

\begin{theorem}\label{thm:main_higher_dim}
	Given $T(n)$-time query access to an arbitrary $f:\mathcal{D} \rightarrow \mathbb{R}^2$, and sensitivity parameters $\{\Delta_i\}$, we provide $O((T(n) + n)2^{n})$ time access to \sensitivity~$g:\mathcal{D} \rightarrow \mathbb{R}^2$ such that $\Delta_i(g) \leq \newSens_i$.  Further, for any database $D = (x_1,\ldots,x_n)$,	
	\[ 
	\norm{f(D) - g(D)}_1 \leq \max_{\sigma \in \sigma_D}\sum_{i=1}^{|D|} \max \{ \norm{f(D_{\sigma(<i)} + x_{\sigma(i)}) - f(D_{\sigma(<i)})}_1 - \newSens_{\sigma(i)}, 0 \},
	\]
	where $\sigma_D$ is the set of all permutations on $[n]$, and $D_{\sigma(<i)} = (x_{\sigma(1)},...,x_{\sigma(i-1)})$ is the subset of $D$ that includes all individual data in the permutation before the $i$th entry.
\end{theorem}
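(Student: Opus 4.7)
My proposal is to mirror the structure of the proof of Theorem~\ref{thm:main_1d} (i.e., Lemmas \ref{lem:1d_sensitivity_guarantees} and \ref{lem:1D_error_bounds}), replacing the one-dimensional interval $[g(D-x_i)-\Delta_i,\,g(D-x_i)+\Delta_i]$ by the $\ell_1$ ball $(g(D-x_i),\Delta_i)^2_1$, and then projecting $f(D)$ onto the intersection of these balls. The algorithm is identical to \fitting~with this modified update rule. The runtime analysis is essentially unchanged: we still iterate over the $2^n$ subsets of $D$, compute $n$ balls per subset, and perform one planar $\ell_2$-closest-point computation on the intersection polygon; the intersection of $n$ axis-oriented $\ell_1$ balls in $\R^2$ has complexity $O(n)$ and closest-point in $\ell_2$ to a convex polygon costs $O(n)$, so we recover the $O((T(n)+n)2^n)$ bound.

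The real content, and the main obstacle, is to show that the feasible region $K(D) \defeq \bigcap_{x_i \in D}(g(D-x_i),\Delta_i)^2_1$ is non-empty at every step so that $g(D)$ is well defined. I plan to prove this inductively, in two stages. First, a pairwise non-emptiness claim: for any $x_i, x_j \in D$, the smaller database $D - x_i - x_j$ is a common (strictly smaller) neighbor of both $D-x_i$ and $D-x_j$, so by induction $\norm{g(D-x_i) - g(D - x_i - x_j)}_1 \leq \Delta_j$ and $\norm{g(D-x_j) - g(D - x_i - x_j)}_1 \leq \Delta_i$. Hence $\norm{g(D-x_i)-g(D-x_j)}_1 \leq \Delta_i+\Delta_j$, which places $g(D-x_i-x_j)$ in both balls. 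Second, a Helly-type lifting: in $\R^2$, finitely many $\ell_1$ balls with pairwise non-empty intersection have a common point. The reason this is special to dimension two is that the linear isometry rotating by $45^\circ$ sends $\ell_1$ balls in $\R^2$ to $\ell_\infty$ balls, i.e.\ axis-aligned squares. Axis-aligned boxes are products of intervals, so Helly's theorem for them reduces coordinatewise to the 1D case, where pairwise intersection of intervals implies common intersection. This argument breaks in $\R^3$ because $\ell_1$ balls there (cross-polytopes) are not affinely equivalent to $\ell_\infty$ balls, and indeed explicit three-ball counterexamples exist, which is exactly what the paper will later use.

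Once non-emptiness of $K(D)$ is established, the sensitivity bound $\Delta_i(g) \leq \Delta_i$ is immediate by construction: $g(D) \in (g(D-x_i),\Delta_i)^2_1$ means $\norm{g(D)-g(D-x_i)}_1 \leq \Delta_i$ for every $x_i \in D$. Symmetrically, for any $x^*$ that could be added to $D$, the same reasoning applied to $D + x^*$ shows $\norm{g(D+x^*)-g(D)}_1 \leq \Delta_{x^*}$, yielding $\Delta_i(g) \leq \Delta_i$.

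Finally, for the error bound I plan to replicate the induction of Lemma~\ref{lem:1D_error_bounds} in the $\ell_1$ norm. The base case $D = \{x_j\}$ follows from the projection step and the bound $\norm{f(\{x_j\})-f(\emptyset)}_1$. For the inductive step, if $f(D) \in K(D)$, then $g(D)=f(D)$ and there is nothing to show; otherwise, let $g(D)$ be the $\ell_2$-closest point of $K(D)$ to $f(D)$. A short convexity/KKT argument shows that some constraint ball $(g(D-x_i),\Delta_i)^2_1$ is tight, and that the line from $f(D)$ to $g(D)$ crosses this ball's boundary, so $\norm{f(D)-g(D)}_1 \leq \norm{f(D) - g(D-x_i)}_1 - \Delta_i$ for that witnessing $x_i$ (up to the usual comparison between $\ell_2$-nearest and $\ell_1$-nearest on a convex polytope, which the $\ell_1$ triangle inequality handles cleanly since the tight facet is an $\ell_1$-ball boundary). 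Then triangle inequality and induction give
\[
\norm{f(D)-g(D)}_1 \leq \max\{\norm{f(D)-f(D-x_i)}_1 - \Delta_i,\, 0\} + \norm{f(D-x_i)-g(D-x_i)}_1,
\]
and unrolling the recursion over a worst-case ordering $\sigma$ produces the stated permutation bound, finishing the theorem.
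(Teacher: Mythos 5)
Your proposal follows essentially the same route as the paper: the pairwise non-emptiness argument via the common subset $D - x_i - x_j$ is identical to Lemma~\ref{lem:pairwise_intersect_highD}, and your Helly step is the same observation as Lemma~\ref{lem:pairwise_implies_all_intersection}. The paper writes out the four half-plane constraints explicitly, combines parallel ones by taking minima, and reads off the rotated rectangle; you state the cleaner equivalent that the change of coordinates $(u,v) = (x_1 + x_2,\, x_1 - x_2)$ carries $\ell_1$ balls to $\ell_\infty$ boxes, after which Helly reduces coordinatewise. Both correctly locate why the argument stops at $d=2$.

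One step of your error-bound sketch is not quite right as written. You claim $\norm{f(D)-g(D)}_1 \leq \norm{f(D)-g(D-x_i)}_1 - \Delta_i$ for a tight witnessing $x_i$ and justify it with ``the $\ell_1$ triangle inequality.'' But the triangle inequality goes the wrong way: tightness gives $\norm{g(D)-g(D-x_i)}_1 = \Delta_i$, and $\norm{f(D)-g(D-x_i)}_1 \leq \norm{f(D)-g(D)}_1 + \Delta_i$, which is the reverse of what you need. (In 1D the inequality becomes an equality only because the three points are collinear.) The correct argument uses the same rotated-coordinate picture you invoked for Helly: in $(u,v)$ coordinates $K(D)$ is a box $[\alpha_1,\beta_1]\times[\alpha_2,\beta_2]$, the $\ell_1$ norm becomes $\ell_\infty$, and $\tilde g$ is the coordinatewise clamp of $\tilde f$, which is simultaneously the $\ell_2$- and $\ell_\infty$-nearest point. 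If, say, the overshoot in the $u$-coordinate is the larger one and $f_u > \beta_1$, pick the $x_{i^*}$ attaining $\beta_1 = \min_i\{(\tilde y_i)_1 + \Delta_{i^*}\}$. Then $\norm{\tilde f - \tilde y_{i^*}}_\infty \geq f_u - (\tilde y_{i^*})_1 = \norm{\tilde f - \tilde g}_\infty + \Delta_{i^*}$, which is the bound you want. So the witness $x_i$ is not an arbitrary tight ball but the one selected by the tight half-plane in the overshoot direction. To be fair, the paper's own justification of Lemma~\ref{lem:2D_error_bounds} (``follows identically \ldots by replacing absolute value with the 1-norm'') is just as terse and has the same unstated step, so the two treatments are comparably underspecified here; your rotation framing actually makes the missing argument easiest to fill in.
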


As before, we will break the proof of this theorem into two parts. It immediately follows from construction that our 2-dimensional \sensitivity~will have the appropriate individual sensitivity parameters, but only if the function is well-defined. To this end, we first show in Section~\ref{subsec:2D_sensitivity} that the intersection of the $\ell_1$ balls is always non-empty if each pair-wise intersection is non-empty. Then in Section~\ref{subsec:2D_error} we give the analogous error guarantees where the proof will just follow equivalently to the one-dimensional case.

\subsection{Correctness of Sensitivity-Preprocessing Function}\label{subsec:2D_sensitivity}

In this section we show that for our 2-dimensional \sensitivity, it is always the case that $g(D)$ is defined.  This is equivalent to showing:
\[
\bigcap_{x_i \in D} (g(D - x_i),\Delta_i)^2_1 \neq \emptyset.
\]

We will first take advantage of the structure of data universes to show that the pair-wise intersection is always non-empty. Then we will use the fact that pair-wise intersection of $\ell_1$ balls in 2-dimensions implies that the intersection of all $\ell_1$ balls is non-empty. Intuitively, this is because $\ell_1$ balls in 2-dimensions are simply rotated squares. Further, we will show that this is exactly the step that breaks the algorithm for higher dimensions. 

\begin{lemma}\label{lem:pairwise_intersect_highD}
	Given any $f: \mathcal{D} \rightarrow \R^2$ and desired sensitivity parameters $\{\Delta_i\}$, let $g: \mathcal{D} \rightarrow \R^2$ be the \sensitivity~with parameters $\{\Delta_i\}$. For any $D \in \mathcal{D}$ with at least two entries, assume that $g(D')$ is defined for any $D' \subset D$. Then for any $x_i,x_j \in D$,
	\[
	(g(D - x_i),\Delta_i)^2_1 \cap (g(D - x_j),\Delta_j)^2_1 \neq \emptyset.
	\]
\end{lemma}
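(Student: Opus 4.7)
The plan is to reduce this to a triangle-inequality argument using the point $g(D - x_i - x_j)$ as an intermediate. The key geometric fact is that two $\ell_1$ balls $(p, \Delta)_1^2$ and $(q, \Delta')_1^2$ have non-empty intersection if and only if $\|p - q\|_1 \leq \Delta + \Delta'$, since one can move from $p$ toward $q$ along the straight line until the two balls first touch. So it suffices to establish
\[
\|g(D - x_i) - g(D - x_j)\|_1 \leq \Delta_i + \Delta_j.
\]

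To show this, consider the database $D - x_i - x_j$. If $|D| = 2$ this equals $\emptyset$, on which $g$ is initialized to $f(\emptyset)$ and hence defined; otherwise it is a strict subset of $D$, so $g(D - x_i - x_j)$ is defined by the hypothesis of the lemma. Now $D - x_i$ is also a strict subset of $D$ (again defined by hypothesis) and $D - x_i - x_j$ is obtained from $D - x_i$ by removing individual $j$; because $g$ is constructed as the 2-dimensional \sensitivity, the inductive construction enforces $g(D - x_i) \in (g(D - x_i - x_j), \Delta_j)_1^2$, giving
\[
\|g(D - x_i) - g(D - x_i - x_j)\|_1 \leq \Delta_j.
\]
By the symmetric argument with $i$ and $j$ swapped,
\[
\|g(D - x_j) - g(D - x_i - x_j)\|_1 \leq \Delta_i.
\]
Adding these two inequalities and applying the triangle inequality for the $\ell_1$ norm yields $\|g(D - x_i) - g(D - x_j)\|_1 \leq \Delta_i + \Delta_j$, which is exactly the condition for the two $\ell_1$ balls to intersect.

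The proof is essentially a 2-dimensional analogue of the $k = j$ reasoning in Lemma~\ref{lem:1d_sensitivity_guarantees}, and it should generalize verbatim to any dimension and indeed any norm. The main obstacle is not proving pairwise intersection, which is elementary, but rather verifying that pairwise intersection of the $(g(D - x_i), \Delta_i)_1^2$ suffices for the full intersection over all $x_i \in D$ to be non-empty. That step will rely on a Helly-type property specific to $\ell_1$ balls in $\R^2$ (which are rotated axis-aligned squares, a family that satisfies Helly's theorem with Helly number~$2$), and this is precisely where the analogous approach fails in dimensions $d \geq 3$. That deeper step, however, is presumably handled in the next lemma of the paper; here it is enough to deliver the pairwise intersection via the triangle-inequality chain above.
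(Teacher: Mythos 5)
Your proof is correct and follows essentially the same route as the paper: both reduce the pairwise intersection to the $\ell_1$ bound $\|g(D - x_i) - g(D - x_j)\|_1 \leq \Delta_i + \Delta_j$, obtained by passing through the common sub-database $D - x_i - x_j$ and applying the triangle inequality. Your extra remarks (handling the $|D|=2$ base case, the equivalence between $\ell_1$-distance and ball intersection, and the Helly-number-2 observation for where this construction does and does not extend) are accurate and consistent with the paper's subsequent Lemma~\ref{lem:pairwise_implies_all_intersection} and its remarks.
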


Note that we have not yet proven that $g(D)$ is defined on all databases, so we will need to first assume that it is on all subsets of $D$.  Our proof of this fact will be done inductively.

\begin{proof}
	We use the fact that $D$ has at least two entries and consider the database $D - x_i - x_j$.
	Due to our assumption that $g(D')$ is defined on all $D' \subset D$, it follows from our construction of $g$ that
	\[
	\norm{g(D - x_i) - g(D - x_i - x_j)}_1 \leq \newSens_j,
	\]	 
	and 
	\[
	\norm{g(D - x_j) - g(D - x_i - x_j)}_1 \leq \newSens_i,
	\]
	Applying triangle inequality gives,
	\[
	\norm{g(D - x_i) - g(D - x_j)}_1 \leq \newSens_i + \newSens_j,
	\]
	which implies our claim by the definition of $\ell_1$ balls.
\end{proof}

With this pair-wise intersection property, it now remains to be shown that this implies the intersection of all $\ell_1$ balls is non-empty.
For this we prove a general fact about the intersection of $\ell_1$ balls in 2-dimensions.

\begin{lemma}\label{lem:pairwise_implies_all_intersection}
	Consider any set of points $y_1,...,y_n \in \R^2$, where we let $(y_i)_1$ and $(y_i)_2$ denote the respective coordinates of $y_i$. Consider any set of non-negative $\Delta_1,...,\Delta_n$. If for any $y_i,y_j$,
	\[
	(y_i,\Delta_i)^2_1 \cap (y_j,\Delta_j)^2_1 \neq \emptyset,
	\]
	then,
	\[
	\bigcap_{i=1}^n (y_i,\Delta_i)^2_1 \neq \emptyset.
	\]
\end{lemma}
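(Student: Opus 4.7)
My plan is to reduce the problem to the Helly-type property for axis-aligned boxes by applying the linear change of coordinates $(x,y) \mapsto (x+y,\, x-y)$, which sends $\ell_1$ balls in $\mathbb{R}^2$ to $\ell_\infty$ balls (axis-aligned squares). Concretely, I will verify the identity $|p| + |q| = \max(|p+q|,|p-q|) \cdot$ (appropriate factor), so that the image of $(y_i,\Delta_i)^2_1$ under this invertible linear map is an axis-aligned square $I_i \times J_i$ of side $2\Delta_i$. Because this map is a bijection, a point lies in $\bigcap_i (y_i,\Delta_i)^2_1$ iff its image lies in $\bigcap_i (I_i \times J_i)$, and pairwise non-emptiness is preserved in both directions.

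Once the problem is stated in axis-aligned form, I will finish it with the one-dimensional Helly property for intervals (pairwise intersecting intervals on a line have a common point). Since each axis-aligned square is a product, pairwise intersection of the squares forces pairwise intersection of the projections onto the $x$-axis (namely the intervals $\{I_i\}$) and similarly for the $y$-axis (the intervals $\{J_i\}$). Applying 1D Helly separately to $\{I_i\}$ and to $\{J_i\}$ gives points $u^\star \in \bigcap_i I_i$ and $v^\star \in \bigcap_i J_i$, and then $(u^\star,v^\star)$ lies in every $I_i \times J_i$. Pulling back through the linear map produces a point in $\bigcap_i (y_i,\Delta_i)^2_1$.

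The main (small) obstacle is just to carry out the coordinate change carefully so that the claim ``$\ell_1$ ball becomes axis-aligned $\ell_\infty$ ball'' is cleanly established; once that identification is made the rest is two invocations of one-dimensional Helly. I would flag here (as the paper does) that the reason this proof strategy is genuinely two-dimensional is that for $d \geq 3$ the $\ell_1$ ball (an octahedron) is not linearly equivalent to the $\ell_\infty$ ball (a cube), and Helly's theorem in $\mathbb{R}^d$ requires every $d+1$ sets to meet, not just pairs; this is where the following counter-example in the paper will live.
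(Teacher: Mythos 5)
Your proposal is correct and is essentially the same proof as the paper's, just packaged more cleanly: the paper's rewriting of each $\ell_1$ ball as four linear constraints with normals $(1,1)$ and $(1,-1)$, and its combining of parallel constraints via minimums, is precisely your change of coordinates $(x,y)\mapsto(x+y,x-y)$ done implicitly, and the paper's final contradiction argument is a proof of the $1$D Helly property for intervals written out inline rather than invoked as a black box. Your version makes the rotated-square picture explicit and is arguably more readable, but the underlying argument is identical, including the observation about why it fails for $d\ge 3$ and for $\ell_p$ balls with $1<p<\infty$.
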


	Our proof will first rewrite each $\ell_1$ ball as a set of 4 linear inequalities. From this interpretation we will then use two critical facts. First, each inequality has a corresponding parallel inequality in any other $\ell_1$ ball. Second, removing any one of these constraints gives an unbounded polytope.

\begin{proof}
	Further examination of Definition~\ref{def:l1_ball} shows that
	\[
	(y_i,\Delta_i)^2_1 \defeq \{ x \in \R^2 | \norm{x - y_i}_1 \leq \Delta_i \} = \{ x \in \R^2 | \abs{(x)_1 - (y_i)_1} + \abs{(x)_2 - (y_i)_2} \leq \Delta_i \}.
	\]
	 We then use a known trick of converting absolute values into linear inequalities where $\abs{x} \leq k$ becomes $x \leq k$ and $-x \leq k$.
	\begin{align*}
	(y_i,\Delta_i)^2_1 = & \{ x \in \R^2 | (x)_1  + (x)_2  \leq (y_i)_1 + (y_i)_2 + \Delta_i \}\\
	& \cap \{ x \in \R^2 | - (x)_1  - (x)_2  \leq - (y_i)_1 - (y_i)_2 + \Delta_i \} \\
	& \cap \{ x \in \R^2 | (x)_1  - (x)_2  \leq (y_i)_1 - (y_i)_2 + \Delta_i \} \\
	& \cap \{ x \in \R^2 | - (x)_1  + (x)_2  \leq - (y_i)_1 + (y_i)_2 + \Delta_i \}
	\end{align*}
	
	At this point we note that each of the balls have parallel inequalities, so we can use the following fact:
	\begin{multline*}
	\{ x \in \R^2 | (x)_1  + (x)_2  \leq (y_i)_1 + (y_i)_2 + \Delta_i \} \cap \{ x \in \R^2 | (x)_1  + (x)_2  \leq (y_j)_1 + (y_j)_2 + \Delta_j \} \\
	= \left\{ x \in \R^2 | (x)_1  + (x)_2  \leq \min\{ (y_i)_1 + (y_i)_2 + \Delta_i, (y_j)_1 + (y_j)_2 + \Delta_j \} \right\}.
	\end{multline*}
	We apply this fact to the full intersection and obtain,
	\begin{align*}
	\bigcap_{i=1}^n (y_i,\Delta_i)^2_1 = & \{ x \in \R^2 | (x)_1  + (x)_2  \leq \min_{i \in[n]} \{(y_i)_1 + (y_i)_2 + \Delta_i\} \}\\
	& \cap \{ x \in \R^2 | - (x)_1  - (x)_2  \leq \min_{i \in[n]} \{ - (y_i)_1 - (y_i)_2 + \Delta_i \} \} \\
	& \cap \{ x \in \R^2 | (x)_1  - (x)_2  \leq \min_{i \in[n]} \{(y_i)_1 - (y_i)_2 + \Delta_i\} \} \\
	& \cap \{ x \in \R^2 | - (x)_1  + (x)_2  \leq \min_{i \in[n]} \{- (y_i)_1 + (y_i)_2 + \Delta_i\} \}.
	\end{align*}
	
	Intuitively, if this intersection exists, it must be a rectangle that is rotated 45 degrees. 
	To more easily see this fact, we now multiply the second and fourth constraint by -1.
	\begin{align*}
	\bigcap_{i=1}^n (y_i,\Delta_i)^2_1 = & \{ x \in \R^2 | (x)_1  + (x)_2  \leq \min_{i \in[n]} \{(y_i)_1 + (y_i)_2 + \Delta_i\} \}\\
	& \cap \{ x \in \R^2 | (x)_1  + (x)_2  \geq \min_{i \in[n]} \{(y_i)_1 + (y_i)_2 - \Delta_i \} \} \\
	& \cap \{ x \in \R^2 | (x)_1  - (x)_2  \leq \min_{i \in[n]} \{(y_i)_1 - (y_i)_2 + \Delta_i\} \} \\
	& \cap \{ x \in \R^2 | (x)_1  - (x)_2  \geq \min_{i \in[n]} \{(y_i)_1 - (y_i)_2 - \Delta_i\} \}
	\end{align*}
	
	With this interpretation it is straightforward to see that $\bigcap_{i=1}^n (y_i,\Delta_i)^2_1 = \emptyset$ if and only if
	\[
	\min_{i \in[n]} \{(y_i)_1 + (y_i)_2 + \Delta_i\} < \min_{i \in[n]} \{(y_i)_1 + (y_i)_2 - \Delta_i \},
	\]
	or
	\[
	\min_{i \in[n]} \{(y_i)_1 - (y_i)_2 + \Delta_i\} < \min_{i \in[n]} \{(y_i)_1 - (y_i)_2 - \Delta_i\}.
	\]
	
	Let $k$ be the index that minimizes $(y_i)_1 + (y_i)_2 + \Delta_i$ and let $l$ be the index that minimizes $(y_i)_1 + (y_i)_2 - \Delta_i$. If 
	\[
	(y_k)_1 + (y_k)_2 + \Delta_k < (y_l)_1 + (y_l)_2 - \Delta_l,
	\]
	then we must have,
	\[
	\Delta_k + \Delta_l < (y_l)_1  - (y_k)_1 + (y_l)_2 - (y_k)_2  \leq \abs{(y_l)_1  - (y_k)_1} + \abs{(y_l)_2 - (y_k)_2},
	\]
	which contradicts our assumption that $(y_k,\Delta_k)^2_1 \cap (y_l,\Delta_l)^2_1 \neq \emptyset$. This follows identically for the second inequality, so therefore neither of them can hold and the intersection must be non-empty.
\end{proof}

We now remark that Theorem \ref{thm:main_higher_dim} cannot be extended to higher dimensions or to $\ell_p$ norms.

\begin{remark}
	For extending to dimensions greater than two, the proof breaks down at Lemma~\ref{lem:pairwise_implies_all_intersection}. Intuitively, we can still interpret each $\ell_1$ ball as a set of linear inequalities, however it no longer has the critical property that removing one of the constraints creates an unbounded polytope.
	More specifically, consider the following counter-example for 3 dimensions:
	
	Let $A = \{(1,1,-1),(1,-1,1),(-1,1,1),(1,-1,-1),(-1,1,-1),(-1,-1,1)\}$ and set $\Delta = 3$. It is not difficult to see that taking the intersection of $\Delta$-radius $\ell_1$ balls around each point in $A$ will only contain the origin $(0,0,0)$. We then consider adding the point $(3/2,3/2,3/2)$, and it is straightforward to verify that the $\ell_1$ distance between this point and any point in $A$ is at most $11/2 < 2\Delta$. However, the origin is not within the $\ell_1$ ball around $(3/2,3/2,3/2)$. Therefore, if we consider the set of $\ell_1$ balls of radius $\Delta$ around the points in $A \cup (3/2,3/2,3/2)$, then each pair of $\ell_1$ balls will intersect, but the full intersection  will be empty, giving our counter-example.
	
	
\end{remark}

\begin{remark}
	Even in 2-dimensions, we cannot have Lemma~\ref{lem:pairwise_implies_all_intersection} for the $\ell_p$ ball with $p \in (1,\infty)$ due to the curvature of each ball. For instance, consider the $\ell_2$ ball with radius $1$ for the points $(-1,0),(1,0),(0,\sqrt{3})$. Each of pair of these points is exactly distance $2$ apart in the $\ell_2$ metric, so their $\ell_2$ balls of radius 1 each pairwise intersect. However it is easy to see that the intersection of all three is empty.
	
	We can similarly extend this counter-example to other $\ell_p$ balls using the fact that there must be some curvature of the $\ell_p$ ball, and the midpoint between any two points in the $\ell_p$ metric is unique if $p \in (1,\infty)$.

\end{remark}

With these lemmas, we are now able to show that our 2-dimensional \sensitivity~must always be defined.

\begin{lemma}\label{lem:g_is_defined_in_2D}
	Given any $f: \mathcal{D} \rightarrow \R^2$ with sensitivity parameters $\{\Delta_i\}$, let $g: \mathcal{D} \rightarrow \R^2$ be the \sensitivity~with parameters $\{\Delta_i\}$. Then for any $D \in \mathcal{D}$,
	\[
	\bigcap_{x_i \in D} (g(D - x_i),\Delta_i)^2_1 \neq \emptyset.
	\]
\end{lemma}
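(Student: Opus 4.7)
The plan is to prove this by induction on the size of the database $D$, using the two preceding lemmas as the workhorses. The base cases are immediate: for $D = \emptyset$ the intersection is vacuously $\mathbb{R}^2$, and for a single-entry database $D = \{x_i\}$ the intersection is simply $(g(\emptyset),\Delta_i)^2_1 = (f(\emptyset),\Delta_i)^2_1$, which is non-empty since $\Delta_i \geq 0$.

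For the inductive step, I would fix a database $D$ with at least two entries and assume the statement holds for all proper subsets of $D$. This inductive hypothesis ensures that $g(D')$ is well-defined for every $D' \subset D$, since whenever the intersection is non-empty the construction picks a unique closest point to $f(D')$. With $g$ well-defined on all subsets of $D$, I can apply Lemma \ref{lem:pairwise_intersect_highD} to conclude that for every pair $x_i, x_j \in D$,
\[
(g(D - x_i),\Delta_i)^2_1 \cap (g(D - x_j),\Delta_j)^2_1 \neq \emptyset.
\]

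Having established pairwise intersection, I would then invoke Lemma \ref{lem:pairwise_implies_all_intersection} with the points $y_i = g(D - x_i)$ and radii $\Delta_i$ to conclude that the full intersection is non-empty:
\[
\bigcap_{x_i \in D} (g(D - x_i),\Delta_i)^2_1 \neq \emptyset.
\]
This closes the induction and yields the lemma.

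I do not expect any serious obstacle here since the two preceding lemmas carry essentially all of the work: Lemma \ref{lem:pairwise_intersect_highD} handles the metric structure of the data universe (via triangle inequality on neighboring-of-neighbors subsets), and Lemma \ref{lem:pairwise_implies_all_intersection} handles the geometric fact that $\ell_1$ balls in $\mathbb{R}^2$ enjoy a Helly-type property. The only subtle point is making sure the induction is set up so that ``$g(D')$ is defined for all $D' \subset D$'' is part of the inductive hypothesis, which is why I state the base case and inductive step explicitly rather than just quoting the two lemmas.
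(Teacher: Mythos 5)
Your proof matches the paper's argument exactly: induction on $|D|$, using the inductive hypothesis to guarantee $g$ is defined on all proper subsets, then Lemma~\ref{lem:pairwise_intersect_highD} for pairwise intersection and Lemma~\ref{lem:pairwise_implies_all_intersection} for the Helly-type conclusion. Your version is slightly more explicit about the base cases and the role of well-definedness, but the substance is identical.
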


\begin{proof}
	We will prove this fact inductively, and note that it is immediately true when $D$ only has one entry.
	
	We then consider an arbitrary database $D$ and assume that it is true for all $D' \subset D$. With this inductive claim we can apply Lemma~\ref{lem:pairwise_intersect_highD} to get that all of the $\ell_1$ balls have non-empty pairwise intersection. Our desired result then immediately follows from applying Lemma~\ref{lem:pairwise_implies_all_intersection}.
\end{proof}


\subsection{Error bounds for the 2-dimensional extension}\label{subsec:2D_error}

The following lemma gives the desired error bounds on the 2-dimensional \sensitivity.

\begin{lemma}\label{lem:2D_error_bounds}
	Given any $f: \mathcal{D} \rightarrow \R^2$ and desired sensitivity parameters $\{\Delta_i\}$, let $g: \mathcal{D} \rightarrow \R^2$ be the \sensitivity~with parameters $\{\Delta_i\}$. Then for any $D \in \mathcal{D}$,
	\[ 
	\norm{f(D) - g(D)}_1 \leq \max_{\sigma \in \sigma_D}\sum_{i=1}^{|D|} \max \{ \norm{f(D_{\sigma(<i)} + x_{\sigma(i)}) - f(D_{\sigma(<i)})}_1 - \newSens_{\sigma(i)}, 0 \},
	\]
	where $\sigma_D$ is the set of all permutations of the set $[n]$, and let $D_{\sigma(<i)} = (x_{\sigma(1)},...,x_{\sigma(i-1)})$ be the subset of $D$ that includes all individual data in the permutation before the $i$th entry.	
\end{lemma}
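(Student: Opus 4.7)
The plan is to mirror the inductive proof of Lemma~\ref{lem:1D_error_bounds} with absolute values replaced by $\ell_1$ norms. I would induct on $|D|$, with base case $D=\emptyset$ being immediate from $g(\emptyset)=f(\emptyset)$.

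The inductive step reduces, exactly as in 1D, to establishing the structural inequality
\[
\|f(D) - g(D)\|_1 \;\leq\; \max_{x_i \in D} \max\bigl\{\|f(D) - g(D - x_i)\|_1 - \Delta_i,\; 0\bigr\}.
\]
Granted this, the remainder of the proof is a direct transcription of Lemma~\ref{lem:1D_error_bounds}: apply the triangle inequality $\|f(D) - g(D-x_i)\|_1 \leq \|f(D) - f(D-x_i)\|_1 + \|f(D-x_i) - g(D-x_i)\|_1$, invoke the inductive hypothesis on the second summand, and observe that for each maximizing $x_i$ the inductive bound is a sum over a permutation of $D - x_i$, which extends to a permutation of $D$ with $x_i$ as its last entry; taking the maximum over $i$ then bounds this by a maximum over all of $\sigma_D$.

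To prove the structural inequality, I would case-split on whether $f(D) \in \bigcap_{x_i \in D}(g(D - x_i),\Delta_i)_1^2$. If it does, then by definition $g(D) = f(D)$ and both sides are $0$. Otherwise $g(D)$ lies on the boundary of the intersection, hence on the boundary of at least one constraint ball $(g(D - x_j), \Delta_j)_1^2$, so $\|g(D) - g(D-x_j)\|_1 = \Delta_j$. It suffices to exhibit one such $j$ for which $\|f(D) - g(D)\|_1 \leq \|f(D) - g(D-x_j)\|_1 - \Delta_j$. Here I would exploit the planar geometry already used in Lemma~\ref{lem:pairwise_implies_all_intersection}: an $\ell_1$-ball in $\R^2$ is a rotated square with the four fixed outward normals $(\pm 1,\pm 1)$, so the intersection is a convex polygon whose edges are supported by these same four normal directions. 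The $\ell_2$-closest point $g(D)$ in that polygon to the external point $f(D)$ lies on an edge whose outward normal agrees in sign-pattern with $f(D) - c_j$, where $c_j = g(D-x_j)$ is the center of the supporting ball. Along any such edge the $\ell_1$ distance from $f(D)$ reduces to the affine expression $\|f(D) - c_j\|_1 - \Delta_j$ (because the coordinate-wise signs of $f(D) - p$ are constant along the edge), which is independent of the chosen $p$; choosing $j$ to index that supporting ball yields the desired inequality.

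The main obstacle is making this last geometric claim fully rigorous, particularly the corner case where the $\ell_2$-projection of $f(D)$ onto the polygon lands at a vertex where edges from several different balls meet. A short case split on the signs of the coordinates of $f(D) - c_j$ shows that at least one incident ball $B_j$ has both of the two adjacent faces meeting at $g(D)$ lying in the outward cone of $f(D)$ relative to $c_j$; the constant-$\ell_1$-distance identity then applies to that ball, and the proof concludes exactly as in the 1D case. All other ingredients are the usual triangle inequality and the inductive hypothesis, so once the geometric identity is verified in every case the lemma follows.
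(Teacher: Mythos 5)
Your overall plan is right, and you have correctly diagnosed something the paper glosses over. The paper's ``proof'' of this lemma is a single sentence claiming the $1$D argument transfers ``by replacing absolute value with the $1$-norm.'' That does not literally work: the $1$D proof uses the interval ordering (if $f(D) > g(D)$, then $g(D) = g(D - x_j) + \Delta_j > g(D - x_j)$, so the absolute values unfold into a telescoping identity), and the triangle inequality in $\ell_1$ gives the structural inequality only in the wrong direction ($\|f(D) - g(D-x_j)\|_1 - \Delta_j \leq \|f(D) - g(D)\|_1$), so equality really must be extracted from the geometry of the $\ell_1$ balls. You are right that this is where the content lies, and your projection argument is the correct mechanism: $g(D)$ lies on a face of some $B_j = (g(D-x_j), \Delta_j)_1^2$ with outward normal $(s_1, s_2) \in \{\pm 1\}^2$, the face condition gives $s_k\bigl(g(D)_k - (c_j)_k\bigr) \geq 0$, and the Euclidean normal cone condition forces $s_k\bigl(f(D)_k - g(D)_k\bigr) \geq 0$; stacking these coordinate-wise makes the $\ell_1$ triangle inequality tight, yielding $\|f(D) - c_j\|_1 = \|f(D) - g(D)\|_1 + \Delta_j$. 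Your vertex case-split is also sound in spirit: the two normals at a vertex of the intersection polygon differ in exactly one sign, and the sign of that coordinate of $f(D) - g(D)$ selects which of the two incident balls to use.

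One claim in the write-up is wrong as stated: ``the coordinate-wise signs of $f(D) - p$ are constant along the edge,'' so $\|f(D) - p\|_1$ reduces to a constant affine expression along the whole edge. That fails in general. Take $c_j = 0$, $\Delta_j = 1$, and $f(D) = (0.5, 1.5)$; along the edge from $(1,0)$ to $(0,1)$ the first coordinate of $f(D) - p$ changes sign and $\|f(D) - p\|_1$ varies from $2$ down to $1$. The identity $\|f(D) - g(D)\|_1 = \|f(D) - c_j\|_1 - \Delta_j$ holds specifically at the projection point $g(D)$, and the ingredient that makes it hold there is not constancy along the face but the normal cone condition $f(D) - g(D) \in \mathrm{cone}\bigl(\{s\}\bigr)$, which pins down the signs of $f(D)-g(D)$. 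With that one correction, your argument is a complete (indeed more honest) version of what the paper leaves implicit, and the remainder --- triangle inequality on $\|f(D) - g(D-x_i)\|_1$, induction, and extension to a permutation of $D$ --- carries over verbatim from Lemma~\ref{lem:1D_error_bounds}.
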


	The proof of Lemma \ref{lem:2D_error_bounds} follows identically to the proof of Lemma~\ref{lem:1D_error_bounds} where by replacing any instance of absolute value with the 1-norm.

We are finally ready to complete the proof of our main theorem for two dimensions.

\begin{proof}[Proof of Theorem~\ref{thm:main_higher_dim}]
	The individual sensitivity guarantees follow from the construction of $g$ and Lemma~\ref{lem:g_is_defined_in_2D}. The error bounds are given by Lemma~\ref{lem:2D_error_bounds}. It then remains to prove the running time. For each subset of $D$ we need to query $f$ which takes $T(n)$ time by assumption.
	We note that within the proof of Lemma~\ref{lem:pairwise_implies_all_intersection} we gave a construction for obtaining the intersection of $n$ different $\ell_1$ balls which could clearly be done in $O(n)$ time. Finding the closest point to $f(D)$ then takes $O(1)$ time for the polytope defined by four inequalities. Therefore, the running time is $T(n) + O(n)$ for each of the $2^n$ subsets, which implies the desired running time. 
\end{proof}



\section{Future Directions}\label{s.future}


We are especially interested in efficiently implementing our framework for more complicated and, in particular, higher-dimensional functions such as linear regression. 
We believe that leveraging the simple recursive construction of our algorithm along with non-trivial structural properties of these more difficult functions can allow for efficient and accurate implementation. 
We are particularly optimistic because all of our proofs in this work were from first principles, suggesting that we may be able to obtain further results from this framework by using more sophisticated tools.

While our construction did not generalize to any dimension under the $\ell_1$ sensitivity metric, we note that this was in the most general setting.
If the class of functions we consider is significantly restricted, then we believe the natural extension could both work and be efficiently implementable.
Furthermore, we have not yet investigated variants of our algorithm that might work better under stronger assumptions or combining our construction with other frameworks for handling worst-case sensitivity.

We also believe that our construction opens up several intriguing directions with respect to personalized differential privacy and its application in markets for privacy. 
Our construction allows for tailoring individual sensitivity, but this presents a natural trade-off between choosing small individual sensitivity parameters and the error incurred by our preprocessing step.
For specific functions, this may yield interesting optimization problems that can also be considered in the context of markets for privacy.

\section*{Acknowledgements}

We thank Richard Peng, Aaron Roth, and Jamie Morgenstern for useful feedback and discussions.  We also thank the participants of the Banff International Research Station Workshop on Mathematical Foundations of Data Privacy---especially Adam Smith---for comments and discussion on an earlier draft of this paper.

\bibliographystyle{alpha}
\bibliography{ref}

\appendix


\section{Omitted Proofs}

In this appendix we provide proofs that were omitted from Sections~\ref{sec:efficient_examples} and~\ref{sec:variance_algo}.

\subsection{Proof of Lemma~\ref{cor:bound_on_sum_of_absolute_values}}


\begin{proof}[Proof of Lemma~\ref{cor:bound_on_sum_of_absolute_values}]
	We start by decomposing the RHS:
	\[
	\frac{1}{n} \sum_{i=k}^n \abs{x_i - \mu_D} = \frac{1}{n} \sum_{i=k}^n \abs{x_i - \frac{x_1 + \cdots + x_n}{n}} = \frac{1}{n^2} \sum_{i=k}^n \abs{\sum_{j=1}^n (x_i - x_j)}.
	\]
	It now suffices to show,
	\[
	\sum_{i=k}^n \sum_{j=1}^i \frac{1}{3} \abs{x_i - x_j} \leq 
	\sum_{i=k}^n \abs{\sum_{j=1}^n (x_i - x_j)}.
	\]
	
	We further examine the RHS and use our assumption that $x_1 \leq \cdots \leq x_n$, which implies that for each $i$,
	\[
	\abs{\sum_{j=1}^n (x_i - x_j)} = \max \{ \sum_{j=1}^i \abs{x_i - x_j} - \sum_{j=i}^n \abs{x_i - x_j}, \sum_{j=i}^n \abs{x_i - x_j} - \sum_{j=1}^i \abs{x_i - x_j}\}.
	\]
	
	The idea will then be that because we have an ordering on $x_1,...,x_n$, there will be a transition index. In particular, there is some $l \in [n-1]$ such that 
	\[
	\abs{\sum_{j=1}^n (x_i - x_j)} = \sum_{j=1}^i \abs{x_i - x_j} - \sum_{j=i}^n \abs{x_i - x_j}
	\]
	for all $i > l$, and
	\[
	\abs{\sum_{j=1}^n (x_i - x_j)} = \sum_{j=i}^n \abs{x_i - x_j} - \sum_{j=1}^i \abs{x_i - x_j} 
	\] 
	for all $i \leq l$.	If we then have $l \leq k$, then,
	\[
	\sum_{i=k}^n \abs{\sum_{j=1}^n (x_i - x_j)} = \sum_{i=k}^n \left(  \sum_{j=1}^i \abs{x_i - x_j} - \sum_{j=i}^n \abs{x_i - x_j}  \right).
	\] 
By cancellation, we get,
	\[
	\sum_{i=k}^n \abs{\sum_{j=1}^n (x_i - x_j)} = 
	\sum_{i=k}^n\sum_{j=1}^k \abs{x_i - x_j}.
	\]
	
	Applying Fact~\ref{fact:helper_for_fact_about_absolute_value_sum} gives,
	\[
	2 \sum_{i=k}^n\sum_{j=1}^k \abs{x_i - x_j} \geq \sum_{i=k}^n\sum_{j=1}^n \abs{x_i - x_j}
	\geq \sum_{i=k}^n\sum_{j=1}^i \abs{x_i - x_j},
	\]
	as desired. If $l > k$, then,
	\begin{multline*}
	\sum_{i=k}^n \abs{\sum_{j=1}^n (x_i - x_j)} =
	\\
	\sum_{i=k}^l \left( \sum_{j=i}^n \abs{x_i - x_j} - \sum_{j=1}^i \abs{x_i - x_j} \right) + \sum_{i=l+1}^n \left( \sum_{j=1}^i \abs{x_i - x_j} - \sum_{j=i}^n \abs{x_i - x_j} \right).
	\end{multline*}
We will simply lower bound the first term in the sum by 0, and note that Fact~\ref{fact:helper_for_fact_about_absolute_value_sum} (stated below) implies,
	\[
	\sum_{i=k}^l\sum_{j=l}^n \abs{x_i - x_j} \geq \sum_{i=k}^l \sum_{j=1}^i \abs{x_i - x_j}.
	\]
	
Furthermore, by cancellation, we get,	
	\[ 
	\sum_{i=k}^n \abs{\sum_{j=1}^n (x_i - x_j)} \geq \sum_{i=l+1}^n\sum_{j=1}^{l+1} \abs{x_i - x_j}.
	\]
	We then use the fact that,
	\[
	\sum_{i=l+1}^n\sum_{j=1}^{l+1} \abs{x_i - x_j} \geq \sum_{i=k}^l \sum_{j=1}^i \abs{x_i - x_j},
	\]
	and 
	\[
	\sum_{i=l+1}^n\sum_{j=1}^{l+1} \abs{x_i - x_j} \geq \sum_{i=l+1}^n \sum_{j=l}^i \abs{x_i - x+j} ,
	\]
	from Fact~\ref{fact:helper_for_fact_about_absolute_value_sum}, to obtain:
	\[
	3 \sum_{i=l+1}^n\sum_{j=1}^{l+1} \abs{x_i - x_j} \geq \sum_{i=k}^n \sum_{j=1}^i \abs{x_i - x_j},
	\]
	as desired.	
\end{proof}

\begin{fact}\label{fact:helper_for_fact_about_absolute_value_sum}
	For any ordered values $x_1 \leq \cdots \leq x_n$, and any $k \in [n-1]$ such that,
	\[
	\sum_{j=1}^k \abs{x_k - x_j} \leq \sum_{j= k +1}^n \abs{x_k - x_j},
	\]
	then for any $i \leq k$, we must have,
	\[
	\sum_{j=1}^k \abs{x_i - x_j} \leq \sum_{j= k +1}^n \abs{x_i - x_j}.
	\]
\end{fact}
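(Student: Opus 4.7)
The plan is to proceed by downward induction on $i$, starting from the base case $i = k$ (which is exactly the hypothesis) and walking toward smaller indices. Define the shorthand
\[
g(i) \defeq \sum_{j=1}^k \abs{x_i - x_j} - \sum_{j=k+1}^n \abs{x_i - x_j},
\]
so the hypothesis reads $g(k) \le 0$ and the goal is $g(i) \le 0$ for every $i \le k$. The key preliminary step is an explicit computation of the telescoping difference $g(i+1) - g(i)$ by tracking how each summand changes when $x_i$ is replaced by $x_{i+1}$. Using the sorted order, the $i$ summands on the left with index $j \le i$ each change by $+(x_{i+1}-x_i)$, the $k-i$ summands on the left with index $j > i$ each change by $-(x_{i+1}-x_i)$, and the $n-k$ summands on the right also each change by $-(x_{i+1}-x_i)$, entering $g$ as $+(x_{i+1}-x_i)$ per term because of the leading minus sign in the definition. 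Collecting gives
\[
g(i+1) - g(i) = \bigl(n - 2(k - i)\bigr)(x_{i+1} - x_i).
\]

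For indices in the range $k - n/2 \le i \le k$, this coefficient is non-negative, so $g$ is non-decreasing there and $g(i) \le g(k) \le 0$ follows immediately by telescoping. In particular, this settles the Fact in the entire regime $n \ge 2k$, where the above range already covers all $i \in [1,k]$.

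The main obstacle is the complementary range $i < k - n/2$, which is non-empty only when $n < 2k$. In this regime the telescoping coefficient becomes negative, so $g$ can in principle grow as $i$ decreases, and monotonicity alone is insufficient. To close this gap I would extract a stronger scalar form of the hypothesis: a short manipulation shows that $g(k) \le 0$ is equivalent to $x_k \le \bar{x}$, where $\bar x$ is the mean of $x_1,\ldots,x_n$. Combining this with the ordering $x_i \le x_k \le \bar x$ for $i \le k$, the plan is to decompose $g(i)$ as $g(k)$ plus a residual whose positive contributions from the negative-coefficient portion of the telescoping are controlled by exploiting the sorted structure of the data. The technically delicate step—and the part I expect to be the main obstacle—is making this absorption tight enough: the naive triangle-inequality bound on the residual degrades precisely when $n < 2k$, so a more careful regrouping of the telescoping contributions, rather than a term-by-term estimate, appears to be needed.
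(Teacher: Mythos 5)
Your telescoping computation $g(i+1)-g(i)=\bigl(n-2(k-i)\bigr)(x_{i+1}-x_i)$ is correct, the resulting monotonicity validly settles every index $i\ge k-n/2$, and your observation that the hypothesis $g(k)\le 0$ is equivalent to $x_k\le\bar x$ is also correct. However, the obstacle you flag for $i<k-n/2$ (equivalently $n<2k$) is not merely delicate---it is fatal, because the Fact as stated is false in that regime. Take $n=6$, $k=5$, $x=(0,1,1,1,1,3)$. The hypothesis holds: $\sum_{j=1}^{5}\abs{x_5-x_j}=1\le 2=\abs{x_5-x_6}$. Yet for $i=1$ the conclusion fails: $\sum_{j=1}^{5}\abs{x_1-x_j}=4>3=\abs{x_1-x_6}$. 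Here $k-n/2=2>1=i$, exactly the regime your sign computation flags as dangerous.

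You are therefore not missing a clever regrouping; there is nothing there to find. Incidentally, the paper's own one-line proof of this Fact is also broken: it asserts $\sum_{j=1}^{k}\abs{x_i-x_j}\le\sum_{j=1}^{k}\abs{x_k-x_j}$ as a consequence of the sorted order alone, but in the example above the left side is $4$ while the right side is $1$. (The second inequality in the paper's chain, $\sum_{j=k+1}^{n}\abs{x_k-x_j}\le\sum_{j=k+1}^{n}\abs{x_i-x_j}$, is fine; the first is not.) The statement needs to be restricted---for instance to $k\le n/2$, which your telescoping argument already covers---before either approach can succeed, and the downstream use in Lemma~\ref{cor:bound_on_sum_of_absolute_values} should be re-examined accordingly.
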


\begin{proof}
	This follows from the fact that $x_1 \leq \cdots \leq x_k$, so $\sum_{j=1}^k \abs{x_i - x_j} \leq \sum_{j=1}^k \abs{x_k - x_j}$ and $\sum_{j= k +1}^n \abs{x_k - x_j} \leq \sum_{j= k +1}^n \abs{x_i - x_j}$.
\end{proof}

\subsection{Omitted proofs from Section~\ref{sec:variance_algo}}

In this section we prove some important facts about variance that were necessary for obtaining an efficient algorithm for variance.
We first show the intuitive fact that if we want to decrease the variance most, we should remove the maximum or minimum value.

\begin{proof}[Proof of Fact~\ref{fact:variance_minimized_with_min_max_tossed}]
	It suffices to show that $\var{D - x_1} \leq \var{D - x_i}$ if $x_i \leq \mu(D)$ and that $\var{D - x_n} \leq \var{D - x_i}$ if $x_i \geq \mu(D)$. We will show the first, and the second follows equivalently.
	
	We again use the definition of variance stated as,
	\[
	\var{x_1,...,x_n} = \frac{1}{n^2}\sum_{i=1}^n\sum_{j>i}^n\left(x_i-x_j\right)^2,
	\]
	and by cancellation we see that showing $\var{D - x_1} \leq \var{D - x_i}$ is equivalent to,
	\[
	\frac{1}{n^2} \sum_{j \neq 1} (x_i - x_j)^2 \leq \frac{1}{n^2} \sum_{j \neq i} (x_1 - x_j)^2,
	\]
	which is also equivalent to showing,
	\[
	\sum_{j \neq 1,i} (x_i - x_j)^2 \leq \sum_{j \neq 1,i} (x_1 - x_j)^2.
	\]
The proof then follows from the fact that this is a sum of least squares minimization for the vector $x_2,...,x_{i-1},x_{i+1},...,x_n$, where we know that $x_1 \leq x_i$ and $\mu({D - x_1 - x_i}) \geq \mu(D)$ because $x_1,x_i \leq \mu(D)$.
\end{proof}

We now prove Fact~\ref{fact:bound_marginal_variance} using the simple helper fact that if we add a data point and want to minimize the variance, then the added data point should the mean of the remaining points.

\begin{fact}\label{fact:mean_minimizes_variance}
	Given any set $x_1,...,x_n \in \mathbb{R}$ with mean $\mu$, then
	\[
	\arg\min_y\var{y,x_1,...,x_n} = \mu.
	\]
\end{fact}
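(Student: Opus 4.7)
The plan is to exploit the pairwise-sum formulation of variance already introduced in Section~\ref{sec:variance_algo}, namely
\[
\var{D} = \frac{1}{|D|^2}\sum_{i=1}^{|D|}\sum_{j=1}^{|D|}\tfrac{1}{2}(x_i - x_j)^2.
\]
Applying this to the augmented set $S = \{y,x_1,\ldots,x_n\}$ of size $n+1$ and splitting the double sum into pairs that involve $y$ and pairs that do not, I would write
\[
\var{S} \;=\; \frac{1}{(n+1)^2}\left[\,\sum_{i=1}^n (y - x_i)^2 \;+\; \sum_{i=1}^n\sum_{j=1}^n \tfrac{1}{2}(x_i - x_j)^2\,\right].
\]
The second double sum is independent of $y$, so minimizing $\var{S}$ over $y$ reduces to minimizing the strictly convex one-dimensional function $\varphi(y) = \sum_{i=1}^n (y-x_i)^2$.

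The second step is then the standard least-squares identity: setting $\varphi'(y) = \sum_{i=1}^n 2(y - x_i) = 0$ yields the unique critical point $y = \tfrac{1}{n}\sum_{i=1}^n x_i = \mu$, and since $\varphi''(y) = 2n > 0$ this critical point is the unique global minimizer. Combining this with the decomposition above gives $\arg\min_y \var{y,x_1,\ldots,x_n} = \mu$. There is no real obstacle in the argument; the only mildly non-obvious point is recognizing that the pairwise formulation (rather than the mean-based one) isolates the $y$-dependence cleanly, since the mean-based formulation would force one to differentiate through $\mu_S = (y + n\mu)/(n+1)$ and do a short but unenlightening chain-rule computation to reach the same conclusion.
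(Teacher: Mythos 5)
Your proof is correct and follows essentially the same route as the paper's: both use the pairwise-sum formulation of variance, discard the $y$-independent terms, and reduce the problem to minimizing $\sum_{i=1}^n (y-x_i)^2$, which is minimized at $y=\mu$. You spell out the decomposition and the calculus step a bit more explicitly, but the underlying idea is identical.
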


\begin{proof}
	By definition,
	\[
	\var{x_1,...,x_n} = \frac{1}{n^2}\sum_{i=1}^n\sum_{j>i}^n\left(x_i-x_j\right)^2.
	\]
	The problem we are considering fixes $x_1,...,x_n$ and minimizes the variable $y$, so each term in the summation that does not include $y$ can be ignored, and our minimization problem then reduces to,
	\[
	\arg\min_y\var{y,x_1,...,x_n} = \arg\min_y \sum_{i=1}^n(y-x_i)^2,
	\]
	which is minimized when $y = \mu$.
\end{proof}

We use this fact to lower bound the variance from adding one additional variable, and complete our proof of Fact~\ref{fact:bound_marginal_variance}.

\begin{proof}[Proof of Fact~\ref{fact:bound_marginal_variance}]
	We first upper bound the variance of all variables:
	\[
	\min_y \var{x_1,...,x_{n-1},y} \leq \var{x_1,...,x_n}.
	\]
	Fact~\ref{fact:mean_minimizes_variance} implies that,
	\[
	\min_y \var{x_1,...,x_{n-1},y} = \var{x_1,...,x_{n-1},\mu_{[1:n-1]}},
	\]
	where $\mu_{[1:n-1]}$ is the mean of $x_1,...,x_{n-1}$. We then apply the definition of variance to get,
	\[
	\var{x_1,...,x_{n-1},\mu_{[1:n-1]}} = \frac{1}{n}\sum_{i=1}^{n-1}(x_i - \mu_{[1:n-1]})^2.
	\] 
	
	Together, this implies that,
	\[
	\min_y \var{x_1,...,x_{n-1},y} = \frac{n-1}{n} \var{x_1,...,x_{n-1}},
	\]
	which gives our desired result.
\end{proof}

Finally, we also needed the following fact to reduce our running time to $O(n^2)$ for implementation of variance. 

\begin{proof}[Proof of Fact~\ref{fact:variance_in_constant_time}]
	We utilize the definition of variance as,
	\[
	\var{x_1,...,x_n} = \frac{1}{n^2}\sum_{i=1}^n\sum_{j=1}^n\frac{1}{2}\left(x_i-x_j\right)^2.
	\]
After cancellation from the scalars we have,
	\begin{multline*}
	\left(\frac{n-1}{n}\right)^2\var{D - x_a} + \left(\frac{n-1}{n}\right)^2\var{D - x_b} - \left(\frac{n-2}{n}\right)^2\var{D - x_a - x_b}
	+ \frac{1}{n^2}(x_a - x_b)^2 \\
	= \frac{1}{n^2}\sum_{i\neq a}\sum_{j\neq a}\frac{1}{2}\left(x_i-x_j\right)^2
	+ \frac{1}{n^2}\sum_{i\neq b}\sum_{j\neq b}\frac{1}{2}\left(x_i-x_j\right)^2
	- \frac{1}{n^2}\sum_{i\neq a,b}\sum_{j\neq a,b}\frac{1}{2}\left(x_i-x_j\right)^2
	+ \frac{1}{n^2}(x_a - x_b)^2.
	\end{multline*}
	
	Separating out within the summation gives,
	\[
	\frac{1}{n^2}\sum_{i\neq a,b}\sum_{j\neq a,b}\frac{1}{2}\left(x_i-x_j\right)^2 + \frac{1}{n^2}\sum_{i\neq a}\left(x_b-x_i\right)^2
	+ \frac{1}{n^2}\sum_{i\neq b}\left(x_a-x_i\right)^2
	+ \frac{1}{n^2}(x_a - x_b)^2,
	\]
	which is equivalent to $\var{D}$ as desired.	
\end{proof}
	
\end{document}